\documentclass[11pt]{article}
\pdfoutput=1

\usepackage{ltexpprt}
\usepackage{hyperref}
\usepackage[T1]{fontenc}
\usepackage{euscript}
\usepackage{graphicx}
\usepackage[utf8]{inputenc}
\usepackage{wrapfig}
\usepackage{xspace}
\usepackage{float}
\usepackage{marginnote}
\usepackage{enumerate}
\usepackage{enumitem}
\usepackage{epstopdf}
\usepackage[dvipsnames]{xcolor}
\usepackage[mathlines]{lineno}
\usepackage{algorithm}
\usepackage{algorithmic}
\usepackage{amsmath}
\usepackage[margin=1.25in]{geometry}
\usepackage{mathpazo}
\usepackage{calrsfs}
\usepackage{cite}

\usepackage{amsmath,amsfonts,amssymb, bm}
\usepackage{thm-restate}
\usepackage{yfonts}

\DeclareMathAlphabet{\mathpgoth}{OT1}{pgoth}{m}{n}

\def\mcA{{\mathcal{A}}}

\def\mcG{{\mathcal{G}}}

\def\mcM{{\mathcal{M}}}

\def\mcV{{\mathcal{V}}}


\def\mbG{{\mathbb{G}}}

\def\mbR{{\mathbb{R}}}

\def\mbW{{\mathbb{W}}}
\def\mbX{{\mathbb{X}}}

\def\mbZ{{\mathbb{Z}}}

\newcommand{\ignore}[1]{}

\newcommand{\eps}{\varepsilon}
\newcommand{\real}{\mathbb{R}}

\newcommand{\expect}[1]{\mathbb{E}\left[#1\right]}
\newcommand{\prob}[1]{\mathrm{Pr}\left[#1\right]}

\newcommand{\eucl}[2]{\ensuremath{\| #1 - #2 \|}}

\newcommand{\plancost}{\text{\textcent}}
\newcommand{\opt}{w^*}

\newcommand{\subcell}{\xi}
\newcommand{\cell}{\Box}
\newcommand{\subcelldiam}[1]{\delta_{#1}}

\newcommand{\subcells}[1]{\mathsf{S}[#1]}

\newcommand{\children}[1]{\mathsf{C}[#1]}

\newcommand{\grid}{\mathbb{G}}
\newcommand{\level}[1]{\mathrm{lev}(#1)}

\newcommand{\tree}{\mathcal{T}}
\newcommand{\cellsoflevel}[1]{\mathcal{L}[#1]}

\newcommand{\centerof}[1]{c_{#1}}
\newcommand{\sidelength}[1]{\ell_{#1}}

\newcommand{\spanner}{\mathcal{G}}
\newcommand{\spannercell}[1]{\mathcal{S}_{#1}}
\newcommand{\spannercellp}[1]{\mathcal{S}'_{#1}}
\newcommand{\lengthof}[1]{\phi(#1)}

\newcommand{\pathof}[1]{P_{#1}}
\newcommand{\pathofcell}[2]{P_{#1}^{#2}}

\newcommand{\localinstance}{\mathcal{I}}

\newcommand{\slack}{s}

\newcommand{\edgecost}[1]{|#1|}

\newcommand{\greedy}{\textsc{Greedy}}

\newcommand{\oracle}{\textsc{Oracle}}

\newcommand{\neighborhood}[2]{\mathcal{N}_{#1} (#2)}

\newcommand{\discrete}[1]{\hat{#1}}

\newcommand{\map}[2]{\mcM_{#1}(#2)}

\newcommand{\partition}{\mcG}
\newcommand{\rep}[1]{r_{#1}}

\newcommand{\floor}[1]{\left\lfloor #1\right\rfloor}
\newcommand{\ceil}[1]{\left\lceil #1\right\rceil}

\newcommand{\ball}[2]{D(#1, #2)}

\newcommand{\pdot}{\textsc{PD-OT}}
\newcommand{\vor}{\mathrm{Vor}}
\newcommand{\vd}{\mathrm{VD}}
\newcommand{\disc}[1]{\mathpgoth{A}_{#1}}
\newcommand{\distance}{\mathrm{d}}

\begin{document}

\newcommand*\samethanks[1][\value{footnote}]{\footnotemark[#1]}

\title{\Large Fast and Accurate Approximations of the Optimal Transport in Semi-Discrete and Discrete Settings}
\author{Pankaj K. Agarwal\thanks{Department of Computer Science, Duke University.}
\and Sharath Raghvendra\thanks{Department of Computer Science, North Carolina State University.}
\and Pouyan Shirzadian\thanks{Department of Computer Science, Virginia Tech.}
\and Keegan Yao\samethanks[1]}

\date{}

\maketitle

\begin{abstract} \small\baselineskip=9pt Given a $d$-dimensional continuous (resp. discrete) probability distribution $\mu$ and a discrete distribution $\nu$, the semi-discrete (resp. discrete) Optimal Transport (OT) problem asks for computing a minimum-cost plan to transport mass from $\mu$ to $\nu$; we assume $n$ to be the number of points in the support of the discrete distributions. 
In this paper, we present three approximation algorithms for the OT problem with strong theoretical guarantees.

\begin{enumerate}
    \item[(i)] {\it Additive approximation for semi-discrete OT:} \ignore{Our first algorithm takes a parameter $\varepsilon >0$ and}For any parameter $\varepsilon>0$, we present an algorithm that computes a semi-discrete transport plan $\tau$ with cost $\plancost(\tau) \le \plancost(\tau^*)+\varepsilon$ in  $n^{O(d)}\log\frac{\Delta}{\varepsilon}$ time; here, $\tau^*$ is the optimal transport plan, $\Delta$ is the diameter of the supports of $\mu$ and $\nu$, and we assume we have access to an oracle that outputs the mass of $\mu$ inside a constant-complexity region in $O(1)$ time. Our algorithm works for several ground distances including the $L_p$-norm and the squared-Euclidean distance.
    \item[(ii)] {\it Relative approximation for semi-discrete OT:} For any parameter $\varepsilon >0$, \ignore{assuming we have access to an oracle that outputs the mass of $\mu$ inside an orthogonal box in $O(1)$ time, }we present an algorithm that computes a semi-discrete transport plan $\tau$ with cost $\plancost(\tau)\le (1+\eps)\plancost(\tau^*)$\ignore{ under any $L_p$ norm} in $n\varepsilon^{-O(d)}\log (n)\log^{O(d)}(\log n)$ time; here, $\tau^*$ is the optimal transport plan, and we assume we have access to an oracle that outputs the mass of $\mu$ inside an orthogonal box in $O(1)$ time, and the ground distance is any $L_p$ norm.
    \item[(iii)] {\it Relative approximation for discrete OT:} For any parameter $\varepsilon > 0$, we present a Monte-Carlo algorithm that computes a transport plan $\sigma$ with an expected cost $\plancost(\sigma)\le (1+\varepsilon)\plancost(\sigma^*)$ under any $L_p$ norm in $n \varepsilon^{-O(d)} \log (n) \log^{O(d)} (\log n)$ time\ignore{ assuming that the spread of $A\cup B$ is polynomially bounded}; here, $\sigma^*$ is an optimal discrete transport plan and we assume that the spread of the supports of $\mu$ and $\nu$ is polynomially bounded.
\end{enumerate}

\end{abstract}

\section{Introduction}
\label{sec:introduction}

Optimal transport (OT) is a powerful tool for comparing probability distributions and computing maps
between them. Put simply, the optimal transport problem deforms one distribution to the other with smallest possible cost. Classically, the OT problem has been extensively studied within the operations research, statistics, and mathematics~\cite{mirebeau2015discretization, oliker1989numerical, villani2009optimal}. In recent years, optimal transport has seen rapid rise in various machine learning and computer vision applications as a meaningful metric between distributions and has been 
extensively used in generative models~\cite{deshpande2018generative, genevay, salimans}, robust learning~\cite{esfahani}, supervised learning~\cite{janati2019wasserstein, luise2018differential}, computer vision applications~\cite{indykicml, gupta2010sparse}, variational inference~\cite{ambrogioni2018wasserstein}, blue noise generation~\cite{de2012blue, qin2017wasserstein}, and parameter estimation~\cite{bernton, generative}. These applications have led to developing efficient algorithms for OT; see the book~\cite{peyre2019computational} for review of computational OT. 

In the \emph{geometric OT} problem, the cost of transporting unit mass between two locations is the Euclidean distance or some $L_p$ norm between them. In this paper, we design simple, efficient approximation algorithms 
for the semi-discrete and discrete geometric OT problems in fixed dimensions.

Let $\mu$ be a continuous probability distribution (i.e., density) defined over a compact bounded support $A\subset \mbR^d$, and let $\nu$ be a discrete distribution, where the support of $\nu$, denoted by $B$, is a set of $n$ points in $\mbR^d$. Let $d(\cdot, \cdot)$ be the ground metric between a pair of points in $\mathbb{R}^d$. A coupling $\tau \colon A \times B \to \mathbb{R}_{\geq 0}$ is called a \emph{transport plan} for $\mu$ and $\nu$ if for all $a \subseteq A$, $\sum_{b \in B} \tau(a, b) = \mu(a)$ (where $\mu(a)$ is the mass of $\mu$ inside $a$) and for all $b \in B$,  $\int_{A} \tau(a, b) \, da = \nu(b)$. 
The cost of the transport plan $\tau$ is given by $\plancost(\tau) := \int_{A} \sum_{b \in B} d(a,b) \tau(a, b) \, da$. The goal is to find a minimum-cost (semi-discrete) transport plan satisfying $\mu$ and $\nu$\footnote{Apparently the semi-discrete OT was introduced by Cullen and Purser \cite{cullen1984extended} without reference to optimal transport.}. For any parameter $\varepsilon>0$, a transport plan $\tau$ between $\mu$ and $\nu$ is called \emph{$\varepsilon$-close} if the cost of $\tau$ is within an additive error of $\varepsilon$ from the cost of the optimal transport plan $\tau^*$, i.e., $\plancost(\tau) \le \plancost(\tau^*) + \varepsilon$.  A \emph{$(1+\eps)$-approximate OT plan}, or simply \emph{$\eps$-OT plan}, is a transport plan $\tau$ with $\plancost(\tau)\le(1+\eps)\plancost(\tau^*)$. 

The problem of computing semi-discrete optimal transport between $\mu$ and $\nu$ reduces to the problem of finding a set of weights $y:B\rightarrow \mbR_{\ge 0}$ so that, for any point $b \in B$, the Voronoi cell of $b$ in the additively weighted Voronoi diagram has a mass equal to $\nu(b)$, i.e., $Vor(b)=\{x\in\mbR^d\mid d(x,b)-y(b)\le d(x,b')-y(b'), \forall b'\in B\}, \; \mu(Vor(b))=\nu(b)$, and the mass of $\mu$ in $Vor(b)$ is transported to $b$; see \cite{aurenhammer1992minkowski}. One can thus define an optimal semi-discrete transport plan by describing the weights of points in $B$. For arbitrary distributions, weights can have large bit (or algebraic) complexity, so our goal will be to compute the weights accurately up to $s=O(\log \varepsilon^{-1})$ bits, which in turn will return an $\varepsilon$-close semi-discrete OT plan.

If $\mu$ is also a discrete distribution with support $A$, a \emph{discrete transport plan} is $\sigma \colon A \times B \to \real_{\geq 0}$ that assigns the mass transported along each edge $(a,b)\in A\times B$ such that $\sum_{b\in B} \sigma(a,b) = \mu(a)$ for each point $a\in A$ and $\sum_{a \in A}\sigma(a,b)=\nu(b)$ for each point $b\in B$. The cost of $\sigma$ is given by $\plancost(\sigma) = \sum_{(a,b) \in A\times B}\sigma(a,b)d(a,b)$.
The \emph{discrete OT problem} asks for a transport plan $\sigma$ with the minimum cost. We refer to such plan as an \emph{OT plan}.

\subsubsection*{Related work.} The discrete optimal transport problem under any metric can be modeled as an uncapacitated minimum-cost flow problem and can be solved in strongly polynomial time of $O((m + n\log n) n\log n)$ time using the algorithm by Orlin~\cite{orlin1988faster}. Using recent techniques~\cite{seshadri2014algorithm}, it can be solved in $n^{2+o(1)}\mathrm{poly}\log(\Delta)$ time, where $\Delta$ depends on the spread of $A\cup B$ and the maximum demand. The special case where all points have the same demand is the widely studied \textit{minimum-cost bipartite matching} problem. There is extensive work on the design of near-linear time approximation for the optimal transport and related matching problems~\cite{as-stoc-14,andoni2008earth, indykicml, fox-lu-transport, khesin2019preconditioning, raghvendra2020near, ra_soda12}.  The near-linear time algorithms by Khesin~\textit{et. al.}~\cite{khesin2019preconditioning} and Fox and Lu~\cite{fox-lu-transport} for computing an $\eps$-OT plan use minimum-cost-flow (MCF) solvers (e.g. \cite{sherman}) as a black box and numerically precondition their minimum-cost flow instance using geometry~\cite{fox-lu-transport, khesin2019preconditioning, sherman2017generalized}. The work of Zuzic~\cite{zuzic2021simple} describes a multiplicative-weights update (MWU) based boosting method for minimum-cost flows using an approximate primal-dual oracle as a black box, which replaces the preconditioner used in~\cite{khesin2019preconditioning, sherman2017generalized}. All these algorithms are Monte Carlo algorithms and have running time of $n(\varepsilon^{-1}\log n)^{O(d)}$. Recently, Agarwal \textit{et. al.}~\cite{agarwal2022deterministic} presented an $n(\eps^{-1}\log n)^{O(d)}$-time deterministic algorithm for computing an $\eps$-approximate bipartite matching in $\real^d$. A Monte-Carlo $\varepsilon$-approximation algorithm for matching with run time $n\log^4 n(\varepsilon^{-1}\log\log n)^{O(d)}$ was presented in~\cite{agarwal2022improved}. Very recently, Fox and Lu proposed a deterministic algorithm for $\varepsilon$-OT with run time of $O(n\varepsilon^{-(d+2)} \log^5 n \log\log n)$ \cite{fox2022deterministic}.

The known algorithms for semi-discrete OT that compute an $\varepsilon$-close transport plan by and large use first and second order numerical solvers \cite{aurenhammer1992minkowski, benamou2000computational, chartrand2009gradient, de2012blue, kitagawa2014iterative, kitagawa2019convergence, levy2018notions, oliker1989numerical}. These algorithms start with an initial set of weights for points in $B$ and iteratively improve the weights until the mass inside the Voronoi cell of any point $b \in B$ is an additive factor $\varepsilon$ away from $\nu(b)$. One can use these solvers to compute an $\varepsilon$-close transport plan by executing $\mathrm{poly}(n,1/\varepsilon)$ iterations. Each iteration requires computation of several weighted Voronoi diagrams which takes $n^{\Omega(d)}$ time. 
One can also draw samples from the continuous distribution and convert the semi-discrete OT problem to a discrete instance~\cite{genevay2016stochastic}; however, due to sampling errors, this approach provides an additive approximation. 
Van Kreveld \textit{et. al.}~\cite{van2021approximating} presented a $(1+\varepsilon)$-approximation OT algorithm for the restricted case when the continuous distribution is uniform over a collection of simple geometric objects (e.g. segments, simplices, etc.), by sampling roughly $n^2$ points and then running an algorithm for computing discrete $\varepsilon$-OT mentioned above. Their running time is roughly $n^2 \varepsilon^{-O(d)}\mathrm{poly}\log(n)$.

\subsubsection*{Our contributions.}
We present three new algorithms for the semi-discrete and discrete optimal transport problems. 
Our first result is a cost-scaling algorithm that computes an $\varepsilon$-close transport plan for a semi-discrete instance in $n^{O(d)}\log (\Delta/\varepsilon)$ time, assuming that we have access to an oracle that, given a constant complexity region $\varphi$, returns $\mu(\varphi)$. 

\begin{theorem} \label{theorem:semi-continuous-exact}
Let $\mu$ be a continuous distribution defined on a compact bounded set $A\subset \mbR^d$, $\nu$ a discrete distribution with a support $B\subset \mbR^d$ of size $n$, and $\varepsilon >0$ a parameter. Suppose there exists an $\oracle$ which, given a constant complexity region $\varphi$, returns $\mu(\varphi)$ in $Q$ time. Then, an $\varepsilon$-close semi-discrete OT plan can be computed in $Qn^{O(d)}\log (\frac{\Delta}{\varepsilon})$ time, where $\Delta$ is the diameter of $A\cup B$.  
\end{theorem}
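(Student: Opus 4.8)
The plan is to build on the classical Aurenhammer--Hoffmann--Aronov reduction: computing a semi-discrete OT plan is equivalent to finding weights $y \colon B \to \mbR_{\ge 0}$ such that the $\mu$-mass of each additively weighted Voronoi cell $\vor(b)$ equals $\nu(b)$. Since exact weights may have unbounded bit complexity, I aim only to compute weights accurate to $s = O(\log(\Delta/\varepsilon))$ bits, which, by a standard sensitivity/continuity argument on the mass function, suffices to produce an $\varepsilon$-close plan. The algorithm is a \emph{cost-scaling} (or bit-scaling) procedure: I maintain a scale parameter and, at each of the $O(\log(\Delta/\varepsilon))$ scales, refine the weights by one bit. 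To do this at a single scale I will run a weighted-distance/Hungarian-type primal-dual augmentation on a suitably discretized instance, where the roles of the points in $B$ are ``excesses'' of mass and the continuous side is handled through the $\oracle$.

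The key steps, in order, are: (i) Set up the primal-dual formulation for semi-discrete OT, with dual variables $y(b)$ and the complementary-slackness condition that $\mu$-mass is routed only within Voronoi cells defined by the current $y$. (ii) Show that after rounding $y$ to $s$ bits, the resulting plan (route each point of $\mu$ to its nearest weighted center, breaking ties arbitrarily) has cost within $\varepsilon$ of optimal; this uses the bounded diameter $\Delta$ and Lipschitz-type bounds on how cost and cell mass change with $y$. (iii) Implement one scaling phase: given $y$ correct to $j$ bits, compute the weighted Voronoi diagram $\vd(B,y)$ (which takes $n^{O(d)}$ time in fixed dimension), query the $\oracle$ on each cell (each cell has complexity $n^{O(d)}$ but can be decomposed into $n^{O(d)}$ constant-complexity regions, so this costs $Q n^{O(d)}$), identify over- and under-full cells, and perform Hungarian-style dual adjustments / augmenting-path updates to rebalance mass, driving the per-cell mass discrepancy below the threshold for scale $j+1$. (iv) Argue each phase needs only $n^{O(d)}$ primal-dual iterations because the total ``deficit'' at the start of a phase is bounded by a polynomial in $n$ times the scale unit (the standard cost-scaling invariant). (v) Multiply: $O(\log(\Delta/\varepsilon))$ phases, each costing $Q n^{O(d)}$, gives the claimed bound. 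I should also check that the argument only needs the $\oracle$ on constant-complexity regions — which it does, since every weighted Voronoi cell (for $L_p$ or squared-Euclidean ground cost, where cells are semialgebraic of bounded degree) splits into $n^{O(d)}$ such pieces.

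The main obstacle I anticipate is step (iii)--(iv): making the primal-dual / cost-scaling machinery, which is classically stated for \emph{discrete} transportation, work cleanly when one side is a continuous measure accessed only through a mass oracle. Concretely, in the discrete setting an augmenting path moves an integral (or unit-scaled) amount of mass along a path in the residual graph; here the ``residual graph'' has a continuum of sources, so I must argue that the relevant combinatorial structure — which cells are over/under-full, and the adjacency relation among them via shared Voronoi faces — is finite and changes only $n^{O(d)}$ times per phase, and that each dual adjustment can be computed by a finite calculation using oracle queries (e.g., binary-searching the amount by which a weight can increase before a cell's mass hits its target, where each probe is one oracle call). Handling the ground costs beyond $L_p$ (squared Euclidean) requires only that the weighted Voronoi diagram still has polynomial complexity and semialgebraic cells of bounded degree, so that claim should be remarked but not belabored. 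Bounding the number of iterations per scale is the quantitatively delicate part and is where the $\log(\Delta/\varepsilon)$, rather than $\poly(1/\varepsilon)$, factor is won.
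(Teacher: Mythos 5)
There is a genuine gap, and it sits exactly where you flagged the difficulty: steps (iii)--(iv). Your plan runs a Hungarian/augmenting-path process \emph{directly on the semi-discrete instance} within each scale and bounds its iteration count by ``the standard cost-scaling invariant.'' That invariant does not transfer: in discrete cost scaling the per-phase work bound rests on supplies/demands being integral in units of the scale, so every augmentation moves at least one $\delta$-unit of mass and the total residual excess ($\le \mathrm{poly}(n)\cdot\delta$) caps the number of augmentations. With a continuous $\mu$ there is no such quantum --- a dual adjustment found by binary search can saturate a cell after moving an arbitrarily small amount of mass, and neither the number of combinatorial events (which cells are over/under-full, which Voronoi faces appear) nor the number of augmentations per phase is obviously $n^{O(d)}$. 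This is precisely why the existing iterative semi-discrete solvers the theorem is trying to beat have $\mathrm{poly}(1/\varepsilon)$ rather than $\log(1/\varepsilon)$ dependence; asserting the discrete invariant here assumes away the main problem. A secondary issue: your per-phase target is a per-cell \emph{mass-discrepancy} bound, but the theorem needs an additive cost bound and, for the warm start at the next scale, an approximate dual-feasibility (complementary slackness) guarantee; you never argue that your rebalancing maintains relaxed feasibility or that mass discrepancies compose across phases into an $\varepsilon$ cost error without picking up extra $\Delta$ or $n$ factors.

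The paper's proof avoids any iterative process on the continuous side within a scale. At scale $\delta$ it builds, from the current duals $y$, the $i$-expansions $V_b^i$ of each weighted Voronoi cell for $i=1,\dots,4n+1$ (weight of $b$ raised by $i\delta$), takes the arrangement $\mcA(\mcV)$ of these $n(4n+1)$ semialgebraic cells ($n^{O(d)}$ regions, each assigned its $\mu$-mass via the $\oracle$), defines an integer surrogate distance $\distance_\delta\in\{0,\dots,4n+1\}$ recording which expansion band a region lies in, and solves this single discrete OT instance \emph{exactly} with a black-box primal-dual solver; the duals $\hat y$ of that solve give the update $y\leftarrow y+\delta\hat y$. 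Correctness then rests on three ingredients you would need analogues of: the previous scale's duals leave slack at most $4\delta$ on every mass-carrying pair (warm start); a cycle-canceling argument showing an optimal plan for the surrogate never uses an edge of truncated distance $>4n$, so the truncation at $4n+1$ is harmless; and the fact that the expansions faithfully encode the slacks of an (exponential, purely conceptual) refinement $\disc{\delta}$, so the computed plan and duals are $\delta$-optimal and hence $\delta$-close. In short, the missing idea in your proposal is this compression of the slack structure into a polynomial-size discrete instance solved once per scale; without it, the per-scale iteration bound you need has no proof.
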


To the best of our knowledge, our algorithm is the first one to compute an $\varepsilon$-close OT in time that is polynomial in both $n$ and $\log (\varepsilon^{-1})$. Earlier algorithms had an $\varepsilon^{-O(1)}$ factor in the run time\footnote{M\'erigot and Thibert had conjectured that an algorithm for computing an $\varepsilon$-close OT for semi-discrete setting with runtime $(n\log\eps^{-1})^{O(1)}$ might follow using a scaling framework~\cite[Remark 24]{merigot2021optimal}. Our result proves their conjecture in the affirmative.}. Our algorithm not only finds the optimal transport cost within an additive error, it also finds the optimal dual weights within an additive error of $\varepsilon$, i.e., it computes optimal dual-weights up to $O(\log \varepsilon^{-1})$ bits of accuracy.
Our algorithm works for any ground distance where the bisector of two points under the distance function $d(\cdot ,\cdot)$ is an algebraic variety of constant degree. Consequently, it works for several important distances, including the $L_p$-norm and the squared-Euclidean distance.

The previous best-known algorithm by Kitagawa~\cite{kitagawa2014iterative} for the semi-discrete optimal transport has an execution time $n^{\Omega(d)}\Delta/\varepsilon$; furthermore, their algorithm only approximates the cost and does not necessarily provide any guarantees for the optimal transport plan or the optimal dual weights of $B$.

For each scale $\delta$, our algorithm starts with a set of weights assigned to $B$. Using these weights, it constructs an instance of the discrete optimal transport of size $n^{O(d)}$, which is then solved using a primal-dual solver. The optimal dual weights for this discrete instance are then used to refine the dual weights of $B$. These refined dual weights act as the starting dual weights for the next scale $\delta/2$. Starting with $\delta = \Delta$, our algorithm executes a total of $O(\log (\Delta/\varepsilon))$ scales.

Our main insight is that in scale $\delta$, one can partition the continuous distribution $\mu$ into exponentially many regions $A_\delta$.   We prove that the dual weights and the semi-discrete transport plan $\tau$ computed by our algorithm satisfy a set of $\delta$-optimal dual feasibility conditions (a relaxation of the classical feasibility conditions of the optimal transport), one for each $(\varrho,b) \in A_\delta \times B$, making $\tau$ a $\delta$-close transport plan. Unfortunately, explicitly solving for $\tau$ using the partitioning $A_\delta$ will result in an exponential execution time. We overcome this difficulty by making two observations. 

At the start of scale $\delta$, we have a very good initial estimate for the dual weights of points in $B$ from the ones computed in the previous scale. In particular, we show that there is a semi-discrete transport plan $\tau$ such that the dual feasibility constraints on every pair $(\varrho,b) \in A_\delta\times B$ with $\tau(\varrho,b) > 0$ has a slack $\le 4n\delta$. Using this claim, we show that in the optimal semi-discrete transport plan $\tau^*$, $\tau^*(\varrho,b)=0$ for every pair $(\varrho,b)$ with a slack $> 4n\delta$. This allows us to restrict our attention to edges with slack $\le 4n\delta$. Unfortunately, there can be exponential number of edges with slack at most $4n\delta$.
In order to overcome this difficulty, we show that all slack $i$ edges incident on $b$ can be compactly represented as regions between carefully constructed expansions of $O(n)$ Voronoi cells in the weighted Voronoi diagram. Using this property, we can compress the size of OT instance to $n^{O(d)}$, which can then be solved using a discrete OT solver.

We also show that by increasing the number of scales in our algorithm from $O(\log (\Delta/\varepsilon))$ to $O(\log (n\Delta/\varepsilon))$, we obtain the optimal weights on the points in $B$ within an additive error of $\varepsilon$.

Next, we present another approximation algorithm for the semi-discrete setting whose running time is near-linear in $n$ but the dependence on $\varepsilon$ increases to $\varepsilon^{-O(d)}$.

\begin{theorem} \label{theorem:semi-continuous}
    Let $\mu$ be a continuous distribution defined on a compact set $A\subset \mbR^d$, $\nu$ a discrete distribution with a support $B\subset \mbR^d$ of size $n$, and $\varepsilon >0$ a parameter. Suppose there exists an $\oracle$ which, given an axis-aligned box $\cell$, returns $\mu(\cell)$ in $Q$ time. Then, a $(1+\varepsilon)$-approximate semi-discrete OT plan can be computed in $O(n\varepsilon^{-3d-2} (\log^5 (n) \log (\log n) + Q))$ time. If the spread of $B$ is polynomially bounded, a $(1+\varepsilon)$-approximate semi-discrete OT plan can be computed in $O(n\varepsilon^{-4d-5} (\log (n) \log^{2d+5} (\log n) + Q))$ time with probability at least $\frac{1}{2}$.
\end{theorem}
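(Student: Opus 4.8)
The plan is to convert the semi-discrete instance $(\mu,\nu)$ into a \emph{discrete} optimal transport instance $(\discrete{\mu},\nu)$ of near-linear size, solve that instance with a known near-linear-time discrete $\varepsilon$-OT algorithm, and lift its transport plan back to a semi-discrete coupling, losing only an $O(\varepsilon)$ relative factor overall. For the first (deterministic) bound I would use a deterministic discrete $\varepsilon$-OT solver running in $O(N\varepsilon^{-(d+2)}\log^{5}N\log\log N)$ time on $N$ points, such as~\cite{fox2022deterministic}; for the second bound, under the polynomial-spread hypothesis, I would instead reduce to bipartite matching (the bounded spread limits the number of unit-mass copies) and apply a faster Monte-Carlo matching algorithm in the spirit of~\cite{agarwal2022improved}, which succeeds with constant probability and, under the spread assumption, runs in near-linear time with the $\varepsilon^{-1}$, $\log$, and $\log\log$ dependence appearing in the statement. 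Substituting $N=n\varepsilon^{-O(d)}$ into these running times (and into the cost of building the discretization) reproduces the two bounds.

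The heart of the matter is the discretization of $\mu$, which must be fine exactly where mass has to travel a short distance and coarse everywhere else. I would (implicitly) lay down a quadtree on a bounding box of $A\cup B$; for a cell $\cell$ with center $c_\cell$, side length $\sidelength{\cell}$, and nearest-neighbour distance $d_\cell=\dist{c_\cell}{B}$, keep refining $\cell$ while $\sidelength{\cell}>\varepsilon d_\cell$. This rule alone would refine forever where $\mu$ concentrates near a point of $B$, so I would add an inner cutoff: for each $b\in B$ compute, by binary search with the \oracle, the radius $\rho_b$ with $\mu(\ball{b}{\rho_b})=\varepsilon\,\nu(b)/4$, and stop refining a cell as soon as it is contained in $\ball{b}{\rho_b}$ for the point $b\in B$ nearest $c_\cell$. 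Since a surviving leaf of side length $\ell$ must lie within distance $O(\ell/\varepsilon)$ of $B$, each point of $B$ owns only $\varepsilon^{-O(d)}$ leaves per dyadic scale, and storing the hierarchy as a compressed quadtree makes its size $N=n\varepsilon^{-O(d)}$ up to polylogarithmic factors. For each leaf cell $\cell$ I would call the \oracle\ (one or two box queries) to obtain $\mu(\cell)$ and place that much mass at $c_\cell$; the resulting weighted point set is $\discrete{\mu}$.

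For correctness, replacing the mass of a leaf $\cell$ by a point mass at $c_\cell$ perturbs the transport cost by at most $\sqrt{d}\,\sidelength{\cell}$ per unit, so the optimal costs of $(\mu,\nu)$ and $(\discrete{\mu},\nu)$ differ by at most $E:=\sqrt{d}\sum_{\cell}\mu(\cell)\,\sidelength{\cell}$ (the sum over leaf cells), and a transport plan for either instance lifts to one for the other with that same additive loss. A leaf stopped by $\sidelength{\cell}\le\varepsilon d_\cell$ has all of its mass moved by at least $(1-\varepsilon)d_\cell$ in \emph{any} transport plan, so such cells contribute at most $\tfrac{\varepsilon\sqrt{d}}{1-\varepsilon}\,\plancost(\tau^*)$ to $E$. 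A leaf stopped by the inner cutoff lies in $\ball{b}{\rho_b}$ and has side length $O(\rho_b)$, so the leaves assigned to a fixed $b$ contribute at most $O(\sqrt{d}\,\rho_b)\cdot\mu(\ball{b}{\rho_b})=O(\varepsilon\,\rho_b\,\nu(b))$ to $E$; summing and using the structural bound $\plancost(\tau^*)\ge\tfrac12\sum_{b}\nu(b)\rho_b$ (each $b$ imports at least $\nu(b)/2$ units from outside $\ball{b}{\rho_b}$, at cost at least $\rho_b$ per unit) shows their total contribution is $O(\varepsilon)\plancost(\tau^*)$ as well. Hence $E=O(\varepsilon)\plancost(\tau^*)$; running the discrete solver with accuracy parameter $\Theta(\varepsilon)$ and lifting its output gives a $(1+O(\varepsilon))$-approximate semi-discrete plan, and rescaling $\varepsilon$ finishes the proof.

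The main obstacle is exactly this balancing act: the refinement rule and the inner cutoff must be tuned so that the number $N$ of surviving cells stays $n\varepsilon^{-O(d)}\poly\log n$ \emph{and} the aggregate snapping error $E$ is only an $O(\varepsilon)$ fraction of $\plancost(\tau^*)$, even though $\plancost(\tau^*)$ can be arbitrarily small relative to $\Delta$ and the individual masses $\nu(b)$ can be tiny; the cheaply computable lower bound $\plancost(\tau^*)\ge\tfrac12\sum_b\nu(b)\rho_b$ is what reconciles the two, and making it absorb the innermost scales uniformly over $b$ is the delicate point. Everything else is routine: building the compressed quadtree and answering the oracle queries within the claimed time; checking that a discrete plan on $\discrete{\mu}$ lifts to a feasible semi-discrete coupling representable within the stated bound; substituting $N=n\varepsilon^{-O(d)}$ into the two solvers' running times to recover the stated exponents of $\varepsilon^{-1}$ and the $\log$/$\log\log$ factors; and verifying that the polynomial-spread hypothesis lets the Monte-Carlo solver be used with success probability at least $\tfrac12$.
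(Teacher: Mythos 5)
There is a genuine gap, and it is in the size bound for your discretization, which is the quantitative heart of the theorem. Your refinement rule (``refine while $\sidelength{\cell}>\varepsilon\, \mathsf{d}(c_\cell,B)$, stop inside $\ball{b}{\rho_b}$'') produces, around each $b\in B$, a sequence of concentric dyadic shells: at distance $r$ from $b$ the leaves have side $\Theta(\varepsilon r)$, so each distance scale contributes $\Theta(\varepsilon^{-d})$ leaves, and the number of scales is $\Theta\bigl(\log(\mathrm{diam}(A\cup B)/\rho_b)\bigr)$. Since $\mu$ may be a fully supported density, none of these leaves can be pruned, and a compressed quadtree does not help: the leaves in each shell are $\varepsilon^{-d}$ genuinely distinct cells, not a chain of single-child nodes. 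Hence $N$ is $n\varepsilon^{-O(d)}\cdot\log\bigl(\mathrm{diam}(A\cup B)/\min_b\rho_b\bigr)$ in the worst case, and this logarithm is controlled neither by $n$ nor by $\varepsilon$: it depends on the spread of $A\cup B$ \emph{and} on $\mu$ (through $\rho_b$), while the first bound of the theorem assumes nothing about spread and the second only assumes bounded spread of $B$. Your error analysis (charging outer cells against the distance the mass must travel, and inner cells against the lower bound $\plancost(\tau^*)\ge\tfrac12\sum_b\nu(b)\rho_b$) is fine, but the claimed $N=n\varepsilon^{-O(d)}\,\mathrm{poly}\log n$ does not follow, so the stated running times are not recovered. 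The paper avoids exactly this by building the cells from an $(\varepsilon/4)$-WSPD of $B$ in the style of approximate Voronoi diagrams: for each WSPD pair only $O(\log\varepsilon^{-1})$ grid scales are used, tied to the inter-point distance of the pair; mass in the inner neighborhood $\neighborhood{\varepsilon}{b}$ (scale $\varepsilon\cdot\min_{b'\ne b}\|b-b'\|$, a property of $B$, not of $\mu$) is routed greedily to $b$ with a swapping argument (Lemma~\ref{lemma:approx-localflow}), and all mass outside a box $H$ of side $\tfrac{4}{\varepsilon}\mathrm{diam}(B)$ is collapsed to a single point since all of $B$ is approximately equidistant from it (Lemma~\ref{lemma:approx-largedists}). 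This caps the cell count at $O(n\varepsilon^{-2d}\log\varepsilon^{-1})$ independently of the spread of $A\cup B$ and of $\mu$.

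A secondary issue: for the Monte-Carlo bound you propose reducing to bipartite matching because ``bounded spread limits the number of unit-mass copies.'' Spread constrains point locations, not masses; $\nu(b)$ and the discretized masses are arbitrary reals, so a unit-copy reduction is not available. The paper instead solves the discrete instance with its own Section~\ref{sec:discrete-OT} algorithm (greedy primal-dual on a randomized spanner boosted by MWU), which handles real masses directly and is where the polynomial-spread hypothesis on $B$ is actually used.
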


Similar to \cite{van2021approximating}, the high level view of our approach is to discretize the continuous distribution and use a discrete OT algorithm. Our main contribution is a more clever sampling strategy that is more global and that works for arbitrary density (rather than for collections of geometric objects). We prove that it suffices to sample $n\varepsilon^{-O(d)}$ points in contrast to $\Omega(n^2)$ points in~\cite{van2021approximating}.

Our final result is a new $(1+\varepsilon)$-approximation algorithm for the discrete transport problem.

\begin{theorem} \label{theorem:OT-discrete}
    Let $\mu$ and $\nu$ be two discrete distributions with support sets $A, B\subset \mbR^d$, respectively, where $A\cup B$ is a point set of size $n$ with polynomially bounded spread, $d \geq 1$ is a constant and $\varepsilon >0$ a parameter. Then, a $(1+\varepsilon)$-approximate discrete OT plan between $\mu$ and $\nu$ can be computed by a Monte Carlo algorithm in $O\left( n \eps^{-2d-5}\log (n) \log^{2d+5}(\log n)  \right)$ time with probability at least $\frac{1}{2}$.
\end{theorem}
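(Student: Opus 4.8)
The plan is to reduce the problem, in near-linear time and with only a multiplicative $(1+\Theta(\varepsilon))$ loss, to computing a minimum-cost transportation on a \emph{sparse, well-conditioned} graph, and then to solve that instance with a cost-scaling primal--dual method whose work \emph{per scale} is near-linear. \emph{Reduction to a sparse, bounded-cost instance.} Since $A\cup B$ has polynomially bounded spread, the ratio of its diameter $\Delta$ to its smallest interpoint distance is $\poly(n)$, and (the case $\mu=\nu$ being trivial) $\plancost(\sigma^*)\ge\Delta/\poly(n)$. Using a coarse ($\poly(n)$-factor) estimate of $\plancost(\sigma^*)$, snap every point to a sufficiently fine grid so that the total displacement changes $\plancost(\sigma^*)$ by at most $\varepsilon\plancost(\sigma^*)/4$; after rescaling, the ground distances $d(a,b)$ are nonnegative integers bounded by $U=\poly(n)$ --- here it is the polynomial-spread hypothesis that keeps $U$ polynomial --- so $\log U=O(\log n)$. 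Next, build a randomly shifted (compressed) quadtree $\quadtree$ with $O(\log n)$ levels and, from it, a Euclidean $(1+\varepsilon)$-spanner $\spanner$ on $A\cup B$ with $m=n\varepsilon^{-O(d)}$ edges, so that $d(a,b)\le\graphdistanceof{a,b}\le(1+\varepsilon)\,d(a,b)$ for every pair. Replacing the complete bipartite graph by $\spanner$ changes the optimal transport cost by at most a factor $1+\varepsilon$; hence it suffices to compute a minimum-cost transportation on $\spanner$, and the rounded costs being integral, it even suffices to compute one exactly.

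\emph{The cost-scaling solver.} On $\spanner$, run \scaletransport: starting from the coarsest scale and halving the scale parameter each time, perform $O(\log(nU))=O(\log n)$ scales, each one a primal--dual phase (\pdot) that takes the near-feasible dual weights and partial plan inherited from the previous scale (a \improvedgreedy-type step supplies a scale's initial partial plan) and restores the current scale's approximate complementary-slackness conditions on $\spanner$ through Hungarian-type augmentations along admissible (zero-slack) paths in the residual graph. By the standard scaling invariant --- each scale introduces $O(1)$ units of slack per edge in that scale's unit, and costs and masses are integral of total magnitude $\poly(n)$ --- after $O(\log n)$ scales the resulting plan $\sigma$ is a minimum-cost transportation on $\spanner$; together with the reduction above this gives $\plancost(\sigma)\le(1+O(\varepsilon))\plancost(\sigma^*)$, and rescaling $\varepsilon$ by a constant yields the claimed approximation.

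\emph{Near-linear work per scale.} Two ingredients make a scale cost $n\varepsilon^{-O(d)}\log^{O(d)}(\log n)$. First, an amortized \emph{budget} argument: the slack inherited from the previous scale confines the admissible edges incident to a point $b$ to the regions between carefully chosen expansions of $O(1)$ additively weighted Voronoi cells, and the augmentation/push work performed inside a quadtree cell $\cell$ is charged to $(\mathrm{diam}\,\cell)\times(\text{residual mass crossing }\partial\cell)$; summing over the $O(\log n)$ levels caps the number of elementary operations in a scale by $n\varepsilon^{-O(d)}$. Second, each elementary operation --- a weighted nearest-free-neighbor query, equivalently a point location in an additively weighted approximate Voronoi diagram, together with the ensuing dual-weight update --- is supported in $\log^{O(d)}(\log n)$ time by a dynamic data structure that exploits the bounded spread, which is where the $\log\log n$ factors and the polynomial-spread hypothesis enter. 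Multiplying $O(\log n)$ scales by $n\varepsilon^{-O(d)}$ operations each by $\log^{O(d)}(\log n)$ time per operation gives the bound $O(n\,\varepsilon^{-2d-5}\log(n)\log^{2d+5}(\log n))$.

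\emph{Randomness, and the main obstacle.} The only randomness is the shift of $\quadtree$: over a uniform shift, a fixed pair of points at distance $r$ is separated by a level-$\ell$ cell with probability $O(d\,r/2^{\ell})$, so the \emph{expected} total boundary charge $\sum_{\cell}(\mathrm{diam}\,\cell)\cdot(\text{optimal-plan mass crossing }\partial\cell)$ is $O(\plancost(\sigma^*)\log n)$ --- exactly the quantity against which the budget argument accounts the augmentation work --- and Markov's inequality then gives the stated running time with probability at least $1/2$; the $(1+\varepsilon)$-spanner property makes the cost guarantee hold deterministically, so in particular $\mathbb{E}[\plancost(\sigma)]\le(1+\varepsilon)\plancost(\sigma^*)$. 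I expect the main obstacle to be the per-scale near-linearity asserted above --- showing that the total augmentation work in a scale is $n\varepsilon^{-O(d)}$ and not $\poly(n)$. This hinges on tightly coupling the slack structure carried across scales to the geometry of the random quadtree: one must prove that only small-slack edges are ever relaxed, that these edges admit a compact description via expansions of $O(n)$ additively weighted Voronoi cells, and that the boundary-charge bound limits how much residual mass can cross a cell within a scale. A secondary difficulty is realizing the dynamic weighted approximate-nearest-neighbor data structure with $\log^{O(d)}(\log n)$ update time under dual-weight changes, which is the source of the remaining $\log\log n$ factors.
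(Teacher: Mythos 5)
Your proposal follows a genuinely different route from the paper — a cost-scaling Hungarian-type solver on a quadtree spanner with $O(\log n)$ scales, a boundary-charging budget argument, and a dynamic weighted nearest-neighbor structure — essentially the template of the earlier Monte-Carlo matching algorithms, whereas the paper builds a hierarchical graph of height $h=O(\log\log n)$ (exploiting the polynomial spread and an aggressive branching factor), runs a simple bottom-up greedy primal-dual oracle that solves small local min-cost flow instances exactly with Orlin's algorithm so as to satisfy the relaxed feasibility condition (C1) with $\rho=O(\varepsilon^{-1}\log\log n)$ and the strong-duality condition (C2), and then boosts this $O(\log\log n)$-approximate oracle to a $(1+\varepsilon)$-approximation via the multiplicative-weights-update framework with $O(\rho^2\varepsilon^{-2}\log n)$ oracle calls, finally shortcutting paths to recover a plan on $A\times B$. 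The difference matters for the analysis: in the paper the randomness (the shifted root cell) affects only the \emph{cost} guarantee, since the spanner is $(1+\varepsilon)$ only in expectation, while the running time is worst-case; in your scheme the cost is deterministic and the time is random, which is a different (and unverified) trade-off.

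As written, your argument has genuine gaps. First, the two steps you yourself flag as obstacles are exactly the technical heart of the approach you chose and are only asserted: the per-scale $n\varepsilon^{-O(d)}$ bound on augmentation work (the budget/charging argument and the claim that admissible edges admit a compact description via expansions of weighted Voronoi cells) and the dynamic weighted approximate-nearest-neighbor structure with $\log^{O(d)}(\log n)$ operations under dual-weight updates; without proofs of these, the stated running time does not follow. Second, the reduction itself contains incorrect claims. The bound $\plancost(\sigma^*)\ge \Delta/\mathrm{poly}(n)$ is false for real-valued masses: an arbitrarily small amount of mass may need to move, so $\plancost(\sigma^*)$ can be smaller than any fixed polynomial fraction of the diameter, and anchoring the grid or the scale sequence at $\Delta$ does not yield a relative $(1+\varepsilon)$ guarantee. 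Relatedly, ``costs and masses are integral of total magnitude $\mathrm{poly}(n)$'' is wrong — the masses are probabilities, not integers — so the standard scaling invariant does not terminate in an \emph{exact} optimum after $O(\log n)$ scales; you would instead have to anchor the scales at an $O(\log n)$-factor estimate of $\plancost(\sigma^*)$ (as the paper does for its exponential search over guesses $g$) and stop once the additive slack per unit mass is at most $\varepsilon$ times that estimate. These fixes are plausible, but together with the unproven per-scale bound they leave the proposal short of a proof of the theorem.
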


As mentioned above, until recently, the best-known Monte Carlo algorithm for computing an $\varepsilon$-OT plan had running time $n (\varepsilon^{-1} \log n)^{O(d)}$. Recently in an independent work, Fox and Lu \cite{fox2022deterministic} obtained a deterministic algorithm for computing an $\varepsilon$-OT plan in $O(n \varepsilon^{-d-2} \log^5(n) \log (\log n))$ time. We believe that our result is of independent interest. The running time is slightly better than in \cite{fox2022deterministic}, though of course their algorithm is deterministic. But our main contribution is a greedy primal-dual $O(\log \log n)$-approximation algorithm that is simple and geometric and runs in $O(n \log \log n)$ time. By plugging our algorithm into the multiplicative weight update method as in \cite{zuzic2021simple}, we obtain a $(1+\varepsilon)$-approximation algorithm. We believe the derandomization technique of Lu and Fox can be applied to our algorithm, but one has to check all the technical details.

\section{Computing a Highly Accurate Semi-Discrete Optimal Transport}\label{sec:scaling_algo_refined}
Given a continuous distribution $\mu$ over a compact bounded set $A\subset\mbR^d$, a discrete distribution $\nu$ over a set $B\subset \mbR^d$ of $n$ points, and a parameter $\varepsilon>0$, we present a cost-scaling algorithm for computing an $\varepsilon$-close semi-discrete transport plan from $\mu$ to $\nu$. We first describe the overall framework, then provide details of the algorithm and analyze its efficiency, and finally prove its correctness.

In our algorithm, we use a black-box primal-dual discrete OT solver $\pdot(\mu', \nu')$ that given two discrete distributions $\mu'$ and $\nu'$ defined over two point sets $A'$ and $B'$, returns a transport plan $\sigma$ from $\mu'$ to $\nu'$ and a dual weight $y(v)$ for each point $v \in A'\cup B'$ such that for any pair $(a,b) \in A'\times B'$, 
\begin{eqnarray}
    y(b)-y(a) &\le& \distance(a,b)\label{eq:exact-feasibility_non_matching},\\
    y(b)-y(a) &=& \distance(a,b) \quad \text{if } \sigma(a,b) > 0.\label{eq:exact-feasibility_matching}
\end{eqnarray}
Standard primal-dual methods~\cite{kuhn1955hungarian} construct a transport plan while maintaining~\eqref{eq:exact-feasibility_non_matching} and~\eqref{eq:exact-feasibility_matching}. For concreteness, we use Orlin's algorithm~\cite{orlin1988faster} that runs in $O(|A\cup B|^3)$ time. 

\subsection{The Scaling Framework.}\label{sec:scaling}
The algorithm works in $O(\log(\Delta\varepsilon^{-1}))$ rounds, where $\Delta$ is the diameter of $A\cup B$. In each round, we have a parameter $\delta>0$ that we refer to as the \textit{current scale}, and we also maintain a dual weight $y(b)$ for every point $b\in B$. Initially, in the beginning of the first round, $\delta = \Delta$ and $y(b)=0$ for all $b\in B$. Execute the following steps $s=c\log_2 (\Delta\varepsilon^{-1})$ times, where $c$ is a sufficiently large constant\footnote{Computing an $\varepsilon$-close transport plan requires $O(\log(\Delta/\varepsilon))$ iterations. When the goal, on the other hand, is to obtain accurate dual weights up to $O(\log \varepsilon^{-1})$ bits, we need to execute our algorithm for $O(\log (n\Delta/\varepsilon))$ iterations. See Section~\ref{sec:optimal_duals}.}.
\begin{itemize}
    \item[(i)] {\it Construct a discrete OT instance:} Using the current values of dual weights of $B$, as described below, construct a discrete distribution $\hat{\mu}_\delta$ with a support set $X_\delta$, where $|X_\delta|=n^{O(d)}$, and define a (discrete) ground distance function $\distance_\delta:B\times X_\delta \rightarrow \{0,\ldots, 4n+1\}$.
    \item[(ii)] {\it Solve OT instance:} Compute an optimal transport plan between discrete distributions $\hat{\mu}_\delta$ and $\nu$ using the procedure $\pdot(\hat{\mu}_\delta, \nu)$. Let $\sigma_\delta$ be the coupling and $\hat{y}:B\rightarrow\mbR$ be the dual weights returned by the procedure.
    \item[(iii)] {\it Update dual weights:} $y(b) \leftarrow y(b)+\delta \hat{y}(b)$ for each point $b\in B$.
    \item[(iv)] {\it Update scale:} $\delta \leftarrow \delta/2$.
\end{itemize}
We refer to the $j$th iteration of this algorithm as \emph{iteration} $j$. Our algorithm terminates when $\delta\le \varepsilon$.
We now describe the details of step (i) of our algorithm, which is the only non-trivial step. Let $y(\cdot)$ be the dual weights of $B$ at the start of iteration $j$.

\subsubsection*{Constructing a discrete OT instance.}
We construct the discrete instance by constructing a family of Voronoi diagrams and overlaying some of their cells. For a weighted point set $P\subset \mbR^d$ with weights $w:P\rightarrow\mbR$ and a distance function $\distance:P\times \mbR^d\rightarrow\mbR_{\ge0}$, we define the \emph{weighted distance} from a point $p\in P$ to any point $x\in\mbR^d$ as $\distance_w(p,x)=\distance(p,x)-w(p)$. For a point $p\in P$, its \emph{Voronoi cell} is $\vor_w(p) = \{x\in \mbR^d\mid \distance_w(p,x)\le \distance_w(p',x), \forall p'\in P\}$,
and the \emph{Voronoi diagram} $\vd_w(P)$ is the decomposition of $\mbR^d$ induced by Voronoi cells; see \cite{fortune1995voronoi}.

For $i\in[1,4n+1]$ and a point $b\in B$, we define a Voronoi cell $V_b^i$ using a weight function $w_i:B\rightarrow\mbR_{\ge0}$, as follows. We set $w_i(b)=y(b)+i\delta$ and $w_i(b')=y(b')$ for all $b'\neq b$. We set $V_b^i=\vor_{w_i}(b)$ in $\vd_{w_i}(B)$. By construction, $V_b^1\subseteq V_b^2\subseteq\ldots\subseteq V_b^{4n+1}$. Set $\mcV_b=\{V_b^i\mid i\in[1,4n+1]\}$ and $\mcV=\bigcup_{b\in B}\mcV_b$ 
(See Figure~\ref{fig:vor}(a)). Let $\mcA(\mcV)$ be the \textit{arrangement} of $\mcV$, the decomposition of $\mbR^d$ into (connected) cells induced by $\mcV$; each cell of $\mcA(\mcV)$ is the maximum connected region lying in the same subset of regions of $\mcV$~\cite{agarwal1998efficient}.

For each cell $\varphi$ in $\mcA(\mcV)$, we choose a point $\rep{\varphi}$ arbitrarily and set its mass to $\hat{\mu}_\delta(\rep{\varphi})=\mu(\varphi)$, where for any region $\rho$ in $\mbR^d$, $\mu(\rho) = \int_\rho \mu(a)\, da$ is the mass of $\mu$ inside $\rho$ (Here we assume the mass to be $0$ outside the support $A$ of $\mu$). Set $X_\delta=\{\rep{\varphi}\mid \varphi\in\mcA(\mcV)\}$. The resulting mass distribution on $X_\delta$ is $\hat{\mu}_\delta$. 

The ground distance $\distance_\delta(a,b)$ between any point $b \in B$ and a point $a \in X_\delta$ is defined as
\begin{equation*}
    \distance_\delta(a,b)=\begin{cases}
        0, \quad &\text{if } a\in V_b^1,\\
        i, \quad &\text{if } a\in V_b^{i+1}\setminus V_b^i, \ \ i\in [1,4n],\\
        4n+1,\quad &\text{if } a\notin V_b^{4n+1}.
    \end{cases}
\end{equation*}
See Figure~\ref{fig:vor}(b).  Since each $V_b^i$ is defined by $n$ algebraic surfaces of constant degree, assuming the bisector of two points under the distance function $\distance(\cdot ,\cdot)$ is an algebraic variety of constant degree, $\mcA(\mcV)$ has $n^{O(d)}$ cells and a point in every cell of $\mcA(\mcV)$ can be computed in $n^{O(d)}$ time~\cite{basu2003algorithms}. Hence, $|X_\delta|=n^{O(d)}$.
This completes the construction of $X_\delta, \hat{\mu}_\delta,$ and $\distance_\delta$.

\subsubsection*{Computing a semi-discrete transport plan.} At the end of any scale $\delta$, we compute a $\delta$-close semi-discrete transport plan $\tau_\delta$ from the discrete transport plan $\sigma_\delta$ as follows: For any edge $(\rep{\varphi},b)\in X_\delta\times B$, we arbitrarily transport $\sigma_\delta(\rep{\varphi}, b)$ mass from the points inside the region $\varphi$ to the point $b$. A simple construction of such transport plan is to set, for any region $\varphi$, any point $a\in\varphi$, and any point $b\in B$, $\tau_\delta(a,b)=\frac{\mu(a)}{\hat{\mu}_\delta(\rep{\varphi})}\sigma_\delta(\rep{\varphi}, b)$. Our algorithm will only compute the transport plan at the end of the last scale, i.e., $\delta \le \varepsilon$.

\begin{figure}
    \centering
    \begin{tabular}{c@{\hskip 5em}c}
         \includegraphics[width=0.4\textwidth]{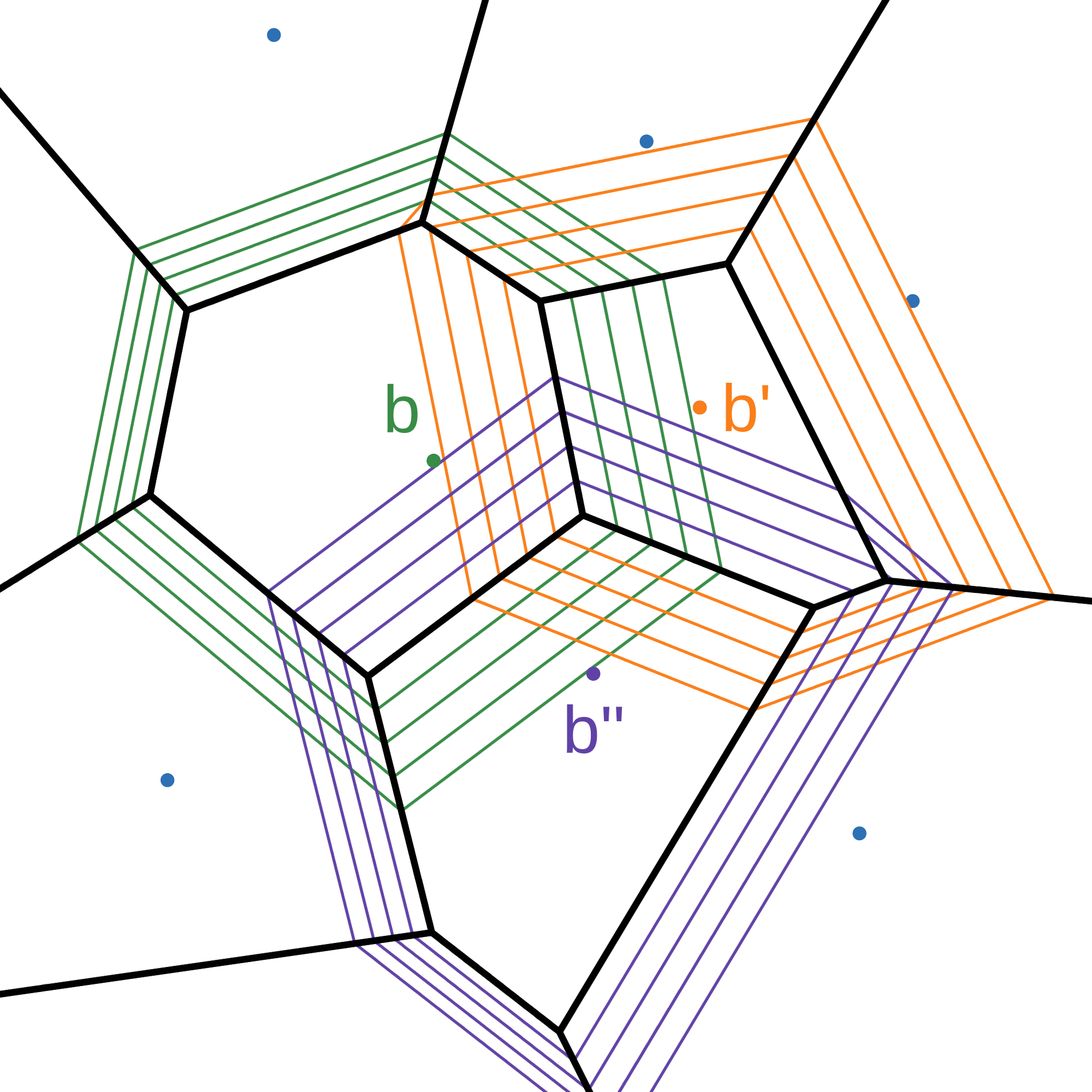} &     \includegraphics[width=0.4\textwidth]{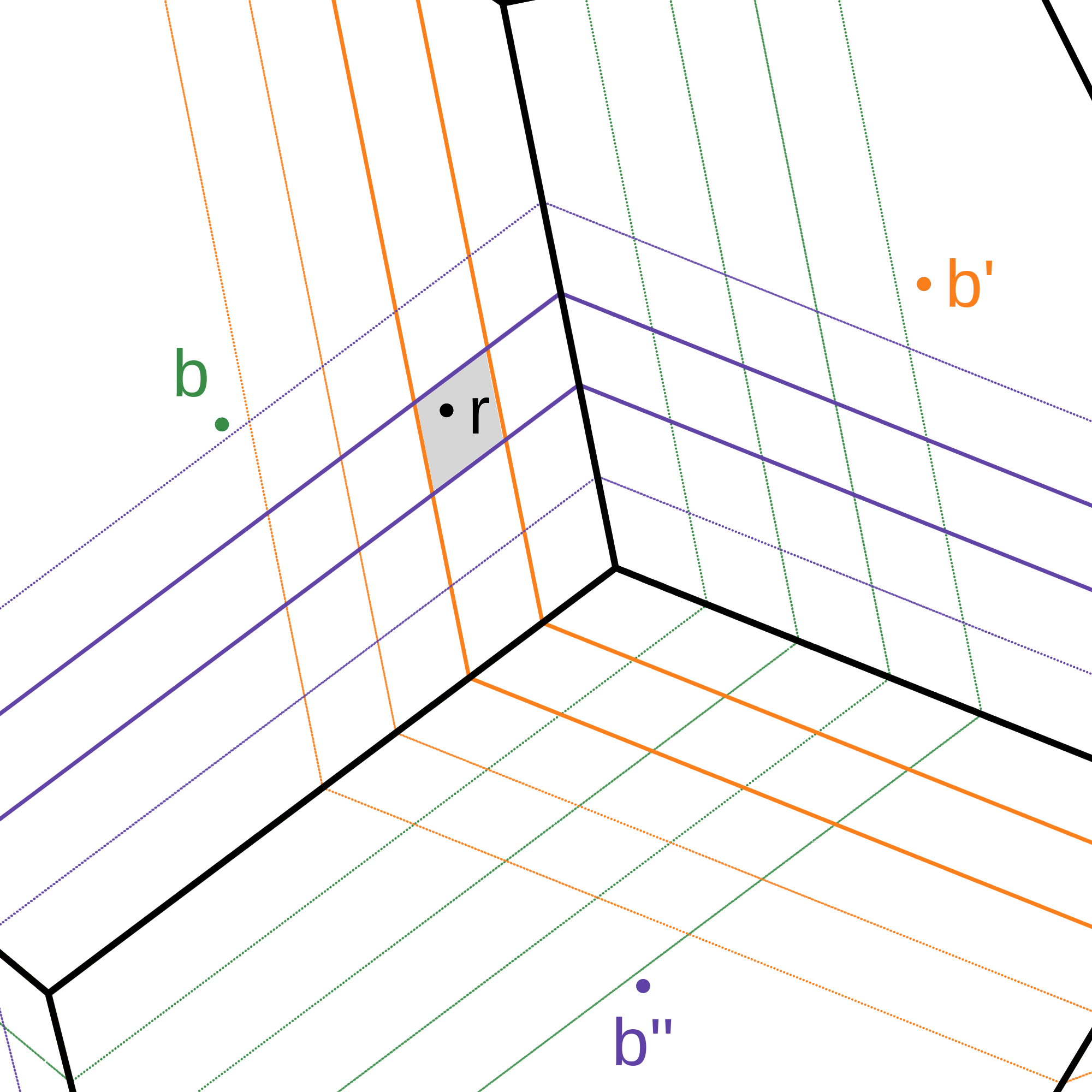}  \\
         (a) & (b) 
    \end{tabular}
    \caption{(a) The $i$-expansions of the Voronoi cells of three points $b, b', b''\in B$, (b) A region $\varphi\in\mcA(\mcV)$ (highlighted in gray) with a representative point $r\in X_\delta$, where $\distance_\delta(b,r)=0$ since $r\in V_b^1$, $\distance_\delta(r,b') = 1$ since $r\in V_{b'}^2\setminus V_{b'}^1$, and $\distance_\delta(r,b'')=2$ since $r\in V_{b''}^3\setminus V_{b''}^2$ is between the $2$-expansion and $3$-expansion of Voronoi cell of $b''$. The ground distance in this figure is squared Euclidean.}
    \label{fig:vor}
\end{figure}

\subsubsection*{Efficiency analysis.}
Our algorithm runs $O(\log (\Delta\varepsilon^{-1}))$ scales, where in each scale, it constructs a discrete OT instance in $n^{O(d)}$ time and solves the OT instance using a polynomial-time primal-dual OT solver. Since the size of the discrete OT instance is $n^{O(d)}$, solving it also takes $n^{O(d)}$ time, resulting in a total execution time of $n^{O(d)}\log(\Delta\varepsilon^{-1})$ for our algorithm.

\subsection{Proof of Correctness.}\label{sec:correctness}
In the discrete setting, cost scaling algorithms obtain an $\varepsilon$-close transport plan that satisfies~\eqref{eq:exact-feasibility_matching} and an additive $\varepsilon$ relaxation of~\eqref{eq:exact-feasibility_non_matching}. 
For our proof, we extend these relaxed feasibility conditions to the semi-discrete transport plan and show that, at the end of each scale $\delta$, the semi-discrete transport plan computed by our algorithm satisfies these conditions. We use the relaxed feasibility conditions to show that our semi-discrete transport plan is $\delta$-close. Thus, in the last scale, when $\delta \le \varepsilon$, our algorithm returns an $\varepsilon$-close semi-discrete transport plan from $\mu$ to $\nu$. 

\vspace{0.5em}
\noindent{\bf $\delta$-optimal transport plan.}
For any scale $\delta$, we first describe a discretization of the continuous distribution into a set of regions $\disc{\delta}$ and then describe the relaxed feasibility conditions for all pairs $(\varrho, b) \in \disc{\delta}\times B$. 

Consider a decomposition of the support $A$ of the continuous distribution $\mu$ into a set of regions, where each region $\varrho$ in the decomposition satisfies the following condition:\begin{itemize}
\item[(P1)]Assuming every point $b \in B$ has a weight $w(b)$ that is an integer multiple of $\delta$, any two points $x$ and $y$ in $\varrho$ have the same weighted nearest neighbor in $B$ with respect to weights $w(\cdot)$,
\end{itemize}
where for any set of weights $w$ for points in $B$ and any point $a\in A$, we say that a point $b\in B$ is a \emph{weighted nearest neighbor} of $a$ if $\distance_w(a,b)= \min_{b'\in B} \distance_w(a,b')$.
Let this set of regions be $\disc{\delta}$. For each region $\varrho\in \disc{\delta}$, let $\rep\varrho$ denote an arbitrary representative point inside $\varrho$. 

Let $y: B \rightarrow \mbR$ denote a set of dual weights for the points in $B$. For each region $\varrho\in \disc{\delta}$, we derive a dual weight $y_\delta(\rep\varrho)$ for its representative point as follows. Let $b_\varrho\in B$ be the weighted nearest neighbor of $\rep\varrho$ with respect to weights $y(\cdot)$. We set the dual weight of $\rep\varrho$ as
\begin{equation}\label{eq:dual_a_assignment}
    y_\delta(\rep\varrho) \leftarrow y(b_\varrho) - \distance(\rep\varrho,b_\varrho)-\delta.
\end{equation}
We say that a semi-discrete transport plan $\tau$ from $\mu$ to $\nu$ along with the set of dual weights $y(\cdot)$ for points in $B$ is \emph{$\delta$-optimal} if, for each point $b\in B$ and each region $\varrho\in \disc{\delta}$,
\begin{eqnarray}
    y(b)-y_\delta(\rep\varrho) &\le& \distance(\rep\varrho,b) + \delta, \label{eq:feasibility_non_matching}\\
    y(b)-y_\delta(\rep\varrho) &\ge& \distance(\rep\varrho,b) \quad\quad\ \  \text{if } \tau(\varrho,b) > 0.\label{eq:feasibility_matching}
\end{eqnarray}
In the following lemma, we show that any $\delta$-optimal transport plan $\tau, y(\cdot)$ from $\mu$ to $\nu$ is $3\delta$-close.

\begin{restatable}{lemma}{deltaoptimal}\label{lemma:delta-optimal}
    Suppose $\tau, y(\cdot)$ is any $\delta$-optimal transport plan from $\mu$ to $\nu$ and let $\tau^*$ denote any optimal transport plan from $\mu$ to $\nu$. Then, $\plancost(\tau_\sigma)\le \plancost(\tau^*)+\delta$. 
\end{restatable}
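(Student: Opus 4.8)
We read the conclusion as $\plancost(\tau)\le\plancost(\tau^*)+\delta$. The plan is to combine weak (Kantorovich) LP duality with an approximate complementary-slackness estimate: a feasible dual solution built out of $y(\cdot)$ lower-bounds $\plancost(\tau^*)$, while conditions~\eqref{eq:feasibility_matching} together with property (P1) force $\tau$ to have slack at most $\delta$ on every transported pair, so that $\plancost(\tau)$ exceeds that lower bound by at most $\delta\cdot\mu(A)=\delta$.

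To set this up, for $a\in A$ let $\varrho(a)\in\disc{\delta}$ be the region containing $a$. Since the weights $y(\cdot)$ are integer multiples of $\delta$ (as is maintained throughout the algorithm, cf.\ Section~\ref{sec:scaling}), property (P1) says that all points of a region have a common weighted nearest neighbor in $B$ with respect to $y(\cdot)$; write $b_{\varrho(a)}$ for it, which is exactly the point $b_\varrho$ of~\eqref{eq:dual_a_assignment}. Put $\psi(a):=\min_{b\in B}\bigl(\distance(a,b)-y(b)\bigr)=\distance(a,b_{\varrho(a)})-y(b_{\varrho(a)})$. Then $\psi(a)+y(b)\le\distance(a,b)$ for every pair $(a,b)$, so, using $\sum_{b\in B}\tau^*(a,b)=\mu(a)$ and $\int_A\tau^*(a,b)\,da=\nu(b)$,
\begin{equation*}
\plancost(\tau^*)=\int_A\sum_{b\in B}\distance(a,b)\tau^*(a,b)\,da\ \ge\ \int_A\sum_{b\in B}\bigl(\psi(a)+y(b)\bigr)\tau^*(a,b)\,da=\int_A\psi(a)\mu(a)\,da+\sum_{b\in B}y(b)\nu(b).
\end{equation*}
By the identical computation applied to $\tau$ (with inequalities reversed), it suffices to prove that for almost every $a\in A$ and every $b$ with $\tau(a,b)>0$,
\begin{equation}\label{eq:plan-key}
\distance(a,b)-y(b)\ \le\ \bigl(\distance(a,b_\varrho)-y(b_\varrho)\bigr)+\delta,\qquad\varrho:=\varrho(a),
\end{equation}
i.e.\ that $b$ is a $\delta$-approximate weighted nearest neighbor of $a$ with respect to $y(\cdot)$; indeed, given~\eqref{eq:plan-key} one gets $\plancost(\tau)\le\int_A\sum_b\bigl(\psi(a)+y(b)+\delta\bigr)\tau(a,b)\,da=\int_A\psi(a)\mu(a)\,da+\sum_b y(b)\nu(b)+\delta$, and combining with the displayed lower bound finishes the proof.

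To prove~\eqref{eq:plan-key}, fix such $a$ and $b$. Discarding a null set that does not affect the cost integral, we may assume $\tau(\varrho,b)=\int_\varrho\tau(a',b)\,da'>0$, so~\eqref{eq:feasibility_matching} applies to $(\varrho,b)$; substituting $y_\delta(\rep\varrho)=y(b_\varrho)-\distance(\rep\varrho,b_\varrho)-\delta$ into $y(b)-y_\delta(\rep\varrho)\ge\distance(\rep\varrho,b)$ and rearranging yields the representative-point version of~\eqref{eq:plan-key}, namely $\distance(\rep\varrho,b)-y(b)\le\bigl(\distance(\rep\varrho,b_\varrho)-y(b_\varrho)\bigr)+\delta$. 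To transfer this from $\rep\varrho$ to the actual point $a\in\varrho$ I would invoke (P1) a second time, now with the perturbed weights $y''$ that agree with $y$ except $y''(b)=y(b)+\delta$ (still in $\delta\Z^B$): the inequality just derived says precisely that $b$ becomes a weighted nearest neighbor of $\rep\varrho$ under $y''$, hence by (P1) also of $a$ under $y''$, and comparing its weighted distance at $a$ with that of $b_\varrho$ gives $\distance(a,b)-\delta-y(b)\le\distance(a,b_\varrho)-y(b_\varrho)$, which is~\eqref{eq:plan-key}.

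The only delicate step is this last transfer: (P1) is available only for weights in $\delta\Z^B$, which is why the argument must pass through an \emph{integer} perturbation of $y$, and one must be careful with ties in the weighted Voronoi diagram so that ``a weighted nearest neighbor of $\rep\varrho$ under $y''$'' is genuinely one of $a$ — this is clean if $B$ is in general position (which may be assumed after an infinitesimal perturbation, as the asserted bound is continuous in the point locations) or if (P1) is read as asserting that the whole set of nearest sites is constant on each region. Everything else — the coherence of $\psi$ on regions, the measure-theoretic reduction to $\tau(\varrho,b)>0$, and the two cost computations — is routine manipulation of the transport constraints.
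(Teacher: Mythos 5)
Your proposal is correct and follows essentially the same route as the paper: it derives the $\delta$-approximate weighted-nearest-neighbor property at $\rep\varrho$ from the $\delta$-optimality conditions, transfers it to every point of $\varrho$ via the $+\delta$ integer perturbation of the weights and property (P1) (this is exactly the paper's Lemma~\ref{lemma:A_delta_prop}), and concludes by the weighted-cost/duality computation (the paper's Lemma~\ref{lemma:delta_WNN}, combined as Lemma~\ref{lemma:delta_WNN_main}). Your caveat about ties is handled in the paper by the fact that each region of $\disc{\delta}$ is a cell of the overlay of all diagrams $\vd_w(B)$, so it lies entirely inside the (closed) Voronoi cell of $b$ under the perturbed weights.
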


Let $y(\cdot)$ denote the set of dual weights maintained by our algorithm at the beginning of scale $\delta$.
For any point $b\in B$ and any region $\varrho\in \disc{\delta}$, we define a \emph{slack} on condition~\eqref{eq:feasibility_non_matching} for the pair $(\varrho,b)$, denoted by $s_\delta(\varrho, b)$, as
\[s_\delta(\varrho, b):=\floor{\frac{\distance(\rep\varrho,b) + \delta - y(b) + y_\delta(\rep\varrho)}{\delta}}\delta.\]

In the following, we describe the discretization of the continuous distribution into $\disc{\delta}$ and relate it to the discrete OT instance that is constructed in step (i) of our algorithm. Furthermore, we relate the distance $\distance_\delta$ computed in our algorithm to the slacks $s_\delta$.

\vspace{0.5em}
\noindent{\bf Discretizing the continuous distribution.}
Let $B=\{b_1,b_2,\ldots, b_n\}$, and let $w=\langle w_1,\ldots, w_n\rangle$ be an $n$-dimensional vector representing a weight assignment to the points in $B$. 
We say that the vector $w$ is \emph{valid} if each $w_i$ is a non-negative integer multiple of $\delta$ and bounded by $\Delta$. Consider the set $\mbW_\delta$ of all valid vectors, i.e., $\mbW_\delta=(\delta\mbZ\cap[0,\Delta])^n$. For a valid vector $w\in \mbW_\delta$, let $\vd_w(B)$ denote the weighted Voronoi diagram constructed for the points in $B$ with weights $w$. The partitioning $\disc{\delta}$ is simply the overlay of all weighted Voronoi diagrams $\vd_w(B)$ across all valid weight vectors $w\in \mbW_\delta$ (See Figure~\ref{fig:vor2}).

\begin{figure}
    \centering
    \includegraphics[width=0.4\textwidth]{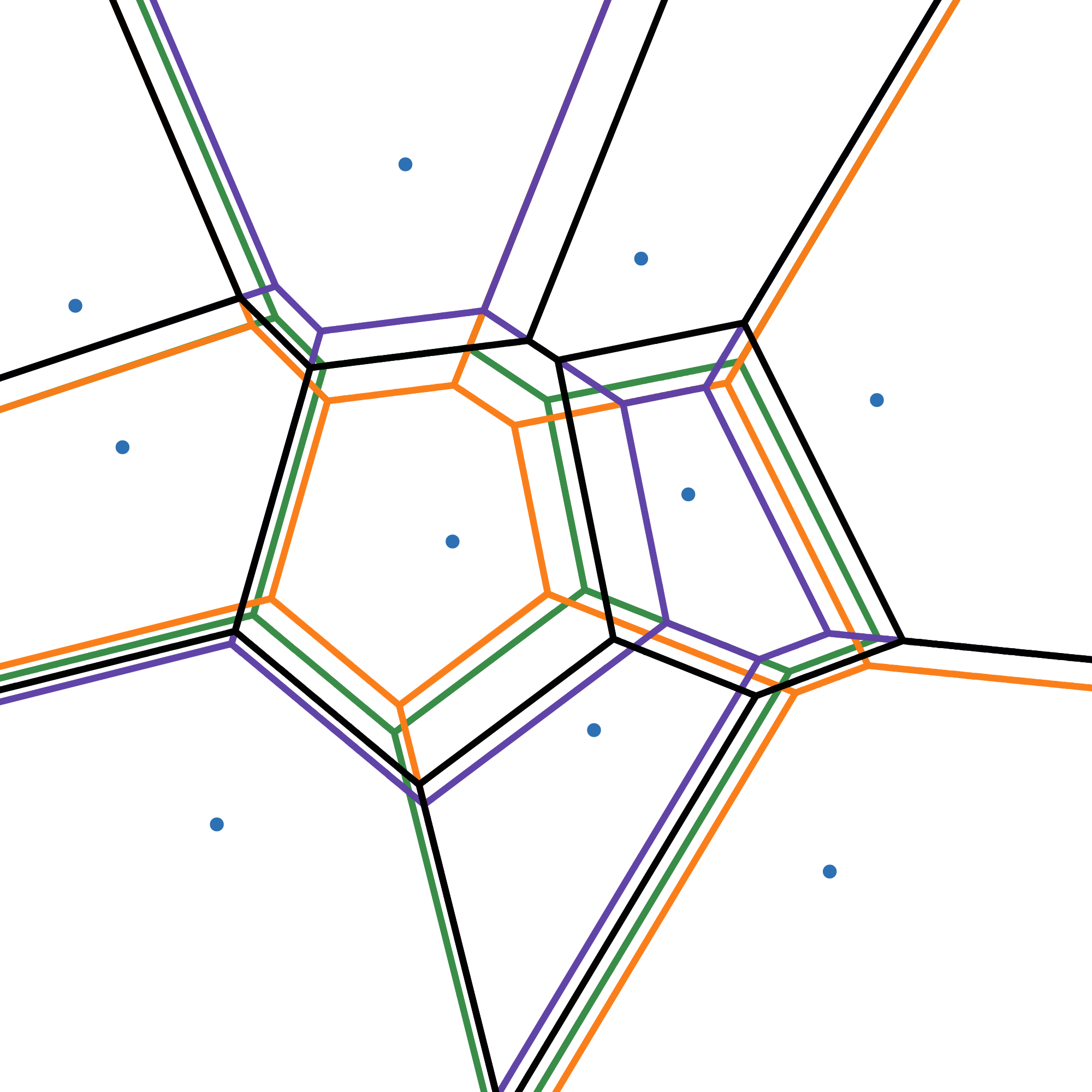} 
    \caption{The weighted Voronoi diagrams for four different weight vectors in $\mbW_\delta$. The ground distance in this figure is squared Euclidean.}
    \label{fig:vor2}
\end{figure}

At the beginning of scale $\delta$, while constructing the set $\mcA(\mcV)$, the dual weight of each point in $B$ maintained by our algorithm is obtained from scale $2\delta$ and hence, is an integer multiple of $2\delta$. Therefore, the Voronoi cells $V_b^i$ of each point $b\in B$ correspond to valid weight vectors. By construction of the set $\disc{\delta}$, each region $\varrho\in \disc{\delta}$ completely lies inside some region $\varphi\in\mcA(\mcV)$, i.e., each region in $\mcA(\mcV)$ consists of a collection of regions in $\disc{\delta}$.
In the next lemma, we establish a connection between the slacks and the distances $\distance_\delta$.

\begin{restatable}{lemma}{compressedgeo}\label{lemma:compressed_geometric}
    For any region $\varphi\in \mcA(\mcV)$, any region $\varrho\in \disc{\delta}$ inside $\varphi$, and any point $b\in B$, if $\distance_\delta(\rep\varphi, b)\le 4n$, then $s_\delta(\varrho, b)=\distance_\delta(\rep\varphi,b)\delta$. Furthermore, if $\distance_\delta(\rep\varphi, b)= 4n+1$, then $s_\delta(\varrho, b)\ge (4n+1)\delta$.
\end{restatable}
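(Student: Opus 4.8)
The plan is to unfold both sides of the claimed identity $s_\delta(\varrho,b)=\distance_\delta(\rep\varphi,b)\delta$ using their definitions, and show they reduce to the same quantity expressed through the nested Voronoi cells $V_b^i$. First I would fix a region $\varphi\in\mcA(\mcV)$, a subregion $\varrho\in\disc{\delta}$ inside it, and a point $b\in B$. Recall that $s_\delta(\varrho,b)=\floor{\frac{\distance(\rep\varrho,b)+\delta-y(b)+y_\delta(\rep\varrho)}{\delta}}\delta$, and that $y_\delta(\rep\varrho)=y(b_\varrho)-\distance(\rep\varrho,b_\varrho)-\delta$ where $b_\varrho$ is the weighted nearest neighbor of $\rep\varrho$ under weights $y(\cdot)$. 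Substituting, the expression inside the floor becomes $\frac{\distance(\rep\varrho,b)-y(b)-\bigl(\distance(\rep\varrho,b_\varrho)-y(b_\varrho)\bigr)}{\delta}=\frac{\distance_y(b,\rep\varrho)-\distance_y(b_\varrho,\rep\varrho)}{\delta}$, i.e. $s_\delta(\varrho,b)$ is exactly $\delta$ times the floor of the excess weighted distance from $b$ to $\rep\varrho$ over the minimum weighted distance. Since all $y$-values are integer multiples of $\delta$ at the start of scale $\delta$ (they came from scale $2\delta$), the key observation is that this excess, measured in units of $\delta$, is governed precisely by which of the shifted cells $V_b^1\subseteq V_b^2\subseteq\cdots\subseteq V_b^{4n+1}$ the point $\rep\varrho$ lies in: $\rep\varrho\in V_b^i$ iff raising $b$'s weight by $i\delta$ makes $b$ a weighted nearest neighbor of $\rep\varrho$, i.e. iff $\distance_y(b,\rep\varrho)-i\delta\le\distance_y(b',\rep\varrho)$ for all $b'$, i.e. iff $\distance_y(b,\rep\varrho)-\distance_y(b_\varrho,\rep\varrho)\le i\delta$.

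Next I would make precise that $\rep\varrho$ and $\rep\varphi$ lie in the same cells of $\mcV$: since $\varrho\subseteq\varphi$ and each $V_b^i\in\mcV$, the region $\varphi$ lies entirely inside or entirely outside each $V_b^i$, so $\rep\varrho\in V_b^i\iff\varphi\subseteq V_b^i\iff\rep\varphi\in V_b^i$; hence the "smallest index $i$ with $\rep\varrho\in V_b^i$" equals the corresponding index for $\rep\varphi$, which by the definition of $\distance_\delta$ is exactly $\distance_\delta(\rep\varphi,b)$ (with the convention that $\distance_\delta(\rep\varphi,b)=0$ when $\rep\varphi\in V_b^1$, and $=4n+1$ when $\rep\varphi\notin V_b^{4n+1}$). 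Combining with the previous paragraph: if $\distance_\delta(\rep\varphi,b)=i\le 4n$, then $\rep\varphi\in V_b^{i+1}\setminus V_b^i$ (or $i=0$ and $\rep\varphi\in V_b^1$), which by the characterization above means $i\delta< \distance_y(b,\rep\varrho)-\distance_y(b_\varrho,\rep\varrho)\le (i+1)\delta$ — wait, I need to be careful with the boundary and strictness, so I would instead argue directly that $\floor{(\distance_y(b,\rep\varrho)-\distance_y(b_\varrho,\rep\varrho))/\delta}=i$, which holds because the excess is a quantity that, combined with integrality of the $y$-weights and the $\delta$-granularity of $\disc\delta$, takes values consistent with membership in $V_b^{i+1}\setminus V_b^i$. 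This yields $s_\delta(\varrho,b)=i\delta=\distance_\delta(\rep\varphi,b)\delta$. For the second assertion, if $\distance_\delta(\rep\varphi,b)=4n+1$ then $\rep\varrho\notin V_b^{4n+1}$, so $\distance_y(b,\rep\varrho)-\distance_y(b_\varrho,\rep\varrho)>(4n+1)\delta$ would follow — more carefully, $\ge(4n+1)\delta$ after the floor — giving $s_\delta(\varrho,b)\ge(4n+1)\delta$.

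The main obstacle I anticipate is handling the boundary/strictness conventions in the definitions of $V_b^i$ (closed cells, ties broken by $\le$) against the floor function in the definition of $s_\delta$, and verifying that the granularity claims line up exactly — specifically, that the excess weighted distance $\distance_y(b,\rep\varrho)-\distance_y(b_\varrho,\rep\varrho)$ is "caught" in the half-open interval $(i\delta,(i+1)\delta]$ precisely when $\rep\varrho\in V_b^{i+1}\setminus V_b^i$, given that $y(\cdot)$ is $2\delta$-integral but the distances themselves need not be. I expect this to go through because $b_\varrho$ realizes the minimum weighted distance and raising a single weight by multiples of $\delta$ sweeps the nearest-neighbor threshold monotonically, but the bookkeeping at equality (when $\rep\varrho$ lies on the boundary between $V_b^i$ and $V_b^{i+1}$) needs the floor convention to be checked against the $\le$ in the Voronoi cell definition. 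The rest is a routine substitution and is not expected to cause trouble.
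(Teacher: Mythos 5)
Your proposal is correct and follows essentially the same route as the paper's proof: rewriting $s_\delta(\varrho,b)$ as $\floor{\bigl(\distance_y(\rep\varrho,b)-\distance_y(\rep\varrho,b_\varrho)\bigr)/\delta}\delta$, relating membership of $\rep\varrho$ in $V_b^i$ to whether this excess exceeds $i\delta$ via the weight vector that raises $y(b)$ by $i\delta$, and using that $\varrho\subseteq\varphi$ places $\rep\varrho$ and $\rep\varphi$ in exactly the same cells $V_b^i$. The boundary/strictness issue you flag is resolved just as the paper implicitly does: a representative lies in the interior of its cell, and the locus where the excess equals a positive multiple of $\delta$ is a piece of a Voronoi-cell boundary for a valid weight vector and hence part of the boundaries of $\disc{\delta}$ and $\mcA(\mcV)$, so when $\rep\varphi\in V_b^{i+1}\setminus V_b^i$ the excess lies strictly between $i\delta$ and $(i+1)\delta$ and the floor is exactly $i$.
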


Next, we show that for each scale $\delta$, the semi-discrete transport plan $\tau_\delta$ and dual weights $(y+\delta\hat{y})(\cdot)$ for the points in $B$ computed by our algorithm at the end of the scale is a $\delta$-optimal transport plan. 

\vspace{0.5em}
\noindent{\bf \texorpdfstring{$\delta$-}-optimality of the computed transport plan.}
Recall that $X_\delta$ denotes the set of representative points of the regions in $\mcA(\mcV)$ and $\hat{\mu}_\delta$ is the discrete distribution over $X_\delta$ computed by our algorithm at step (i).
In the following lemma, we show that any optimal transport plan $\sigma^*$ from $\hat\mu_\delta$ to $\nu$ under distance function $\distance_\delta$ does not transport mass on edges $(\rep\varphi, b)\in X_\delta\times B$ with cost $\distance_\delta(\rep\varphi, b)>4n$.

\begin{restatable}{lemma}{ndeltaslack}\label{lemma:only_2ndelta_slack}
    For any scale $\delta$, let $\sigma^*$ be any optimal transport plan from $\hat\mu_\delta$ to $\nu$. For any point $b\in B$ and any region $\varphi\in \mcA(\mcV)$, if $\sigma^*$ transports mass from $\rep\varphi$ to $b$, then $\distance_\delta(\rep\varphi, b)\le 4n$.
\end{restatable}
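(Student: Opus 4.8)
The plan is to argue by a cost-accounting / swapping argument: if an optimal transport plan $\sigma^*$ ever pushed mass along an edge $(\rep\varphi, b)$ with $\distance_\delta(\rep\varphi,b) = 4n+1$, we could reroute that mass and strictly decrease the cost, contradicting optimality. The central quantitative fact we need is an \emph{upper bound} of $4n$ on the distances $\distance_\delta$ used by the plan $\tau$ guaranteed to exist by the ``good initial estimate'' claim mentioned in the overview (the one asserting a transport plan whose dual slack on every pair with $\tau(\varrho,b)>0$ is at most $4n\delta$). Concretely, at the start of scale $\delta$ the dual weights $y(\cdot)$ inherited from scale $2\delta$ are good enough that there is \emph{some} transport plan $\widetilde\tau$ from $\mu$ to $\nu$ (equivalently, after aggregating over regions of $\disc\delta$ inside each cell of $\mcA(\mcV)$, a transport plan $\widetilde\sigma$ from $\hat\mu_\delta$ to $\nu$) with $s_\delta(\varrho,b)\le 4n\delta$ whenever $\widetilde\tau(\varrho,b)>0$. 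By Lemma~\ref{lemma:compressed_geometric}, $s_\delta(\varrho,b)\le 4n\delta$ forces $\distance_\delta(\rep\varphi,b)\le 4n$ on every edge carrying positive $\widetilde\sigma$-mass, so $\widetilde\sigma$ has finite $\distance_\delta$-cost, indeed cost at most $4n$ per unit mass, hence total $\distance_\delta$-cost at most $4n$ (total mass is $1$).

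Now I would run the exchange argument. Suppose for contradiction that $\sigma^*$ sends mass $\theta>0$ along an edge $(\rep\varphi,b)$ with $\distance_\delta(\rep\varphi,b)=4n+1$. Consider the symmetric difference of $\sigma^*$ and $\widetilde\sigma$ as a circulation; it decomposes into alternating cycles and the usual flow-difference structure, and since both plans satisfy the same marginals, the edge $(\rep\varphi,b)$ lies on some alternating cycle $C$ along which we can augment (push mass back along $\sigma^*$-edges, forward along $\widetilde\sigma$-edges) by a positive amount. Augmenting along $C$ removes at least $\theta'>0$ units of mass from the edge $(\rep\varphi,b)$, which has per-unit cost $4n+1$, and adds mass only to edges of $\widetilde\sigma$, each of per-unit cost at most $4n$, plus possibly rerouting among $\sigma^*$-edges of cost at most $4n+1$. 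A careful bookkeeping — the key point being that every cycle must balance its ``forward'' and ``backward'' mass, and along the cycle the maximum possible $\distance_\delta$-cost picked up is bounded by the cost of $\widetilde\sigma$-type edges — shows the net change in $\distance_\delta$-cost is strictly negative, contradicting optimality of $\sigma^*$. (Alternatively, and perhaps more cleanly: the optimal cost is $\le \plancost_{\distance_\delta}(\widetilde\sigma)\le 4n$ total; but if $\sigma^*$ used an edge of cost $4n+1$ even on an infinitesimal amount of mass, then since all costs are nonnegative integers, $\plancost_{\distance_\delta}(\sigma^*)\ge (4n+1)\theta$; this alone is not yet a contradiction, so one still needs the cycle argument to show that whenever such an edge is used, cost $> 4n$ total, or directly that rerouting helps.)

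The cleanest route is probably the second: show directly that $\sigma^*$, being optimal, cannot contain a positively-weighted edge of $\distance_\delta$-cost $4n+1$ \emph{if} there exists a feasible plan avoiding all such edges — which is exactly $\widetilde\sigma$. For this, note that if $\sigma^*$ and $\widetilde\sigma$ differ, their difference is a nonempty union of alternating cycles; pick one, $C$, containing a $\sigma^*$-forward edge $(\rep\varphi,b)$ with cost $4n+1$. Augmenting along $C$ in the direction that decreases mass on this edge yields a feasible plan $\sigma'$; the cost change is $\sum_{\text{fwd }\widetilde\sigma\text{-edges}} \distance_\delta - \sum_{\text{fwd }\sigma^*\text{-edges}} \distance_\delta$ (per unit of augmentation). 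Since all $\widetilde\sigma$-edges have cost $\le 4n$ and at least one removed $\sigma^*$-edge, namely $(\rep\varphi,b)$, has cost $4n+1$ while no $\sigma^*$-edge has cost exceeding $4n+1$, the per-unit change is at most $4n\cdot(\#\text{fwd }\widetilde\sigma) - \big(4n+1 + 4n(\#\text{fwd }\sigma^*-1)\big)<0$ once we observe $\#\text{fwd }\widetilde\sigma = \#\text{fwd }\sigma^*$ around a cycle in a bipartite graph — wait, this needs the cycle to alternate sides so the counts are equal. This equality of forward-edge counts along an alternating cycle in the bipartite incidence structure is the crux, and combined with it the strict inequality $4n < 4n+1$ closes the argument.

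\textbf{Main obstacle.} The main difficulty is making the exchange argument fully rigorous in the presence of the representative-point discretization: the plan $\widetilde\tau$ lives on $\disc\delta\times B$ (fine regions), while $\sigma^*$ and the lemma statement live on $X_\delta\times B = \mcA(\mcV)$-cells (coarse regions). I must first push $\widetilde\tau$ forward to a plan $\widetilde\sigma$ on $X_\delta\times B$ and verify (via Lemma~\ref{lemma:compressed_geometric}, since each $\varrho\in\disc\delta$ sits inside a unique $\varphi\in\mcA(\mcV)$ with matching slack up to $4n\delta$) that the cost bound $\distance_\delta\le 4n$ on $\widetilde\sigma$-support survives the aggregation — this is where the two halves of Lemma~\ref{lemma:compressed_geometric} are used. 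The other delicate point is the very existence of $\widetilde\tau$ with $s_\delta(\varrho,b)\le 4n\delta$ on its support; this is the ``inductive'' claim flagged in the overview (slack $\le 4n\delta$ carries over from scale $2\delta$ because the dual weights were updated by at most $O(n)\cdot 2\delta$ there, changing each slack by $O(n\delta)$), so I would either cite it as a preceding lemma or prove it inline by tracking how a $2\delta$-optimal plan's slacks scale when $\delta$ halves.
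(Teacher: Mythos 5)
Your high-level plan (compare $\sigma^*$ with a plan inherited from scale $2\delta$, find an alternating/residual cycle through the offending edge, and derive a negative-cost cycle contradicting optimality) is the same strategy the paper uses. But the quantitative core of your argument has a genuine gap. You base everything on the existence of a comparison plan $\widetilde\sigma$ whose supported edges have slack at most $4n\delta$, i.e.\ $\distance_\delta \le 4n$ \emph{per edge}. That bound is too weak for the cycle argument: a simple cycle in the bipartite residual network can contain up to $n$ edges from the support of $\widetilde\sigma$, so the added cost along the cycle can be as large as $4n\cdot n$, which is not beaten by removing a single edge of cost $4n+1$. Your bookkeeping hides this by writing the removed cost as $4n+1 + 4n(\#\text{fwd }\sigma^*-1)$, i.e.\ by assuming every \emph{other} removed $\sigma^*$-edge on the cycle costs at least $4n$; there is no justification for that (those edges may have cost $0$), and without it the per-unit change is bounded only by $4nm-(4n+1)$, which is nonnegative as soon as the cycle has $m\ge 2$ forward edges on each side.

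The paper closes exactly this gap with a much stronger per-edge bound: Lemma~\ref{lemma:2delta_slack} shows that the $2\delta$-optimal plan $\tau_{2\delta}$ carried over from the previous scale has slack at most $4\delta$ (hence, via Lemma~\ref{lemma:compressed_geometric}, $\distance_\delta\le 4$) on every supported pair — a constant, not $O(n)$. Aggregating $\tau_{2\delta}$ into a discrete plan $\sigma_{2\delta}$ on $X_\delta\times B$ and running your residual-cycle argument (the paper's Lemma~\ref{lemma:cycle} supplies the simple directed cycle through $(r^*,b^*)$), the cycle has at most $n$ edges supported by $\sigma_{2\delta}$, each of cost at most $4$, so the positive part of the cycle cost is at most $4n$, strictly less than the cost $4n+1$ of the removed edge; this yields the negative-cost cycle and the contradiction. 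So the missing ingredient in your proposal is precisely this constant per-edge slack bound for the previous scale's plan at the new scale $\delta$ — the $4n\delta$ aggregate statement from the overview (which you propose to prove by tracking $O(n\delta)$ dual-weight changes) does not suffice, and the $4n$ threshold in the lemma is calibrated to the combination ``cost $\le 4$ per inherited edge'' times ``at most $n$ such edges on a simple cycle''.
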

\begin{proof}
Let $\tau_{2\delta}, y(\cdot)$ be the $2\delta$-optimal transport plan computed by our algorithm at scale $2\delta$. 
Let $\sigma_{2\delta}$ denote a transformation of $\tau_{2\delta}$ into a discrete transport plan from $\hat{\mu}_\delta$ to $\nu$ by simply setting, for each region $\varphi\in \mcA(\mcV)$, $\sigma_{2\delta}(\rep\varphi, b):=\tau_{2\delta}(\varphi,b)$. Let $\sigma^*$ be any optimal transport plan from $\hat{\mu}_\delta$ to $\nu$, where the cost of each edge $(\rep\varphi, b)$ is set to $\distance_\delta(\rep\varphi, b)$. Define the \emph{residual network} $\mcG$ on the vertex set $X_\delta\cup B$ as follows.
For any pair $(r,b)\in X_\delta\times B$, if $\sigma_{2\delta}(r,b) > \sigma^*(r,b)$, then we add an edge directed from $b$ to $r$ with a capacity $\sigma_{2\delta}(r,b)-\sigma^*(r,b)$; otherwise, if $\sigma_{2\delta}(r,b) < \sigma^*(r,b)$, then we add an edge directed from $r$ to $b$ with a capacity $\sigma^*(r,b) - \sigma_{2\delta}(r,b)$. This completes the construction of the residual network.

For contradiction, suppose there is a pair $(r^*,b^*)\in X_\delta\times B$ such that $\sigma^*(r^*,b^*)>0$ and $\distance_\delta(r^*,b^*)>4n$. From Lemma~\ref{lemma:2delta_slack}, $\sigma_{2\delta}(r^*,b^*)=0$ since $\sigma_{2\delta}$ transports mass only on edges with distance at most $4$. Hence, in the residual network $\mcG$, there is a directed edge from $r^*$ to $b^*$ and by Lemma~\ref{lemma:cycle}, the edge $(r^*,b^*)$ is contained in a simple directed cycle $C=\langle b_1,r_1,\ldots, b_k,r_k\rangle$ in the residual network. Define the cost of the cycle $C$ as 
\[w(C):=\sum_{\langle b,r\rangle \in C}\distance_\delta(r,b) - \sum_{\langle r,b\rangle \in C}\distance_\delta(r,b).\]
Since $\sigma^*$ is an optimal transport plan from $\hat\mu_\delta$ to $\nu$, any cycle $C$ on the residual network have a non-negative cost. Note that the length of $C$ is at most $2n$ since $C$ is a simple
cycle and each point of $B$ appears at most once in $C$. Furthermore, by Lemma~\ref{lemma:2delta_slack}, any directed edge $(b_i,r_i)\in C$ has a distance at most $4$. Finally, by construction, all edges have a non-negative cost. Therefore, 
\begin{align*}
    0 &\le w(C) = \sum_{\langle b,r\rangle \in C}\distance_\delta(r,b) - \sum_{\langle r,b\rangle \in C}\distance_\delta(r,b)\le \sum_{\langle b,r\rangle \in C}4 - \distance_\delta(r^*,b^*)\le 4n - \distance_\delta(r^*, b^*) <0,
\end{align*}
which is a contradiction. Hence, $\sigma^*$ cannot transport mass on edges $(r^*, b^*)$ with cost $\distance_\delta(r^*, b^*)>4n$.
\end{proof}

Let $\sigma_\delta, \hat{y}(\cdot)$ be the optimal transport plan from $\hat\mu_\delta$ to $\nu$ computed at step (ii) of our algorithm, and recall that $\tau_\delta$ is the transport plan from $\mu$ to $\nu$ computed at the end of scale $\delta$. In the following lemma, we show that $\tau_\delta, (y+\delta \hat{y})(\cdot)$ is a $\delta$-optimal transport plan. 

\begin{lemma}\label{lemma:our-delta-optimal}
    For each scale $\delta$, let $(y+\delta\tilde{y})(\cdot)$ denote the set of dual weights for points in $B$ computed at step (iii) of our algorithm. Then, the transport plan $\tau_\delta, (y+\delta\tilde{y})(\cdot)$ is a $\delta$-optimal transport plan.
\end{lemma}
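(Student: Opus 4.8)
The plan is to verify that the pair $\tau_\delta, (y+\delta\hat{y})(\cdot)$ satisfies the two $\delta$-optimality conditions \eqref{eq:feasibility_non_matching} and \eqref{eq:feasibility_matching} for every $b\in B$ and every region $\varrho\in\disc{\delta}$. The natural strategy is to transfer the exact primal-dual feasibility conditions \eqref{eq:exact-feasibility_non_matching} and \eqref{eq:exact-feasibility_matching} satisfied by $\sigma_\delta, \hat{y}(\cdot)$ on the discrete instance $(\hat\mu_\delta,\nu,\distance_\delta)$ up to the semi-discrete setting, using Lemma~\ref{lemma:compressed_geometric} to translate between the integer costs $\distance_\delta(\rep\varphi,b)$ and the real slacks $s_\delta(\varrho,b)$, and Lemma~\ref{lemma:only_2ndelta_slack} to ensure that the edges actually used by $\sigma_\delta$ all have $\distance_\delta$-cost at most $4n$ (so that the ``clean'' case of Lemma~\ref{lemma:compressed_geometric} applies on matched edges).

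First I would fix a region $\varrho\in\disc{\delta}$ and let $\varphi\in\mcA(\mcV)$ be the unique cell of the arrangement containing it (this containment was established just before Lemma~\ref{lemma:compressed_geometric}), and let $r=\rep\varphi$ be its representative. I would unwind the definitions: the dual weight our algorithm assigns at the end of the scale is $y(b)+\delta\hat{y}(b)$, and the derived dual weight $y_\delta(\rep\varrho)$ of the region's representative is given by \eqref{eq:dual_a_assignment} using the dual weights $y(\cdot)$ at the \emph{start} of the scale. The key computation is to show that the quantity $\bigl(y(b)+\delta\hat{y}(b)\bigr)-y_\delta(\rep\varrho)-\distance(\rep\varrho,b)$ equals $s_\delta(\varrho,b)-\delta\hat{y}(b)+\delta\,(\text{correction from the floor in }s_\delta)$ plus lower-order terms — more precisely, I expect that after substituting \eqref{eq:dual_a_assignment} the left-hand side of \eqref{eq:feasibility_non_matching} reduces, modulo a $\delta$ slack, to the inequality $\hat{y}(b)-0\le \distance_\delta(r,b)$ which is exactly \eqref{eq:exact-feasibility_non_matching} for the discrete instance (here the dual weight of $r$ in the discrete instance plays the role of $0$, or whatever constant shift $\pdot$ produced, which cancels). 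Combining this with Lemma~\ref{lemma:compressed_geometric}, which says $s_\delta(\varrho,b)=\distance_\delta(r,b)\delta$ when $\distance_\delta(r,b)\le 4n$ and $s_\delta(\varrho,b)\ge(4n+1)\delta$ otherwise, yields \eqref{eq:feasibility_non_matching} in all cases.

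For the matching condition \eqref{eq:feasibility_matching}, I would use that $\tau_\delta(\varrho,b)>0$ forces $\sigma_\delta(r,b)>0$ (by the definition of $\tau_\delta$ from $\sigma_\delta$ at the end of scale $\delta$), hence by Lemma~\ref{lemma:only_2ndelta_slack} we have $\distance_\delta(r,b)\le 4n$, so Lemma~\ref{lemma:compressed_geometric} gives the exact identity $s_\delta(\varrho,b)=\distance_\delta(r,b)\delta$; then \eqref{eq:exact-feasibility_matching}, namely $\hat{y}(b)=\distance_\delta(r,b)$ (again up to the cancelling dual of $r$), translates into $y(b)+\delta\hat{y}(b)-y_\delta(\rep\varrho)\ge\distance(\rep\varrho,b)$ after unwinding $s_\delta$ and \eqref{eq:dual_a_assignment}. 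The main obstacle I anticipate is bookkeeping the $\delta$-shifts and floors correctly: the $-\delta$ in \eqref{eq:dual_a_assignment}, the floor in the definition of $s_\delta(\varrho,b)$, and the fact that $\hat y$ and $y$ are integer multiples of $\delta$ (resp.\ $2\delta$) all interact, and one must check that the direction of every rounding is favorable so that the relaxed non-matching slack is at most $\delta$ (not $2\delta$) while the matching condition holds exactly. I would also need to confirm that $\distance_\delta(r,b)$ is well-defined independently of which representative $r$ of $\varphi$ was chosen — this follows because all points of a cell $\varphi\in\mcA(\mcV)$ lie in the same set of the Voronoi regions $V_b^i$, which is precisely how $\mcA(\mcV)$ was defined. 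Once these are in place, the two conditions hold and the lemma follows.
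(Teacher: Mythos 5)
Your high-level skeleton matches the paper's: use exact feasibility \eqref{eq:exact-feasibility_non_matching}--\eqref{eq:exact-feasibility_matching} of the discrete solution $(\sigma_\delta,\hat y)$, Lemma~\ref{lemma:compressed_geometric} to translate $\distance_\delta(\rep\varphi,b)$ into the slack $s_\delta(\varrho,b)$, and Lemma~\ref{lemma:only_2ndelta_slack} to guarantee that matched edges have $\distance_\delta$-cost at most $4n$. However, there is a genuine gap in how you treat the dual weight of the cell representative on the $X_\delta$ side. You assert that $\hat y(\rep\varphi)$ ``plays the role of $0$, or whatever constant shift \pdot{} produced, which cancels.'' It does not cancel: $\hat y(\rep\varphi)$ is a cell-dependent quantity, not a global shift, and it must appear in the region's dual weight. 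If you verify the conditions against $y_\delta(\rep\varrho)$ derived from the start-of-scale weights $y(\cdot)$ (as you propose) and simply drop $\hat y(\rep\varphi)$, the computation yields
\[
(y+\delta\hat y)(b)-y_\delta(\rep\varrho)\;\le\;\distance(\rep\varrho,b)+\delta+\delta\hat y(\rep\varphi)
\quad\text{and}\quad
(y+\delta\hat y)(b)-y_\delta(\rep\varrho)\;\ge\;\distance(\rep\varrho,b)+\delta\hat y(\rep\varphi)
\]
for the non-matching and matching directions respectively, so you would need $\hat y(\rep\varphi)\le 0$ for the first and $\hat y(\rep\varphi)\ge 0$ for the second, which cannot hold simultaneously for all cells. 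The paper avoids this by defining intermediate region duals $y'_\delta(\rep\varrho)=y_\delta(\rep\varrho)+\delta\hat y(\rep\varphi)$, i.e.\ it carries the cell's discrete dual along, and proves the two conditions for the pair $\bigl((y+\delta\hat y)(\cdot),\,y'_\delta(\cdot)\bigr)$.

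A second, related omission: $\delta$-optimality of $\tau_\delta,(y+\delta\hat y)(\cdot)$ is defined with the region duals \emph{re-derived from the tested weights} $(y+\delta\hat y)$ via Equation~\eqref{eq:dual_a_assignment}, not with duals derived from the old $y$ or with the auxiliary $y'_\delta$. So even after establishing the conditions with some choice of region duals, one still has to show that replacing them by the canonical assignment \eqref{eq:dual_a_assignment} preserves both conditions; this is exactly the paper's Lemma~\ref{lemma:no_need_for_dual_a} (the non-matching condition holds because \eqref{eq:dual_a_assignment} takes a maximum-type assignment, and the matching condition survives because the reassignment can only increase the region dual, never decrease it past feasibility). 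Your proposal has no counterpart to this transfer step, and it is needed to land on the lemma as stated.
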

\begin{proof}
    Let $y_\delta(\cdot)$ denote the set of dual weights derived for the representative points of regions in $\disc{\delta}$ using Equation~\eqref{eq:dual_a_assignment} at the beginning of scale $\delta$. Consider a set of dual weights $y'_\delta$ that assigns, for each region $\varrho\in \disc{\delta}$ inside a region $\varphi\in \mcA(\mcV)$, a dual weight $y'_\delta(\rep\varrho):= y_\delta(\rep\varrho) + \hat{y}(\rep\varphi)$. First, we show that the transport plan $\tau_\delta$ along with dual weights $(y+\delta \hat{y})(\cdot)$ and $y'_\delta(\cdot)$ satisfy $\delta$-optimality conditions~\eqref{eq:feasibility_non_matching} and~\eqref{eq:feasibility_matching}. We then show that deriving the dual weights for the representative points of the regions in $\disc{\delta}$ from the dual weights $(y+\delta \tilde{y})(\cdot)$ as in Equation~\eqref{eq:dual_a_assignment} does not violate $\delta$-optimality conditions and conclude that the transport plan $\tau_\delta$ and dual weights $(y+\delta \hat{y})(\cdot)$ for points in $B$ is $\delta$-optimal.

For any region $\varphi\in \mcA(\mcV)$, any region $\varrho\in \disc{\delta}$ inside $\varphi$, and any point $b\in B$, 
\begin{itemize}
\item by Lemma~\ref{lemma:compressed_geometric}, $\distance_\delta(\rep\varphi, b)\delta \le s_\delta(\varrho, b)$. Combining with feasibility condition~\eqref{eq:exact-feasibility_non_matching},
\begin{align*}
    (y+\delta \tilde{y})(b) - y'_\delta(\rep\varrho)&=(y(b)+\delta \tilde{y}(b)) - (y_\delta(\rep\varrho)+\delta \tilde{y}(\rep\varphi))\\ &= (y(b)-y_\delta(\rep\varrho)) + \delta(\tilde{y}(b)-\tilde{y}(\rep\varphi))\nonumber \\& \le (y(b)-y_\delta(\rep\varrho)) +  \distance_\delta(\rep\varphi,b)\delta \le(y(b)-y_\delta(\rep\varrho)) + s_\delta(\varrho,b)\nonumber\\& \le (y(b)-y_\delta(\rep\varrho)) + (\distance(\rep\varrho,b)-y(b)+y_\delta(\rep\varrho)+\delta)\\ &= \distance(\rep\varrho,b)+\delta,
\end{align*}
leading to $\delta$-optimality condition~\ref{eq:feasibility_non_matching}.

\item if $\tau_\delta(\varrho, b)>0$, then $\sigma_\delta$ transports mass from $\rep\varphi$ to $b$, i.e., $\sigma_\delta(\rep\varphi, b)>0$. In this case, by Lemma~\ref{lemma:only_2ndelta_slack}, $\distance_\delta(\rep\varphi, b)\le 4n$ and by Lemma~\ref{lemma:compressed_geometric}, $s_\delta(\varrho,b)=\distance_\delta(\rep\varphi,b)\delta$. Combining with feasibility condition~\eqref{eq:exact-feasibility_matching}, 
\begin{align*}
    (y+\delta \tilde{y})(b) - y'_\delta(\rep\varrho)&=(y(b)+\delta \tilde{y}(b)) - (y_\delta(\rep\varrho)+\delta \tilde{y}(\rep\varphi))\\ &= (y(b)-y_\delta(\rep\varrho)) + \delta(\tilde{y}(b)-\tilde{y}(\rep\varphi))\nonumber \\& = (y(b)-y_\delta(\rep\varrho)) + \distance_\delta(\rep\varphi,b)\delta=(y(b)-y_\delta(\rep\varrho)) + s_\delta(\varrho,b)\\ &\ge (y(b)-y_\delta(\rep\varrho)) + (\distance(\rep\varrho,b)-y(b)+y_\delta(\rep\varrho))\nonumber\\&  = \distance(\rep\varrho,b),
\end{align*}
leading to $\delta$-optimality condition~\ref{eq:feasibility_matching}.
\end{itemize}

In Lemma~\ref{lemma:no_need_for_dual_a} in the appendix, we show that reassigning the dual weights as in Equation~\eqref{eq:dual_a_assignment} does not violate $\delta$-optimality conditions~\eqref{eq:feasibility_non_matching} and~\eqref{eq:feasibility_matching}; hence, $\tau, (y+\delta \hat{y})(\cdot)$ is $\delta$-optimal, as claimed.
\end{proof}

\subsection{Computing Optimal Dual Weights.}\label{sec:optimal_duals} In this section, we show that in addition to computing an $\varepsilon$-close transport cost in the semi-discrete setting, our algorithm can also compute the set of dual weights for the points in $B$ accurately, up to $O(\log \varepsilon^{-1})$ bits. To obtain such accurate set of dual weights, we execute our algorithm for $O(\log(n\Delta/\varepsilon))$ iterations so that the final value of $\delta$ when the algorithm terminates is at most $\varepsilon/5n$. In the following, we show that the dual weight computed for each point in $B$ at the last scale is $\varepsilon$-close to the optimal dual weight value.  

Note that any edge in the graph constructed in Step (i) of our algorithm has a cost at most $4n+1$. Consequently, in Step (ii), the largest dual weight returned by the primal-dual solver is at most $4n+1$\footnote{Any set of dual weights returned by the algorithm can be translated by a fixed value so that the smallest dual weight becomes $0$. Assuming this, it is easy to see that the largest dual weight is $4n+1$.} and in Step (iii), the dual weight of any point $b \in B$ changes by at most $(4n+1)\delta$. Since the dual weight of $b$ becomes the optimal dual weight in the limit, to bound the difference between the current dual weight and the optimal, it suffices if we bound the total change in the dual weights for all scales after scale $\delta\le \varepsilon/5n$. The difference between the optimal dual weight and the current dual weight is at most \[(4n+1)\sum_{i=1}^{\infty} \delta/2^i = (4n+1)\delta\le (4n+1)(\varepsilon/5n) \le \varepsilon.\] Therefore, after $O(\log (n\Delta/\varepsilon))$ iterations of the algorithm, the difference in the optimal dual weight $y(b)$ and the current dual weight of $b$ is at most $\varepsilon$.

\section{Approximation Algorithm for Semi-Discrete Optimal Transport} \label{sec:semi}

In this section, we present our second approximation algorithm for the semi-discrete setting that computes an $\varepsilon$-OT plan in $n \varepsilon^{-O(d)} \text{poly} \log (n)$ expected time. We begin by describing a few notations that help us in presenting our algorithm.
Let $\mu, \nu, A,$ and $B$ be the same as above. For any point $b\in \mbR^d$ and any $r\ge 0$, let $\ball{b}{r}$ denote the Euclidean ball of radius $r$ centered at $b$. Any pair of sets $P, Q \subset \mathbb{R}^d$ is called \emph{$\varepsilon$-well separated} if $\max\{\text{diam}(P), \text{diam}(Q)\} \leq \varepsilon \cdot \min_{(p,q) \in P\times Q} \eucl{p}{q}$. Given a set $S$ of $n$ points in $\mathbb{R}^d$ and a parameter $\varepsilon$, a collection $W = \{(P_1, Q_1), \dots, (P_k, Q_k)\}$ is an \emph{$\varepsilon$-well separated pair decomposition} ($\varepsilon$-WSPD) of $S$ if \emph{(i)} each pair $(P_i, Q_i)$ is $\varepsilon$-well separated, and \emph{(ii)} for any distinct $p,q \in S$, there exists a pair $(P_i, Q_i) \in W$ where $p \in P_i$ and $q \in Q_i$. Given a point set $B \subset \mathbb{R}^d$ and a hypercube $\cell$, we say that $\cell$ is \emph{$\varepsilon$-close to $b \in B$} if $\max_{a \in \cell} \eucl{b}{a} \leq \varepsilon \min_{b' \neq b \in B} \eucl{b'}{b}$. For any parameter $\delta>0$, let $\mbG_\delta$ denote an axis-aligned grid of side-length $\delta$ with a vertex at the origin, i.e., $\mbG_\delta:=[0,\delta]^d + \mbZ^d$. In the remainder of this section, we present our algorithm and analyze its correctness and efficiency.

\subsection{Algorithm.} Here is a brief overview of our algorithm. Let $H$ be a hypercube of side-length $\frac{4}{\varepsilon} \text{diam}(B)$ centered at one of the points of $B$. First, we partition $H$ into a collection of hypercubes such that for each $b \in B$ and all hypercubes $\cell$ except the ones that are $\varepsilon$-close to $b$, the following condition holds: for all $p,q \in \cell$, $\eucl{b}{p} \leq (1+\varepsilon) \eucl{b}{q}$. If a hypercube $\cell$ is $\varepsilon$-close to $b \in B$, then we greedily route the mass of $\mu$ inside $\cell$ to $b$. We then construct a discretization $\hat{\mu}$ of the remaining mass from $\mu$ by collapsing the mass $\mu(\cell)$ of each cell $\cell$ to its center point $c_\cell$. We compute an $\varepsilon$-OT plan $\sigma$ from $\hat{\mu}$ to $\nu$ using the algorithm describe in Section \ref{sec:discrete-OT} and transform $\sigma$ into a semi-discrete transport plan $\tau_\sigma$ by dispersing the mass transportation throughout each hypercube, as described in Section \ref{sec:scaling}. We now describe the algorithm in more detail.

\subsection*{Construction of hypercubes.} Let $W$ denote an $(\frac{\varepsilon}{4})$-WSPD of $B$. For every pair $(B_1, B_2) \in W$, we construct a set of hypercubes closely following the construction of an approximate Voronoi diagram \cite{arya2002linear}, as follows. Let $b_1 \in B_1$ and $b_2 \in B_2$ denote arbitrary representative points of $B_1$ and $B_2$, respectively. For any integer $i = 0, \dots, t = 2\log_2 (2d \varepsilon^{-1})$, define $\delta_i = 2^i \frac{\varepsilon}{2\sqrt{d}} \eucl{b_1}{b_2}$ and let $\partition_i(B_1,B_2)$ denote the set of hypercubes of the grid $\mbG_{\varepsilon \delta_i}$ intersecting $\ball{b_1}{\delta_i} \cup \ball{b_2}{\delta_i}$. For any cell $\cell \in \partition_i(B_1,B_2)$, if there exists a child cell $\cell' \subset \cell$ in $\partition_{i-1}(B_1,B_2)$, then we replace $\cell$ with its $2^d$ child cells to keep all hypercubes interior disjoint. Set $\partition = \bigcup_{(B_1, B_2) \in W} \bigcup_{i=0}^t \partition_i(B_1, B_2)$.

\subsection*{Transporting local mass.} For any point $b \in B$ and some sufficiently small constant $c > 0$, define its local neighborhood to be \[\neighborhood{\varepsilon}{b} = \left\{\cell \in \partition: \, \max_{a \in \cell} \eucl{b}{a} \leq c \,\varepsilon \min_{b' \neq b} \eucl{b'}{b}\right\}.\] For each $b \in B$, we transport the mass locally as follows. If $\nu(b) > 0$ and there exists a hypercube $\cell \subseteq \neighborhood{\varepsilon}{b}$ with $\mu(\cell) > 0$, we transport $\min\{\mu(\cell), \nu(b)\}$ mass from $\cell$ to $b$. If $\mu(\cell) \leq \nu(b)$, we set $\nu(b) = \nu(b) - \mu(\cell)$, delete $\cell$ from $\partition$, and repeat the above step. If $\mu(\cell) > \nu(b)$, we set $\nu(b) = 0$ and scale the mass in $\cell$ down so that $\mu(\cell) = \mu(\cell) - \nu(b)$. This process stops when either $\nu(b) = 0$ or no cell of $\partition$ lies inside $\neighborhood{\varepsilon}{b}$.

\subsection*{Discrete OT on remaining demand.}
Let $\mu'$ and $\nu'$ be the two distributions after transporting the local mass. Note that $\mu'$ and $\nu'$ are not necessarily probability distributions, i.e., the mass of each one of them might not add up to $1$; however, the total mass in $\mu'$ equals that of $\nu'$. Let $\partition$ be the set of remaining hypercubes. Let $\discrete{A} = \{\centerof\cell : \cell \in \partition\} \cup \{c_0\}$ for some $c_0 \in A \setminus H$, where $c_\cell$ denotes the center of $\cell$. Define $\discrete{\mu}(\centerof\cell) = \int_{\cell} \mu'(a) \, da$ for every hypercube $\cell \in \partition$ and let $\discrete{\mu}(c_0) = \int_A \mu'(a) da - \sum_{\cell \in \partition} \discrete{\mu}(\centerof\cell)$. We compute a $(1+\varepsilon)$-approximate discrete transport plan $\sigma$ from $\discrete\mu$ to $\nu'$ using the algorithm described in Section \ref{sec:discrete-OT}. We then convert $\sigma$ into a semi-discrete transport plan $\tau_\sigma$ in a straightforward manner, similar to Section~\ref{sec:scaling}. We return a transport plan $\widetilde{\tau}$ obtained from combining $\tau_\sigma$ with the local mass transportation committed in the previous step in a straight-forward manner. It is easy to confirm that the transport plan $\widetilde{\tau}$ is a transport plan from $\mu$ to $\nu$. This completes the description of our algorithm.

\subsection{Proof of Correctness.} 
In this section, we show that the transport plan computed by our algorithm is a $(1+\varepsilon)$-approximate transport plan from $\mu$ to $\nu$. Recall that as a first step, our algorithm constructs a family $\partition$ of hypercubes. In the following lemma, we enumerate useful properties of these hypercubes.

\begin{lemma} \label{lem:partition-properties}
    For each $\cell \in \partition$ the hypercube $\cell$ satisfies at least one of the following two conditions:
    \begin{enumerate}
        \item For any two points $a_1,a_2 \in \cell$ and any $b \in B$, $\eucl{a_1}{b} \leq (1+\varepsilon) \eucl{a_2}{b}$,

        \item There exists some $b \in B$ such that $\eucl{a}{b} \leq \varepsilon \min_{b' \neq b} \eucl{b'}{b}$ for all $a \in \cell$.
    \end{enumerate}
\end{lemma}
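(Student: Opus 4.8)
The plan is to analyze the two cases that arise in the construction of $\partition$: hypercubes that come from the WSPD-based grid construction $\partition_i(B_1,B_2)$ and are ``far'' from the relevant scale, versus hypercubes that fall inside a local neighborhood of some point of $B$. Concretely, fix a hypercube $\cell \in \partition$. By construction, $\cell$ (or one of its ancestor cells) belongs to $\partition_i(B_1,B_2)$ for some pair $(B_1,B_2) \in W$ and some level $i \in \{0,\dots,t\}$, and after the interior-disjointness refinement $\cell$ has side length $\varepsilon\delta_j$ for some $j \le i$, so $\mathrm{diam}(\cell) = \sqrt{d}\,\varepsilon\delta_j \le \sqrt d\, \varepsilon \delta_i$, and $\cell$ intersects $\ball{b_1}{\delta_i}\cup\ball{b_2}{\delta_i}$.

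First I would handle the ``far'' case. Take any $b \in B$ and any $a_1,a_2 \in \cell$. If $\eucl{a_2}{b}$ is large compared to $\mathrm{diam}(\cell)$ — specifically if $\eucl{a_2}{b} \ge \varepsilon^{-1}\mathrm{diam}(\cell)$ — then by the triangle inequality $\eucl{a_1}{b} \le \eucl{a_2}{b} + \mathrm{diam}(\cell) \le (1+\varepsilon)\eucl{a_2}{b}$, which is exactly condition~1. So it remains to treat the case where some point of $\cell$ (hence every point, up to the small diameter) is within distance $O(\varepsilon^{-1}\mathrm{diam}(\cell)) = O(\delta_j)$ of $b$. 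I would argue that in this regime $b$ must be ``close'' to $\cell$ in the sense of condition~2: since $\cell$ was placed at level $j$ by $\partition_j(B_1,B_2)$ because it meets $\ball{b_1}{\delta_j}\cup\ball{b_2}{\delta_j}$, the representative $b_1$ (say) is within $O(\delta_j)$ of $\cell$; and because $W$ is an $(\varepsilon/4)$-WSPD, the pair $(B_1,B_2)$ separating $b_1,b_2$ forces $\eucl{b_1}{b_2} \ge \frac{4}{\varepsilon}\max\{\mathrm{diam}(B_1),\mathrm{diam}(B_2)\}$, so $\eucl{b_1}{b_2}$ is comparable to the nearest-neighbor distance of $b_1$ among $B$; this is precisely the invariant that the approximate-Voronoi-diagram construction of Arya--Malamatos~\cite{arya2002linear} is built to maintain. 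Putting $\delta_j = 2^j \frac{\varepsilon}{2\sqrt d}\eucl{b_1}{b_2}$ into the chain of inequalities, when $j$ is small enough that the level-$j$ grid is used, $\mathrm{diam}(\ball{b_1}{\delta_j}) = O(\varepsilon)\cdot\eucl{b_1}{b_2}$ is a small fraction of $\min_{b'\ne b_1}\eucl{b'}{b_1}$, so every $a\in\cell$ has $\eucl{a}{b_1} \le \varepsilon\min_{b'\ne b_1}\eucl{b'}{b_1}$, giving condition~2 with $b = b_1$. Finally, the hypercubes created during the ``transporting local mass'' step are exactly those already inside some $\neighborhood{\varepsilon}{b}$, which by definition satisfy condition~2 (with the constant $c$ absorbed), so those need no separate argument; and the single extra point $c_0 \in A\setminus H$ lies outside $H$, whose side length is $\frac4\varepsilon\mathrm{diam}(B)$, so every $b \in B$ is at distance $\ge \frac2\varepsilon\mathrm{diam}(B)$ which dominates $\mathrm{diam}$ of its singleton cell, again yielding condition~1 trivially.

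The main obstacle I anticipate is pinning down the quantitative bookkeeping that guarantees the dichotomy is exhaustive: one must verify that the choice $t = 2\log_2(2d\varepsilon^{-1})$ of the number of levels is large enough that, for \emph{every} hypercube actually surviving in $\partition$, either it has been pushed down to a fine enough level that condition~2 applies, or the coarsest available level already places $b$ far enough that condition~1 applies, with no gap in between. Concretely, the radius $\delta_t = 2^t\frac{\varepsilon}{2\sqrt d}\eucl{b_1}{b_2} = \Theta(\varepsilon^{-1}\eucl{b_1}{b_2})$ must be large enough to cover all points $b\in B$ for which $\cell$ is not yet in the ``far'' regime relative to scale $\delta_t$, and one has to check the geometry of which grid cells of $\mbG_{\varepsilon\delta_i}$ meet the balls, together with the parent-child replacement rule, never leaves a cell that is simultaneously too far to satisfy condition~2 and too close to satisfy condition~1. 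I would isolate this as a short geometric claim about the ranges $[\delta_{i-1}\cdot c', \delta_i\cdot c'']$ of distances handled at each level, mirroring the standard AVD analysis in~\cite{arya2002linear}, and then the lemma follows by combining it with the two triangle-inequality computations above.
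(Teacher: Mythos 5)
Your ``far'' case is fine: if every point of $B$ is at distance at least $\varepsilon^{-1}\mathrm{diam}(\cell)$ from $\cell$, the triangle inequality gives condition~1, and this matches the intended use of the cell sizes. The genuine gap is in the ``close'' case, and it is not just bookkeeping. You choose the witness for condition~2 to be the representative $b_1$ of the WSPD pair that \emph{generated} $\cell$, and you justify this by asserting that $\eucl{b_1}{b_2}$ is comparable to $\min_{b'\neq b_1}\eucl{b'}{b_1}$. Well-separation does not give this: an $(\varepsilon/4)$-WSPD only bounds $\mathrm{diam}(B_1)$ and $\mathrm{diam}(B_2)$ relative to the inter-set distance, and $b_1$ may have another point of $B$ (inside $B_1$, or outside $B_1\cup B_2$) arbitrarily closer than $\eucl{b_1}{b_2}$, so ``$\ball{b_1}{\delta_j}$ is small compared to $\eucl{b_1}{b_2}$'' says nothing about being small compared to $b_1$'s nearest-neighbor distance. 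Worse, the point $b$ that defeats the far case need not be $b_1$ or $b_2$ at all. Concretely, take $b_1,b_2$ at distance $1$ and a pair $b_3,b_4$ at tiny mutual distance placed at distance about $1.5\,\delta_{t-1}$ from $b_1$: the level-$t$ cell of $\partition_t(B_1,B_2)$ containing $b_3,b_4$ is not refined within this pair (it misses $\ball{b_1}{\delta_{t-1}}\cup\ball{b_2}{\delta_{t-1}}$), condition~1 fails for it (points of the cell are arbitrarily close to $b_3\in B$), and your proposed witness $b_1$ fails condition~2 because the cell sits at distance $\Theta(\varepsilon^{-1})\eucl{b_1}{b_2}$ from $b_1$ while $\varepsilon\min_{b'\neq b_1}\eucl{b'}{b_1}\le\varepsilon\eucl{b_1}{b_2}$.

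The missing idea is that when some $b\in B$ lies within $\varepsilon^{-1}\mathrm{diam}(\cell)$ of $\cell$, the analysis must be keyed to the WSPD pair separating $b$ from its \emph{own} nearest neighbor in $B$, not to the pair that produced $\cell$: that pair covers the neighborhood of $b$ with grid cells of side length $\Theta(\varepsilon^2)\min_{b'\neq b}\eucl{b'}{b}$, and one then needs the interior-disjointness/refinement to act \emph{across} different WSPD pairs (as in the Arya--Malamatos AVD construction the paper cites) so that any cell of $\partition$ intruding into that neighborhood is at least that fine; only then does ``$b$ is close to $\cell$ relative to $\mathrm{diam}(\cell)$'' upgrade to ``$\cell$ lies in the $\varepsilon$-neighborhood of $b$,'' with the choice $t=2\log_2(2d\varepsilon^{-1})$ closing the gap between the two regimes. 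Your closing paragraph flags a difficulty but locates it in the number of levels within a single pair; the actual crux is the cross-pair refinement together with pinning the witness to the close point's own nearest-neighbor scale, and without it the close case does not go through. For calibration, the paper itself gives no proof of this lemma (it defers to the AVD construction), so supplying exactly this argument is the entire content of a correct proof; also, as a minor point, the local-mass step deletes cells from $\partition$ rather than creating them, and $c_0$ is an auxiliary support point of $\hat\mu$, not a hypercube of $\partition$, so neither needs the separate treatment you give them.
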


We then use a simple triangle inequality argument similar to \cite{av_scg04} to show that a greedy routing on $\neighborhood{\varepsilon}{b}$ only incurs another $(1+\varepsilon)$-relative error.

\begin{restatable}{lemma}{approxlocalflow}\label{lemma:approx-localflow}
    Let $\tau^*$ be an optimal transport plan between $\mu$ and $\nu$, and let $\widetilde{\tau}$ be the transport plan returned by the algorithm. There exists a transport plan $\hat{\tau}$ such that \textit{(i)} $\hat{\tau} = \widetilde{\tau}$ when restricted to $\bigcup_{b \in B} \neighborhood{\varepsilon}{b}$, and \textit{(ii)} $\plancost(\hat{\tau}) \leq (1 + \varepsilon) \plancost(\tau^*)$.
\end{restatable}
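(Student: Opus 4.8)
The plan is to build $\hat\tau$ by starting from the optimal plan $\tau^*$ and locally rerouting just enough mass so that it agrees with $\widetilde\tau$ on each local neighborhood $\neighborhood{\varepsilon}{b}$, while controlling the extra cost incurred. First I would recall exactly what $\widetilde\tau$ does inside $\bigcup_{b\in B}\neighborhood{\varepsilon}{b}$: for each $b$, the greedy step matches the mass of certain cells $\cell\subseteq\neighborhood{\varepsilon}{b}$ directly to $b$ (possibly partially for the last cell). So the ``target'' behavior is: every cell $\cell$ that our algorithm consumed in the local step must have its $\mu$-mass (the portion consumed) sent entirely to the corresponding $b$. The key geometric fact is Lemma~\ref{lem:partition-properties}(2) together with the definition of $\neighborhood{\varepsilon}{b}$: any point $a$ in such a cell satisfies $\eucl{a}{b}\le c\,\varepsilon\min_{b'\ne b}\eucl{b'}{b}$, so $b$ is essentially the unique cheap destination for that mass — routing it anywhere else costs $\Omega(\tfrac1\varepsilon)$ times more.

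The construction proceeds one local cell at a time, following the classical argument of \cite{av_scg04}. Fix $b$ and a cell $\cell$ consumed by the local step, with $m=\min\{\mu(\cell),\nu(b)\}$ the mass our algorithm routed from $\cell$ to $b$. In $\tau^*$, this mass $m$ from $\cell$ may be split among various points of $B$, and correspondingly $b$ receives its $\nu(b)$ units of mass from various places in $A$. I would reroute as follows: take the $m$ units that $\tau^*$ sends from $\cell$ (to destinations $b_1,b_2,\dots$) and redirect them to $b$; then take an equal amount of mass that $\tau^*$ delivers to $b$ from other source regions and redirect it to the destinations $b_1,b_2,\dots$ that were just vacated, matching amounts. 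This is a mass-conserving swap that makes $\hat\tau$ agree with $\widetilde\tau$ on $\cell\to b$. The cost change is bounded by a triangle-inequality estimate: each unit rerouted travels an extra distance at most (roughly) $2\,\mathrm{diam}(\cell) + $ (the distance it was already traveling in $\tau^*$), and since $\mathrm{diam}(\cell)\le\varepsilon\eucl{a}{b}$ for $a\in\cell$ and $b$ is within the local radius, the new edge lengths are within a $(1+O(\varepsilon))$ factor of the old ones. Summing the telescoping bound over all consumed cells and all $b$ gives $\plancost(\hat\tau)\le(1+O(\varepsilon))\plancost(\tau^*)$; rescaling $\varepsilon$ by a constant yields the stated $(1+\varepsilon)$.

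One subtlety to handle carefully is \emph{order dependence}: the greedy local step processes cells sequentially and may only partially consume the last cell (scaling its remaining mass), so I would phrase the rerouting to act only on the exact consumed portions and argue that the swaps for different $(b,\cell)$ pairs can be composed without interfering — each swap only touches mass that $\tau^*$ (or the already-modified plan) routes into or out of the current cell, and because the local neighborhoods of distinct points need not be disjoint, I must process them in the same order as the algorithm and track the running modified plan rather than $\tau^*$ directly. The main obstacle is precisely this bookkeeping: ensuring that after rerouting for one cell, the mass available to reroute for the next cell is still there and the cost bound still telescopes. I expect the cleanest route is to prove, by induction on the sequence of greedy steps, the invariant ``the current plan agrees with $\widetilde\tau$ on all cells consumed so far, is a valid transport plan from $\mu$ to $\nu$, and has cost at most $(1+\varepsilon)$ times a partial sum bounding $\plancost(\tau^*)$,'' with the inductive step being the single-cell swap analyzed above. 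The geometric estimate itself (step) is routine given Lemma~\ref{lem:partition-properties}; the inductive packaging is where the care lies.
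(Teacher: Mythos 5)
Your overall strategy---start from $\tau^*$ and perform mass-conserving swaps, bounded via the triangle inequality, until the plan agrees with $\widetilde{\tau}$ on the local neighborhoods---is the same as the paper's. The paper organizes it in two stages (first modify $\tau^*$ so that each $b$ absorbs the maximal possible mass from $\neighborhood{\varepsilon}{b}$, then permute \emph{which} mass inside $\neighborhood{\varepsilon}{b}$ is assigned to $b$ so as to match the algorithm's greedy cell choices), whereas you merge both into a single per-cell rerouting; that difference is cosmetic. Your per-swap estimate is also essentially the paper's: sending $u\in\cell\subseteq\neighborhood{\varepsilon}{b}$ to $b$ costs $\eucl{u}{b}\le\frac{\varepsilon}{1-\varepsilon}\eucl{u}{b'}$, and the displaced mass $v$ travels at most $\eucl{v}{b}+\bigl(1+\frac{\varepsilon}{1-\varepsilon}\bigr)\eucl{u}{b'}$, so the surplus is charged to the vacated edges leaving $\neighborhood{\varepsilon}{b}$.

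The gap is in the global cost accounting. You assert that the neighborhoods of distinct points ``need not be disjoint'' and therefore fall back on processing cells in the algorithm's order, comparing each swap against the running plan, and hoping the bound ``telescopes.'' The premise is false: since $\neighborhood{\varepsilon}{b}$ consists of cells within distance $c\,\varepsilon\min_{b'\neq b}\eucl{b'}{b}$ of $b$ and $\varepsilon<\frac12$, the neighborhoods are pairwise disjoint, and the paper states and uses exactly this fact. More importantly, without disjointness your argument does not yield $(1+O(\varepsilon))\plancost(\tau^*)$: each swap bounds the new cost by $(1+O(\varepsilon))$ times the cost of edges in the \emph{current}, already-modified plan, so if the same mass could be touched by several swaps the factors compound to $(1+O(\varepsilon))^{k}$ over $k$ swaps, which is useless. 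The paper avoids this because the surplus of the swap at $b$ is charged only to edges whose sources lie in $\neighborhood{\varepsilon}{b}$, and disjointness (together with the fact that mass routed to $b$ from its own neighborhood is never rerouted again) guarantees each edge is charged at most once, so the total loss is a single $(1+O(\varepsilon))$ factor. Your inductive invariant merely restates the bound that needs proving, so the inductive step as described would not go through; once you add the disjointness observation and this charging scheme, your single-pass construction becomes a valid compression of the paper's two-stage proof.
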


We next show that any mass outside of $H$ can be routed arbitrarily while incurring at most $(1+\varepsilon)$-relative error because any two points $b_1, b_2 \in B$ are approximately equidistant from any $a \in A \setminus H$.

\begin{restatable}{lemma}{approxlargedists}\label{lemma:approx-largedists}
    Let $\widetilde{\tau}$ be the semi-discrete transport plan constructed by our algorithm. Let $\tau$ be any arbitrary transport plan. Then,
    \[\sum_{b \in B} \int_{A \setminus \bigcup_{\cell \in \partition} \cell} \eucl{a}{b} \cdot \widetilde{\tau}(a,b) \; da \leq (1+\varepsilon) \sum_{b \in B} \int_{A \setminus \bigcup_{\cell \in \partition} \cell} \eucl{a}{b} \cdot \tau(a,b) \; da.\]
\end{restatable}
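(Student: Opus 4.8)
The plan is to show that any two points of $B$ are essentially indistinguishable as destinations for mass lying outside $H$, so that the transport cost restricted to $A\setminus\bigcup_{\cell\in\partition}\cell$ is insensitive to how that mass is routed. First I would observe that the region $A\setminus\bigcup_{\cell\in\partition}\cell$ is contained in $A\setminus H$: indeed, for the pair $(B_1,B_2)\in W$ with the largest separation $\eucl{b_1}{b_2}\ge \mathrm{diam}(B)$ (one such pair exists since the WSPD covers the diametral pair of $B$), the grids $\mbG_{\varepsilon\delta_i}$ for $i$ up to $t=2\log_2(2d\varepsilon^{-1})$ already tile a ball of radius $\delta_t = 2^t\frac{\varepsilon}{2\sqrt d}\eucl{b_1}{b_2}\ge \frac{2d}{\varepsilon}\,\mathrm{diam}(B)$ around $b_1$, which contains $H$ (a hypercube of side $\frac{4}{\varepsilon}\mathrm{diam}(B)$ centered at a point of $B$, hence of circumradius $\frac{2\sqrt d}{\varepsilon}\mathrm{diam}(B)$). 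So every cell of $H$ is covered by some $\partition_i(B_1,B_2)$, and thus any point $a$ with $\widetilde\tau(a,b)>0$ for some $b$ that is not covered by a hypercube must lie in $A\setminus H$. (I should double-check the exact constant in the definition of $H$ and of $\delta_t$; if it is off by a constant factor the statement still goes through after adjusting $t$ by an additive constant, which only changes hidden constants in the running time.)

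Next I would prove the pointwise distance comparison: for every $a\in A\setminus H$ and every pair $b,b'\in B$,
\[
\eucl{a}{b}\le (1+\varepsilon)\,\eucl{a}{b'}.
\]
This is the triangle inequality $\eucl{a}{b}\le \eucl{a}{b'}+\eucl{b'}{b}\le \eucl{a}{b'}+\mathrm{diam}(B)$ together with the lower bound $\eucl{a}{b'}\ge \mathrm{dist}(a,H)+ (\text{distance from }H\text{'s boundary to }b')\ge \tfrac{1}{\varepsilon}\mathrm{diam}(B)$ coming from the fact that $H$ has side length $\frac4\varepsilon\mathrm{diam}(B)$ and is centered at a point of $B$, so every point of $B$ is at distance at least $\frac{2}{\varepsilon}\mathrm{diam}(B)-\mathrm{diam}(B)\ge \frac1\varepsilon \mathrm{diam}(B)$ from the complement of $H$ (for $\varepsilon$ small). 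Hence $\mathrm{diam}(B)\le \varepsilon\,\eucl{a}{b'}$, giving the claim.

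With this in hand, the lemma follows by a direct comparison of the two integrals. Fix any point $b_0\in B$. For the left side, using the distance comparison with $b'=b_0$,
\[
\sum_{b\in B}\int_{A\setminus\bigcup\cell}\eucl{a}{b}\,\widetilde\tau(a,b)\,da
\le (1+\varepsilon)\sum_{b\in B}\int_{A\setminus\bigcup\cell}\eucl{a}{b_0}\,\widetilde\tau(a,b)\,da
=(1+\varepsilon)\int_{A\setminus\bigcup\cell}\eucl{a}{b_0}\,\mu^{\text{out}}(a)\,da,
\]
where $\mu^{\text{out}}(a)=\sum_{b}\widetilde\tau(a,b)$ is the density of $\mu$ restricted to the region $A\setminus\bigcup\cell$, which does not depend on the plan. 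Since $\tau$ is also a transport plan from $\mu$ to $\nu$, the same marginal identity gives $\sum_b \tau(a,b)=\mu(a)=\mu^{\text{out}}(a)$ on this region, so applying the distance comparison in the other direction,
\[
\int_{A\setminus\bigcup\cell}\eucl{a}{b_0}\,\mu^{\text{out}}(a)\,da
=\sum_{b\in B}\int_{A\setminus\bigcup\cell}\eucl{a}{b_0}\,\tau(a,b)\,da
\le (1+\varepsilon)\sum_{b\in B}\int_{A\setminus\bigcup\cell}\eucl{a}{b}\,\tau(a,b)\,da.
\]
Chaining the two inequalities yields the bound with a factor $(1+\varepsilon)^2$; replacing $\varepsilon$ by $\varepsilon/3$ throughout (or absorbing the quadratic term, since $(1+\varepsilon)^2\le 1+3\varepsilon$ for $\varepsilon\le1$) gives the stated $(1+\varepsilon)$.

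The main obstacle is the geometric bookkeeping in the first step — verifying that $A\setminus\bigcup_{\cell\in\partition}\cell$ really avoids $H$. This requires carefully matching the side lengths and radii of the WSPD-driven grids $\partition_i(B_1,B_2)$ against the size of $H$, and handling the child-refinement step that keeps the hypercubes interior-disjoint (which only subdivides cells and cannot uncover a point of $H$). Once the containment $A\setminus\bigcup\cell\subseteq A\setminus H$ is established, the rest is the routine triangle-inequality estimate above, and the only subtlety is being consistent about whether one proves a clean $(1+\varepsilon)$ or a $(1+O(\varepsilon))$ that is then rescaled.
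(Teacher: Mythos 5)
Your proof is correct in substance, and its second half is the same argument the paper uses: on the uncovered region both $\widetilde{\tau}$ and $\tau$ have the same marginal $\mu(a)$, and all points of $B$ are within a $(1+\varepsilon)$ factor in distance from any uncovered $a$, so the two costs can be compared pointwise and then integrated. Where you diverge is the geometric step. The paper never shows that the cells cover $H$; it argues locally, per WSPD pair: if $a$ lies in no cell of $\partition$, then in particular it escapes the grids built for the pair separating $b_1$ and $b_2$, whose largest radius $\delta_t$ is about $\frac{2d^{3/2}}{\varepsilon}\eucl{b_1}{b_2}$, so $\eucl{a}{b_1}\ge\frac{1}{\varepsilon}\eucl{b_1}{b_2}$ for \emph{every} pair $b_1,b_2\in B$, and the triangle inequality gives $\eucl{a}{b_2}\le(1+\varepsilon)\eucl{a}{b_1}$ directly. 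Your route instead proves the stronger global claim $A\setminus\bigcup_{\cell\in\partition}\cell\subseteq A\setminus H$ via the diametral WSPD pair and then uses distance to the boundary of $H$; this works, but it is exactly the bookkeeping you flag as the main obstacle, the constants are borderline in low dimension (at $d=1$ the covering radius $\frac{2d^{3/2}}{\varepsilon}\eucl{b_1}{b_2}$ barely matches the circumradius of $H$ once you account for the representative offset and the center of $H$ not being $b_1$), and it is not needed. Two smaller points: the paper's comparison uses $\max_b\eucl{a}{b}\le(1+\varepsilon)\min_b\eucl{a}{b}$ together with the equal marginals to get a clean $(1+\varepsilon)$ in one step, whereas your detour through a fixed $b_0$ gives $(1+\varepsilon)^2$ and requires a rescaling of $\varepsilon$ (acceptable here, since the paper is itself loose with such constants elsewhere); and the approximate-equidistance claim only needs the pairwise bound $\eucl{a}{b_1}\ge\frac{1}{\varepsilon}\eucl{b_1}{b_2}$, not distance $\frac{1}{\varepsilon}\,\mathrm{diam}(B)$ from all of $B$, which is why the paper's version is both shorter and more robust.
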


Finally, we consider the mass that lies inside $H$ but does not lie in a cell of $\partition$ that is $\varepsilon$-close to a point of $B$ that has survived. We use the fact that all points within such a cell $\cell$ of $\partition$ are roughly at the same distance from a point of $B$, i.e. for any $p,q \in \cell$ where $\mu'(\cell) > 0$ and for any $b \in B$ where $\nu'(b) > 0$, $\eucl{p}{q} \leq (1+\varepsilon) \eucl{q}{b}$.

\begin{restatable}{lemma}{approxdiscretization}\label{lemma:approx-discretization}
    Let $\hat{\tau}'$ be a transport plan between $\mu'$ and $\nu'$ defined by $\hat{\tau}'(a,b) = \hat{\tau}(a,b)$ if $a \not \in \neighborhood{\varepsilon}{b}$ and $\hat{\tau}'(a,b) = 0$ otherwise. Then $\plancost(\tau_\sigma) \leq (1+\varepsilon) \plancost(\hat{\tau}')$.
\end{restatable}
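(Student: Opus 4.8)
The plan is to bound $\plancost(\tau_\sigma)$ by comparing it, through the discrete instance $(\discrete\mu,\nu')$, with $\plancost(\hat\tau')$. Since $\hat\tau'$ is itself a transport plan between $\mu'$ and $\nu'$, I would first \emph{lift} it to a discrete plan $\hat\sigma$ between $\discrete\mu$ and $\nu'$: for each $\cell\in\partition$ set $\hat\sigma(\centerof\cell,b)=\int_\cell\hat\tau'(a,b)\,da$, and route the remaining $\hat\tau'$-mass, which lies in $A\setminus\bigcup_{\cell\in\partition}\cell$, out of $c_0$. Mass conservation for $\hat\sigma$ is immediate from the marginal constraints of $\hat\tau'$ together with the definition of $\discrete\mu(c_0)$, so $\hat\sigma$ is feasible; hence, writing $\sigma^*$ for an optimal discrete plan between $\discrete\mu$ and $\nu'$ and using that $\sigma$ is the $(1+\varepsilon)$-approximate plan returned by the discrete OT subroutine, $\plancost(\sigma)\le(1+\varepsilon)\plancost(\sigma^*)\le(1+\varepsilon)\plancost(\hat\sigma)$. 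It then suffices to show $\plancost(\hat\sigma)\le(1+\varepsilon)\plancost(\hat\tau')$ and $\plancost(\tau_\sigma)\le(1+\varepsilon)\plancost(\sigma)$; chaining the three inequalities gives $\plancost(\tau_\sigma)\le(1+\varepsilon)^3\plancost(\hat\tau')$, and running the entire construction with a constant-factor smaller $\varepsilon$ (as is standard) yields the stated bound.

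Both remaining inequalities reduce to a single geometric claim: for every $\cell\in\partition$ and $b\in B$ such that $\hat\tau'$ (resp.\ $\sigma$) moves positive mass between $\cell$ and $b$, and for every $a\in\cell$, the distances $\eucl{\centerof\cell}{b}$ and $\eucl{a}{b}$ agree up to a factor $1+O(\varepsilon)$. Granting it, $\hat\sigma(\centerof\cell,b)\,\eucl{\centerof\cell}{b}=\eucl{\centerof\cell}{b}\int_\cell\hat\tau'(a,b)\,da\le(1+O(\varepsilon))\int_\cell\eucl{a}{b}\hat\tau'(a,b)\,da$, and summing over $\cell\in\partition$ and $b\in B$ bounds the portion of $\plancost(\hat\sigma)$ carried inside $\bigcup_\cell\cell$; symmetrically, dispersing $\sigma$ inside each cell in proportion to $\mu'$ to form $\tau_\sigma$ (as in \Secref{sec:scaling}) multiplies each edge's contribution by at most $\max_{a\in\cell}\eucl{a}{b}/\eucl{\centerof\cell}{b}\le 1+O(\varepsilon)$. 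The mass handled through $c_0$, which after the local-transport step is $\mu'$-mass lying outside $H$, contributes a term bounded exactly as in \Lemref{lemma:approx-largedists}, using that every $b\in B$ is $(1+\varepsilon)$-equidistant from every point of $A\setminus H$.

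The crux is the geometric claim. If $\cell$ satisfies condition~1 of \Lemref{lem:partition-properties} it is immediate, since then $\eucl{a_1}{b}\le(1+\varepsilon)\eucl{a_2}{b}$ for all $a_1,a_2\in\cell$ and all $b$, and $\centerof\cell\in\cell$. Otherwise $\cell$ satisfies condition~2 for some $b^*\in B$, i.e.\ $\max_{a\in\cell}\eucl{a}{b^*}\le\varepsilon M$ where $M:=\min_{b'\neq b^*}\eucl{b'}{b^*}$; by the approximate-Voronoi structure of the construction (each $\cell\in\partition_i(B_1,B_2)$ has side $\varepsilon\delta_i$ and intersects $\ball{b_1}{\delta_i}\cup\ball{b_2}{\delta_i}$, and near the sites of $B$ finer cells supersede coarser ones), such a $\cell$ has $\mathrm{diam}(\cell)=O(\varepsilon)\cdot\mathrm{dist}(\cell,b^*)=O(\varepsilon^2)\cdot M$. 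Then for $b\neq b^*$, $\eucl{a}{b}\ge M-\varepsilon M=(1-\varepsilon)M$, so $\mathrm{diam}(\cell)=O(\varepsilon)\,\eucl{a}{b}$ and the claim follows; and if $b=b^*$, then a positive mass between $\cell$ and $b^*$ forces $\cell$ to still carry $\mu'$-mass, which (because the local-transport step, whose neighborhoods $\neighborhood{\varepsilon}{\cdot}$ are pairwise disjoint, would otherwise have emptied it) forces either $\nu'(b^*)=0$---contradicting that mass is sent to $b^*$---or $\cell\notin\neighborhood{\varepsilon}{b^*}$, whence $\eucl{a}{b^*}\ge\max_{a'\in\cell}\eucl{a'}{b^*}-\mathrm{diam}(\cell)>c\varepsilon M-O(\varepsilon^2)M\ge\tfrac{c}{2}\varepsilon M$, again $\Omega(1/\varepsilon)$ times $\mathrm{diam}(\cell)$. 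In all cases $|\eucl{\centerof\cell}{b}-\eucl{a}{b}|\le\mathrm{diam}(\cell)=O(\varepsilon)\min\{\eucl{\centerof\cell}{b},\eucl{a}{b}\}$, giving the claim. The hard part is precisely this last step: bounding the diameters of the cells of $\partition$ near the points of $B$ and reconciling the constant-factor gap between the $c\varepsilon$ in the definition of $\neighborhood{\varepsilon}{b}$ and the $\varepsilon$ in condition~2 of \Lemref{lem:partition-properties}, which is what fixes the admissible range of the constant $c$.
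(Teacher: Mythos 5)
Your proposal follows essentially the same route as the paper's proof: lift $\hat{\tau}'$ to a discrete plan supported on the cell centers (and $c_0$), invoke the $(1+\varepsilon)$-guarantee of the discrete solver, and convert between continuous and discrete costs in both directions via the approximate-equidistance of points within a cell (Lemma~\ref{lem:partition-properties}), handling the mass outside the cells through Lemma~\ref{lemma:approx-largedists}. The differences are cosmetic rather than substantive: you chain multiplicative $(1+O(\varepsilon))$ factors and rescale $\varepsilon$ where the paper tracks the difference $\plancost(\tau_\sigma)-\plancost(\hat{\tau}')$ to obtain a $\frac{9}{4}\varepsilon$ bound, and you make explicit the condition-2 case (cells $\varepsilon$-close to some $b^*$ that still carry $\mu'$-mass), which the paper handles only implicitly via the remark preceding the lemma.
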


Lemmas \ref{lemma:approx-localflow}-\ref{lemma:approx-discretization} together imply that our algorithm returns an $\varepsilon$-OT plan.

\begin{lemma}
    Let $\widetilde{\tau}$ be the transport plan computed by our algorithm, and let $\tau^*$ be an optimal transport plan between $\mu$ and $\nu$. Then $\plancost(\widetilde{\tau}) \leq (1+\varepsilon) \plancost(\tau^*)$.
\end{lemma}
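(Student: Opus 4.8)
The plan is to assemble the final bound by combining the four preceding lemmas, each of which isolates a different part of the domain $A$ and shows that routing on that part incurs at most a $(1+\varepsilon)$ factor relative to an optimal (or near-optimal) plan. First I would partition the support $A$ into three pieces: the \emph{local part} $L = \bigcup_{b \in B} \neighborhood{\varepsilon}{b}$, the \emph{outside part} $O = A \setminus H$ (equivalently, since $H \supseteq \bigcup_{\cell \in \partition}\cell$, the mass lying outside all cells of $\partition$), and the \emph{discretized part} $D$, consisting of cells of $\partition$ that are not $\varepsilon$-close to a surviving point of $B$. These pieces partition the mass that $\widetilde\tau$ must move.

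The key chain of inequalities would then run as follows. Start from the optimal plan $\tau^*$ and apply Lemma~\ref{lemma:approx-localflow} to obtain a plan $\hat\tau$ with $\plancost(\hat\tau) \le (1+\varepsilon)\plancost(\tau^*)$ that agrees with $\widetilde\tau$ on the local part $L$. Thus the cost of $\widetilde\tau$ restricted to $L$ equals the cost of $\hat\tau$ restricted to $L$. For the remaining mass, split $\hat\tau$ into its restriction $\hat\tau'$ to $A \setminus L$ (the plan from $\mu'$ to $\nu'$ appearing in Lemma~\ref{lemma:approx-discretization}). By Lemma~\ref{lemma:approx-discretization}, $\plancost(\tau_\sigma) \le (1+\varepsilon)\plancost(\hat\tau')$, which controls the discretized part $D$ and, more precisely, the whole of $A \setminus L$ handled by the discrete OT subroutine. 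Finally, Lemma~\ref{lemma:approx-largedists} (with $\tau$ taken to be $\hat\tau$ or $\hat\tau'$) guarantees that the portion of $\widetilde\tau$ moving mass outside $\bigcup_{\cell\in\partition}\cell$ costs at most $(1+\varepsilon)$ times the corresponding portion of the comparison plan, so that the $c_0$-collapsing step in the discretization does not hurt. Adding the local cost, the discretized cost, and the outside cost, and using that each is bounded by $(1+\varepsilon)$ times the corresponding piece of $\plancost(\tau^*)$ (after possibly absorbing constants into $\varepsilon$ by rescaling), gives $\plancost(\widetilde\tau) \le (1+O(\varepsilon))\plancost(\tau^*)$; replacing $\varepsilon$ by $\varepsilon/c$ for a suitable constant at the outset yields the stated $(1+\varepsilon)$ bound.

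I expect the main obstacle to be the bookkeeping of how the three regions interact under the successive modifications of the plan, so that no mass is double-counted and the comparison plans are genuinely feasible transport plans between the intermediate distributions $\mu', \nu'$ and $\hat\mu, \nu'$. In particular, one must check that Lemma~\ref{lemma:approx-localflow}'s $\hat\tau$ restricted to $A \setminus L$ is exactly the plan to which Lemma~\ref{lemma:approx-discretization} applies, and that the cost accounting in Lemma~\ref{lemma:approx-largedists} lines up with the cells that were collapsed to $c_0$ in the discretization step. Since $\varepsilon$-OT plan $\sigma$ produced by the discrete subroutine of Section~\ref{sec:discrete-OT} only satisfies $\plancost(\sigma) \le (1+\varepsilon)\plancost(\sigma^*)$ in expectation (for the Monte Carlo variant), the final statement should also be read as an expected-cost guarantee in that regime, but the structure of the proof is otherwise identical. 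The remaining steps — collecting the three bounds and rescaling $\varepsilon$ — are routine once the partition and the matching of comparison plans are set up correctly.
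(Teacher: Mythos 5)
Your proposal is correct and follows essentially the same route as the paper, which states this lemma as a direct consequence of Lemmas~\ref{lemma:approx-localflow}--\ref{lemma:approx-discretization}: compare $\widetilde{\tau}$ to the intermediate plan $\hat{\tau}$ on the local neighborhoods, bound $\plancost(\tau_\sigma)\le(1+\varepsilon)\plancost(\hat{\tau}')$ on the rest, and rescale $\varepsilon$ to absorb the accumulated constants. The only cosmetic difference is that the paper uses Lemma~\ref{lemma:approx-largedists} inside the proof of Lemma~\ref{lemma:approx-discretization} rather than as a separate additive term for the mass outside $H$, but as you note that region is already subsumed by the bound on $\tau_\sigma$, so your accounting is equivalent.
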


\subsection{Efficiency analysis.}\label{sec:semi-continuous-efficiency} Callahan and Kosaraju \cite{callahan1995decomposition} have shown that an $(\frac{\varepsilon}{4})$-WSPD $W$ of $S$ of size $O(n \varepsilon^{-d})$ can be constructed in $O(n(\varepsilon^{-d} + \log n))$ time. For each pair in $W$, our algorithm computes $O(\log \varepsilon^{-1})$ approximate balls, where for each approximate ball, our algorithm adds $O(\varepsilon^{-d})$ hypercubes to $\partition$. Therefore, the collection $\partition$ of hypercubes has size $O(n \varepsilon^{-2d} \log \varepsilon^{-1})$. Hence, partitioning the hypercube $H$ takes $O(n(\log n + \varepsilon^{-2d} \log \varepsilon^{-1}))$ time. Furthermore, computing the mass of $\mu$ inside each hypercube take $O(n\varepsilon^{-2d}\log\varepsilon^{-1}Q)$ time. Finally, note that the discrete OT instance computed by our algorithm has size $O(n\varepsilon^{-2d}\log\varepsilon^{-1})$ and hence, can be solved in $O(n\varepsilon^{-4d-5}\log (n)\log^{2d+5}(\log n)\log(\varepsilon^{-1}))$ time using the algorithm in Section \ref{sec:discrete-OT} when the spread of $B$ is polynomially bounded, leading to Theorem~\ref{theorem:semi-continuous}.

\section{A Near-Linear \texorpdfstring{$\varepsilon$-}-Approximation Algorithm for Discrete OT} \label{sec:discrete-OT}

In this section, we present a randomized Monte-Carlo $(1+\varepsilon)$-approximation algorithm for the discrete OT problem. We now let $\mu, \nu$ be two discrete distributions with support sets $A$ and $B$, respectively, which are finite point sets in $\mathbb{R}^d$. Set $n = |A| + |B|$. We first present an overview of the algorithm, then provide details of the various steps, and finally analyze its correctness and efficiency. Our algorithm can be seen as an adaptation of the boosting framework presented by Zuzic~\cite{zuzic2021simple} to the discrete optimal transport problem; we present an $O(\log \log n)$-approximation algorithm for the discrete OT problem and then boost the accuracy of our algorithm using the multiplicative weights update method and compute a $(1+\varepsilon)$-approximate discrete OT plan. 

\subsection{Overview of the Algorithm.}
At a high level, we compute a hierarchical graph $\spanner = (V, E)$, where $V \supseteq A \cup B$ is a set of points in $\mathbb{R}^d$. The weight of an edge is the Euclidean distance between its endpoints. The construction of $\spanner$ is randomized, and $\spanner$ is a \emph{$(1+\varepsilon)$-spanner} in expectation, i.e., $d_\spanner(a,b)$, the shortest-path distance between $(a,b) \in A \times B$ in $\spanner$ satisfies the condition $\eucl{}{} \leq E[d_\spanner(a,b)] \leq (1+\varepsilon)\eucl{a}{b}$. We formulate the OT problem as a min-cost flow problem in $\spanner$ by setting $\eta(u) = \mu(u)$ if $u \in A$ and $\eta(u) = -\nu(u)$ if $u \in B$. Following a bottom-up greedy approach, we construct a flow $\sigma \colon V \to \mathbb{R}_{\geq 0}$ and dual weights $y: V \to \mathbb{R}$ that satisfy (C1) and (C2) with $\rho = a_1 \log \log n$, where $a_1 > 0$ is a constant:

\begin{description}
    \item[(C1)] $|y(u) - y(v)| \leq \rho \eucl{u}{v}$ \;\; $\forall (u,v)\in E$,
    \smallskip
    \item[(C2)] $\sum_{(u,v) \in E} \sigma(u,v) \eucl{u}{v} \leq \sum_{u \in V} y(u) \eta(u)$.
\end{description}

The first condition guarantees the dual solution $y$ is $\rho$-approximately feasible, while the second condition guarantees that $y$ is non-trivial and the flow $\sigma$ is a $\rho$-approximation. Using such a primal-dual solution, one can use multiplicative-weight-update method (MWU) to boost a $\rho$-approximate flow into a $(1+\varepsilon)$-approximate flow on $\spanner$ by making $O(\rho^2 \varepsilon^{-2} \log n)$ calls to our greedy primal-dual approximation algorithm. We also describe the multiplicative weights procedure in Section \ref{sec:MWU}. Once a $(1+\varepsilon)$-approximate flow is obtained in $\spanner$, then one can simply shortcut paths in $\spanner$ to obtain an $\varepsilon$-OT plan; see e.g. \cite{fox2022deterministic}.

We remark that a $(1+\varepsilon)$-spanner is not needed if only a $O(\log \log n)$-approximation is desired. An $O(\log \log n)$-OT plan can be constructed directly in $O(n \log \log n)$ time using our algorithm. We now describe the details of our algorithm.

\subsection{Constructing a spanner.}\label{subsec:improved_spanner}
We now define the construction of the graph $\spanner$, which is built upon a hierarchical partitioning of $\mathbb{R}^d$ and the tree $\tree$ associated to it.

\subsection*{Hierarchical partitioning.}
\label{subsec:improved_hierarchical_partitioning}
For simplicity, we refer to all $d$-dimensional hypercubes as cells. For any cell $\cell$, let $\ell_\cell$ and $c_\cell$ denote its side-length and center, respectively. Let $\Delta = \frac{\max_{p,q \in A \cup B} \eucl{p}{q}}{\min_{p,q \in A \cup B}\eucl{p}{q}}$ denote the \emph{spread} of $A \cup B$. Additionally, define $\grid(\cell, \ell)$ to be the grid that partitions $\cell$ into new cells of side-length $\ell$. Without loss of generality, assume $A \cup B \subseteq [0,\Delta]^d$.

Let $\cell^*$ be a randomly shifted cell of side-length $2\Delta$ containing all points in $A\cup B$, i.e., $\cell^* = [0,2\Delta]^d - x$ for some $x$ chosen uniformly at random from the hypercube $[0,\Delta]^d$. We construct a hierarchical partition of $\cell^*$ as follows. We designate $\cell^*$ as the root cell of $\tree$. For any cell $\cell$ of $\tree$, define $n_\cell := |(A \cup B) \cap \cell|$ as the number of points of $A \cup B$ contained within $\cell$. We construct $\tree$ recursively as follows. If $n_\cell\le\left(\eps^{-1}\log\log n\right)^{3d}$, $\cell$ is a leaf of $\tree$. Otherwise, using the grid $\grid_\cell = \grid\left(\cell, \ell_\cell/n_\cell^{\frac{1}{3d}}\right)$, we partition $\cell$ into smaller cells of side-length $\ell_\cell/n_\cell^{\frac{1}{3d}}$. We add all non-empty cells of $\grid_\cell$ to $\tree$ as the children of $\cell$ and denote them by $\children\cell$. The height $h$ of $\tree$ is $h=O(\log\log n)$.

For any cell $\cell$ of $\tree$, we define a set of $O((\eps^{-1}dh)^d)$ equal-sized subcells as follows. Define $\subcelldiam{\cell}=\frac{\eps \ell_\cell}{4dh}$ to be the side-length of the subcells of $\cell$. We add all the cells of the grid $\grid(\cell, \subcelldiam{\cell})$ that contain a point of $A \cup B$ as the subcells of $\cell$ and denote the resulting family by $\subcells\cell$.

\subsection*{Vertices and edges of the graph.} The vertex set of $\spanner$ consists of the points $A\cup B$ plus the center point of all non-empty cells and subcells of $\tree$. More precisely,
\[ V = (A\cup B)\cup\bigcup_{\cell\in\tree}\{\centerof\cell\}\cup \{\centerof{\subcell} : \subcell\in\subcells{\cell} \}. \]
The edge set of $\spanner$ consists of two sets of edges per cell of $\tree$.

\begin{enumerate}
    \item If $\cell$ is a non-leaf cell, let 
$\localinstance_\cell = \centerof\cell \cup \left(\bigcup_{\cell'\in\children\cell}\centerof{\cell'}\right) \cup \left(\bigcup_{\subcell\in\subcells\cell}\centerof{\subcell}\right)$
be the set of points composed of the center of $\cell$, centers of its children, and the centers of the subcells of $\cell$. Otherwise, let $\localinstance_\cell = c_\cell \cup ((A \cup B) \cap \cell)$.
We construct a $(1+\varepsilon)$-spanner $\spannercell\cell$ on $\localinstance_\cell$. We add all edges of $\spannercell\cell$ to $\spanner$ and refer to them as \emph{greedy edges}. Note that $|\localinstance_\cell| = |\children\cell| + |\subcells\cell| + 1 = O(n_\cell^{1/3} + (h/\varepsilon)^d)$ for any non-leaf cell.

\item In addition, for any non-leaf cell $\cell$, let $X_\cell = \bigcup_{\cell' \in \children\cell} \bigcup_{\subcell\in\subcells{\cell'}}\centerof\subcell$ be the set of centers of the subcells of the children of $\cell$.  Let $\spannercellp\cell$ be a $(1+\varepsilon)$-spanner constructed on the points in $X_\cell$. We add all the edges of $\spannercellp\cell$ to $\spanner$ and refer to them as \emph{shortcut edges}.
\end{enumerate}

Recall that the weight of every edge in $\spanner$ is the Euclidean distance between its endpoints. The greedy edges are the edges that our greedy algorithm uses to compute a flow, whereas the shortcut edges guarantee that the shortest-path distances in $\spanner$ are a $(1+\eps)$-approximation of the Euclidean distances in expectation. We remark that the shortcut edges are only necessary when applying the MWU method to obtain a $(1+\varepsilon)$-approximate transport plan, otherwise only greedy edges are necessary for a $O(\log \log n)$-approximation.

\begin{figure}
    \centering
    \includegraphics[width=.6\textwidth]{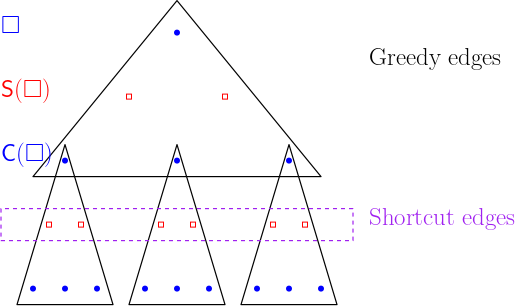}
    \caption{The hierarchical structure of the graph $\spanner$. The vertices of the graph are the centers of the cells (blue disks) and centers of subcells (red squares). For any cell $\cell$, the greedy edges form a spanner on its children and subcells (black triangles) and the shortcut edges form a spanner on the center of the subcells of its children (purple dashed rectangle).}
    \label{fig:improved_spanner}
\end{figure}

For any pair $(a,b) \in A \times B$, let $\pathof{a,b}$ be the shortest path in $\spanner$ from $a$ to $b$ with respect to Euclidean distances along each edge and $\lengthof{\pathof{a,b}}$ to be the cost of $\pathof{a,b}$, i.e. the sum of Euclidean distances of every edge in $\pathof{a,b}$.
The following lemma bounds the size of $\spanner$ and shows that the shortest path metric of $\spanner$, in expectation, $(1+\eps)$-approximates Euclidean distances.

\begin{restatable}{lemma}{discretegraph} \label{lemma:discrete-graph}
    The graph $\spanner$ contains $O(n h)$ vertices and $O(n\varepsilon^{-d} h)$ edges. The max degree of any vertex in $\spanner$ is at most $O(\varepsilon^{-d} \log n)$. Furthermore, for any pair of points $(a,b)$, $\lengthof{\pathof{a,b}} \geq \eucl{a}{b}$ and $\expect{\lengthof{\pathof{a,b}}} \leq (1+3\varepsilon) \eucl{a}{b}$.
\end{restatable}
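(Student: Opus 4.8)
The plan is to split the statement into three independent parts: the vertex/edge counts, the degree bound, and the distance distortion bounds.

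\emph{Size bounds.} I would first bound the number of vertices. The vertex set is $A \cup B$ together with one center per cell and one center per subcell. For the cell centers: every internal cell $\cell$ with $n_\cell$ points is subdivided into at most $n_\cell^{1/3d} \cdot (\text{number of nonempty children})$ children; since only nonempty children are kept, the total number of cells at any fixed level is at most $n$, and $\tree$ has height $h = O(\log\log n)$, giving $O(nh)$ cells overall. For the subcell centers: cell $\cell$ has at most $O((\eps^{-1}dh)^d)$ subcells, but only the nonempty ones are kept; the key observation is that a nonempty subcell must contain a point of $A\cup B$, so the number of nonempty subcells of $\cell$ is at most $\min\{n_\cell, O((h/\eps)^d)\}$, and summing over all cells at a fixed level this is $O(n)$, again giving $O(nh)$ subcell centers across all $h$ levels. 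Hence $|V| = O(nh)$. For the edges: each $\spannercell\cell$ is a $(1+\eps)$-spanner on $|\localinstance_\cell| = O(n_\cell^{1/3} + (h/\eps)^d)$ points, which by standard spanner constructions (e.g.\ \cite{callahan1995decomposition}) has $O(\eps^{-d} |\localinstance_\cell|)$ edges; the shortcut spanner $\spannercellp\cell$ on $X_\cell$ has $O(\eps^{-d}|X_\cell|)$ edges, and $|X_\cell| = \sum_{\cell'\in\children\cell}|\subcells{\cell'}|$, so summing the shortcut-edge counts over all internal $\cell$ just re-sums the total subcell count, $O(nh)$. Summing greedy edges uses $\sum_\cell |\localinstance_\cell| = O(nh + nh) = O(nh)$ after the same level-by-level argument. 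So $|E| = O(n\eps^{-d} h)$.

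\emph{Degree bound.} A vertex $v$ appears in $\localinstance_\cell$ for $O(h)$ cells $\cell$ (as a center it belongs to one cell's vertex pool plus its parent's $\localinstance$; as a subcell center it belongs to $\cell$ and to $\cell$'s parent via $X$; this is $O(1)$ cells, but a point of $A\cup B$ may sit in leaf instances along its root-to-leaf path — still $O(h)$), and within each spanner its degree is $O(\eps^{-d}\log(\text{local size})) = O(\eps^{-d}\log n)$. Multiplying, $O(\eps^{-d}\log n)$; I should double-check the $h$ factor does not appear, which it doesn't since each $v$ lies in only $O(1)$ spanners (one per role) — I would state this carefully.

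\emph{Distance bounds.} The lower bound $\lengthof{\pathof{a,b}} \ge \eucl{a}{b}$ is immediate: every edge weight is the true Euclidean distance, so any path has length at least the straight-line distance. For the expectation upper bound, I would exhibit a specific path of small expected length and use $\lengthof{\pathof{a,b}}$ being the \emph{shortest}. Let $\cell_{ab}$ be the lowest cell of $\tree$ containing both $a$ and $b$. Route from $a$ up to a subcell center near $a$ in some child of $\cell_{ab}$, hop via a shortcut edge in $\spannercellp{\cell_{ab}}$ to a subcell center near $b$, then down to $b$; along the way the ``up'' and ``down'' portions cost, at each of the $O(h)$ levels, at most the diameter of the cell at that level (routed through greedy spanner edges within each cell, losing only a $(1+\eps)$ factor), and the subcell side-length at level-$\cell$ is $\subcelldiam{\cell} = \eps\ell_\cell/(4dh)$, chosen exactly so that the total additive detour telescopes to at most $\eps \eucl{a}{b}$ provided $\ell_{\cell_{ab}} = O(\eucl{a}{b}/\eps)$. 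This last provision is where randomness enters: the random shift of $\cell^*$ ensures that with the right expectation, the lowest common cell $\cell_{ab}$ is not much larger than $\eucl{a}{b}$ — concretely, the probability that the grid at a level of side-length $\ell$ separates $a$ and $b$ is at least $1 - d\eucl{a}{b}/\ell$, a standard randomly-shifted-grid argument, so $\expect{\ell_{\cell_{ab}}}$ is $O(\eucl{a}{b})$ up to the branching subtleties. Combining the $(1+\eps)$ spanner losses with the $\eps\eucl{a}{b}$ telescoped detour and the expected cell size yields $\expect{\lengthof{\pathof{a,b}}} \le (1+3\eps)\eucl{a}{b}$.

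\emph{Main obstacle.} I expect the crux to be the expectation bound on $\ell_{\cell_{ab}}$: the non-uniform branching (side-length shrinks by $n_\cell^{1/3d}$, which varies cell to cell) means the levels are not a clean geometric sequence, so the randomly-shifted-grid separation argument must be applied level by level and the detour series re-summed with these irregular ratios. Controlling this — and verifying the $\eps/(4dh)$ choice of $\subcelldiam{\cell}$ makes the per-level detours sum to $\le \eps\eucl{a}{b}$ even with $h$ levels of irregular sizes — is the technical heart; the rest is bookkeeping.
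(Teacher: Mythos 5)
Your overall skeleton matches the paper's proof: the level-by-level counting of cells and subcells for the size and degree bounds, the triangle-inequality lower bound, and the three-part path (greedy edges up from $a$ to the center of the subcell $\subcell_a$ of the child of the least common ancestor $\cell_{ab}$, a shortcut-spanner hop, greedy edges down to $b$). The genuine gap is in how you take the expectation. You propose to bound the detour in terms of $\ell_{\cell_{ab}}$ and then argue $\expect{\ell_{\cell_{ab}}} = O(\eucl{a}{b})$ via the shifted-grid separation probability. That expectation bound is false for this tree: the probability that the least common ancestor sits at level $j$ is of order $\eucl{a}{b}/\ell_{j+1}$, while $\ell_j = n_\cell^{1/(3d)}\,\ell_{j+1}$ with $n_\cell$ as large as $n$, so $\expect{\ell_{\cell_{ab}}} \approx \eucl{a}{b}\sum_j \ell_j/\ell_{j+1}$ can exceed $\eucl{a}{b}$ by a polynomial factor --- this is exactly the ``branching subtlety'' you flag, and it does not go away. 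Conditioning on $\ell_{\cell_{ab}} = O(\eucl{a}{b}/\varepsilon)$ does not rescue the argument either, because on the complementary event the detour is still of order $\varepsilon^2 \ell_{\cell_{ab}}$ and its expected contribution is again governed by $\expect{\ell_{\cell_{ab}}}$.

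The paper's proof never bounds $\ell_{\cell_{ab}}$ at all: the up/down portions are charged to the subcell diameters $\subcelldiam{a},\subcelldiam{b}$ at the child-of-LCA level, and $\expect{\subcelldiam{a}}$ is computed directly. If the LCA is at level $j$, the relevant subcell lives in a level-$(j{+}1)$ cell and has diameter about $\varepsilon\ell_{j+1}/(2\sqrt{d}\,h)$, while $\prob{\level{a,b}=j} \le \sqrt{d}\,\eucl{a}{b}/\ell_{j+1}$; the factor $\ell_{j+1}$ cancels exactly, each level contributes $\varepsilon\eucl{a}{b}/(2h)$, and summing over the $h$ levels gives $\expect{\subcelldiam{a}} \le (\varepsilon/2)\eucl{a}{b}$. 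This cancellation at the \emph{child} level is the missing idea; no control of the LCA's own side-length, conditional or in expectation, is available or needed. Relatedly, your statement that the up/down portions cost ``at each level at most the diameter of the cell at that level'' must be restricted to levels strictly below the child of $\cell_{ab}$ (whose diameters are already $O(\varepsilon\ell_{j+1}/\log\log n)$ thanks to the large branching factor) plus the subcell diameter at the child level; if the child cell's own diameter $\sqrt{d}\,\ell_{j+1}$ enters the sum, the same computation only yields $O(dh)\,\eucl{a}{b}$ in expectation and the $(1+O(\varepsilon))$ factor is lost. The size, degree, and lower-bound parts of your proposal are fine and essentially identical to the paper's.
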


\subsection{Greedy Primal-Dual Algorithm.}\label{subsec:improved_oracle}

Given the graph $\spanner=(V,E)$ and a demand function $\eta \colon V \to \real$,
we compute a flow $\sigma$ on $\spanner$ satisfying the demand function $\eta$ and a set of dual weights $y$ satisfying the conditions (C1) and (C2) with a parameter $\rho = a_1 \eps^{-1} \log\log n$, where $a_1$ is a constant depending on $d$. It transports as much demand as possible among children of each cell, and routes all excess up the tree $\tree$. Due to the high branching factor of the cells in $\tree$, each subcell contains polynomially many child-subcells. Therefore, subcells cannot simply inherit the dual weights from cells as in \cite{khesin2019preconditioning}, since it might violate condition (C1). Instead, we create a min-cost flow instance for each cell consisting of the centers of its immediate descendants and compute a primal-dual flow on this instance.

\subsection*{Dual assignment and flow function.} We now compute the primal-dual pair $(\sigma,y)$ in a bottom-up manner. At any cell $\cell$ we assume that all excess mass has been routed to $c_{\cell'}$ for each child $\cell' \in \children\cell$, and then route all excess mass from the children of $\cell$ to $c_\cell$. We denote the value of this excess demand in a subtree rooted at $\cell$ as $\bar\eta_\cell$, and it is defined as follows. If $\cell$ is a leaf cell, then $\bar\eta_\cell = \eta(\centerof\cell) + \sum_{p \in (A \cup B) \cap \cell} \eta(p)$. Otherwise, $\bar\eta_\cell = \eta(\centerof\cell) + \sum_{\cell' \in \children\cell} \bar\eta_{\cell'} + \sum_{\subcell \in \subcells{\cell}} \bar\eta_{\subcell}$.

We wish to run Orlin's primal-dual algorithm for min-cost flow on $\spannercell\cell$ \cite{orlin1988faster}. However, we only assume that $\eta$ is a balanced demand function on the whole vertex set of $\spanner$. The total mass in $\spannercell\cell$ defined by $\eta$ may not be balanced on some subgraph $\spannercell\cell$. To resolve this issue, we make $c_\cell$ a sink node that absorbs all excess mass from $\eta$ in the subgraph rooted at $\cell$. We define a local demand function $\eta_\cell:\localinstance_\cell\rightarrow \mathbb{R}$ as follows. For each child $\cell'\in\children\cell$, $\eta_\cell(\centerof{\cell'}) = \bar\eta_{\cell'}$, for each subcell $\subcell \in \subcells\cell$, $\eta_\cell(\centerof\subcell) = \eta(\centerof\subcell)$, and, \[\eta_\cell(\centerof\cell) = -\sum_{\cell' \in \children\cell} \eta_\cell(\centerof{\cell'}) - \sum_{\subcell \in \subcells\cell} \eta_\cell(\centerof{\subcell}).\] Roughly speaking, the demand at the center of a child node $\cell'$ is the surplus/deficit in the subtree rooted at $\cell'$. The demand at the center of $\cell$ is set so that the net excess of demands in $\localinstance_\cell$ is rooted to $\centerof{\cell}$ and similarly, the net deficit of $\localinstance_\cell$ is supplied from $\centerof{\cell}$. The pair $(\spannercell\cell, \eta_\cell)$ is a balanced instance for the min-cost flow. We now run Orlin's primal-dual algorithm for uncapacitated minimum-cost flow to obtain a local primal-dual pair $(\sigma_\cell, y_\cell)$ on $(\spannercell\cell, \eta_\cell)$~\cite{orlin1988faster}. The combination of all flows computed at all cells of $\tree$ satisfies the demand function $\eta$.

Suppose $(\sigma_\cell, y_\cell)$ is the primal-dual flow computed on the local instance $(\spannercell\cell, \eta_\cell), $. For any point $u\in \localinstance_\cell$, we define the dual weight of $u$ as $y(u)\leftarrow y_\cell(u)-y_\cell(\centerof\cell)+y(\centerof\cell)$.
The definition of $y$ synchronizes all the local dual weights computed for each cell of the tree. Additionally, observe that each edge $(u,v)$ of $\spanner$ belongs to a unique local instance $(\spannercell\cell, \eta_\cell)$ of min cost flow. We simply define $\sigma(u,v) = \sigma_\cell(u,v)$, where $\cell$ is the cell for which $(u,v)$ is contained in $\localinstance_\cell$. This completes the construction of our greedy primal-dual algorithm.

\subsection{Multiplicative Weights Update (MWU) Framework.}\label{sec:MWU}
Using one of the known algorithms \cite{charikar2002similarity, indyk2003fast}, we first compute an estimate of the OT cost within a $d \log n$ factor in $O(n \log n)$ time, i.e. we compute a value $\tilde{g}$ such that $\opt \leq \tilde{g} \leq (d\log n) \cdot \opt$. Using this estimate, we perform an exponential search in the range $\left[\frac{\tilde{g}}{d\log n}, \tilde{g}\right]$ with increments of factor $(1+\varepsilon)$. For any guess value $g$, the MWU algorithm either returns a flow $\sigma \colon E \to \real$ with $\plancost(\sigma) \leq (1+\varepsilon) g$ or returns dual weights as a certificate that $g < \opt$. We now describe the MWU algorithm for a fixed value of $g$.

Set $T = 4\rho^2 \varepsilon^{-2} \log |E|$. The algorithm runs in at most $T$ iterations, where in each iteration, it maintains a pre-flow vector $\sigma^t$ satisfying $\plancost(\sigma^t) \leq g$. The pre-flow $\sigma^t$ need not route all demand successfully. Initially, set $\sigma^0(u,v) = \frac{g}{\eucl{u}{v} \cdot |E|}$ for each edge $(u,v)\in E$. For each iteration $t$, define the \emph{residual demand} $\eta_{\text{res}}^t(\cdot)$ as 
\[\eta_{\text{res}}^t(u) = \eta(u) - \sum_{v : (u,v) \in E} (\sigma^{t-1}(u,v) - \sigma^{t-1}(v,u)).\]
Let $(\sigma_{\text{res}}^t, y^t)$ be the primal-dual flow computed by our greedy algorithm for the residual demands $\eta_{\text{res}}^t$. Recall that $(\sigma_{\text{res}}^t, y^t)$ satisfies (C1) and (C2). 
If \ignore{the dual objective cost }$\langle \eta_{\text{res}}^t, y^t \rangle \leq \varepsilon g$, then (C2) implies that $\plancost(\sigma_{\text{res}}^t) \leq \varepsilon g$. Since \ignore{$\greedy$ returns a feasible flow $\sigma_{\text{res}}^t$}$\sigma_{\text{res}}^t$ routes the residual demands, the flow function $\sigma^t = \sigma^{t-1} + \sigma_{\text{res}}^t$ \ignore{is a feasible flow on}routes the original demand $\eta$ with a cost $\plancost(\sigma^t) \leq (1+\varepsilon) g$. In this case, the algorithm returns $\sigma^t$ as the desired flow and terminates.

Otherwise, $\langle \eta_{\text{res}}^t, y^t \rangle > \varepsilon g$ and \ignore{there is some non-trivial large demand which is not yet resolved by $\sigma^{t-1}$. W}we update the flow along each edge $e=(u,v)$ of $G$ based on the slack $\slack^t(u,v) = \frac{y^t(u) - y^t(v)}{\eucl{u}{v}}$ of $e$ with respect to dual weights $y^t$:
\[\sigma^t(u,v) \leftarrow \exp\left(\frac{\varepsilon}{2\rho^2} \slack^t(u,v)\right) \cdot \sigma^{t-1}(u,v).\]

We emphasize that flow along an edge is increasing if the slack is large. Then, one needs to rescale $\sigma^t$ so that its cost is bounded above by $g$. If the algorithm does not terminate within $T$ rounds, we conclude that the value of $g$ is an under-estimate of the cost of the min-cost flow; we increase $g$ by a factor of $(1+\varepsilon)$ and repeat the MWU algorithm. This completes the description of the MWU framework.

\subsection{Analysis.}

The following two lemmas prove that our algorithm satisfies conditions (C1) and (C2) for a sufficiently small approximation factor.

\begin{restatable}{lemma}{improveddualdistortion} \label{lem:loglog-dual-distortion}\label{improveddualdistortion}\label{lemma:improved_dual} 
    For any edge $(u,v) \in E$, $|y(u) - y(v)| \leq O(d^{3/2} h\varepsilon^{-1}) \eucl{u}{v}$.
\end{restatable}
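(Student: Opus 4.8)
The plan is to analyze the dual weight of an arbitrary vertex $u \in \localinstance_\cell$ by unwinding the recursive definition $y(u) \leftarrow y_\cell(u) - y_\cell(\centerof\cell) + y(\centerof\cell)$ along the path of cells from the root $\cell^*$ down to the cell $\cell$ whose local instance the edge $(u,v)$ belongs to. Since every edge $(u,v) \in E$ lies in exactly one local instance $(\spannercell\cell, \eta_\cell)$, it suffices to bound $|y(u)-y(v)|$ for such an edge. Writing $y(u) - y(v) = \big(y_\cell(u) - y_\cell(\centerof\cell) + y(\centerof\cell)\big) - \big(y_\cell(v) - y_\cell(\centerof\cell) + y(\centerof\cell)\big) = y_\cell(u) - y_\cell(v)$, the synchronization terms cancel, so the entire claim reduces to bounding $|y_\cell(u) - y_\cell(v)|$, the \emph{local} dual difference produced by Orlin's algorithm on $(\spannercell\cell, \eta_\cell)$.

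The first key step is to observe that Orlin's primal-dual algorithm maintains exact dual feasibility (conditions \eqref{eq:exact-feasibility_non_matching}–\eqref{eq:exact-feasibility_matching}) on the spanner $\spannercell\cell$: for every edge $(p,q)$ of $\spannercell\cell$ we have $|y_\cell(p) - y_\cell(q)| \le \eucl{p}{q}$. Hence $|y_\cell(u) - y_\cell(v)| \le \graphdistanceof{u,v}$ where the distance is the shortest-path metric \emph{within} $\spannercell\cell$; since $\spannercell\cell$ is a $(1+\varepsilon)$-spanner on $\localinstance_\cell$, this is at most $(1+\varepsilon)\eucl{u}{v}$. But $(u,v)$ being an edge of $\spanner$ does not force it to be an edge of $\spannercell\cell$ — it could instead be a \emph{shortcut} edge of $\spannercellp\cell$ for some ancestor cell; in that case the same argument applies to $\spannercellp\cell$ on $X_\cell$, again yielding $|y_\cell(u) - y_\cell(v)| \le (1+\varepsilon)\eucl{u}{v}$. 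Either way, for edges internal to the spanner we get a constant-factor bound, which is far stronger than claimed — so the real content must come from a different source of edges, or the bound must be shown to propagate without accumulating.

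I believe the main obstacle — and the reason the claimed bound is $O(d^{3/2} h \varepsilon^{-1})$ rather than $O(1)$ — is the following: Orlin's algorithm on $(\spannercell\cell, \eta_\cell)$ guarantees feasibility only with respect to the \emph{edge set of $\spannercell\cell$}, but $\spanner$ may contain an edge $(u,v)$ joining $u \in \localinstance_\cell$ to $v \in \localinstance_{\cell'}$ for a \emph{different} cell $\cell'$ (e.g. $u$ the center of a subcell of $\cell$ and $v$ a point of $A\cup B$ in a leaf, or a center appearing in two instances), along which no single local solver enforced feasibility. To bound $|y(u)-y(v)|$ for such cross-cell edges, I would bound the distance from each of $u, v$ to the cell-center it is synchronized through, using the geometry of the hierarchical partition: a subcell of $\cell$ has side-length $\subcelldiam{\cell} = \eps\ell_\cell/(4dh)$, and the diameter of $\cell$ at level $i$ from the root telescopes. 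The worst case is a vertex deep in the tree versus a center high up; the distortion picks up one factor of $h$ from summing the per-level diameters along the root-to-leaf path, a factor $\sqrt d$ from converting side-lengths to Euclidean diameters of cubes, a factor $d$ from the subcell refinement $\subcelldiam{\cell}$, and a factor $\varepsilon^{-1}$ because $(u,v)$ being an \emph{edge} of $\spanner$ means $\eucl{u}{v}$ is at least roughly $\varepsilon$ times the relevant cell diameter. Assembling: $|y(u)-y(v)| \le |y(u) - y(\centerof{\cell})| + |y(\centerof\cell) - y(\centerof{\cell'})| + |y(\centerof{\cell'}) - y(v)|$, each term bounded via spanner feasibility inside one instance and the telescoping diameter bound, and dividing through by $\eucl{u}{v} \ge \Omega(\varepsilon \ell)$ gives the $O(d^{3/2} h \varepsilon^{-1})$ factor. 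The delicate part is checking that the synchronization of dual weights across cells does not cause errors to \emph{compound multiplicatively} down the tree — one must verify that at each level the correction $-y_\cell(\centerof\cell)+y(\centerof\cell)$ exactly realigns the local solution, so the only residual error is the single additive hop from $v$ (or $u$) to its host center, not a sum over all $h$ levels of independently-incurred distortions.
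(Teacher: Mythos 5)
Your greedy-edge case matches the paper: for an edge of $\spannercell{\cell}$, exact dual feasibility of the local min-cost flow plus cancellation of the synchronization terms gives $|y(u)-y(v)|\le\eucl{u}{v}$. The shortcut edges are where your argument breaks. You assert that if $(u,v)$ is an edge of $\spannercellp{\cell}$ then ``the same argument applies to $\spannercellp{\cell}$,'' yielding $|y(u)-y(v)|\le(1+\varepsilon)\eucl{u}{v}$. This is false: the algorithm never solves a flow instance on $\spannercellp{\cell}$, and the two endpoints $u=\centerof{\subcell_1}$, $v=\centerof{\subcell_2}$ (with $\subcell_1\in\subcells{\cell_1}$, $\subcell_2\in\subcells{\cell_2}$ for distinct children $\cell_1,\cell_2\in\children{\cell}$) receive their dual weights from two \emph{different} local instances $\localinstance_{\cell_1}$ and $\localinstance_{\cell_2}$, so no solver ever enforced feasibility across that edge. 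If your claim were true, (C1) would hold with $\rho=O(1)$ and the $O(d^{3/2}h\varepsilon^{-1})$ factor would be pointless; the shortcut edges are exactly why the lemma carries that factor. Relatedly, the other ``cross-cell edges'' you hypothesize (a subcell center of one cell joined to a point of $A\cup B$ in a leaf, etc.) do not exist in $\spanner$: every edge is either a greedy edge inside one $\localinstance_{\cell}$ or a shortcut edge between subcell centers of two siblings.

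Your second half points in the right direction but does not nail the actual mechanism. The correct argument is entirely local to one parent and its two children: write $|y(u)-y(v)|\le |y(u)-y(\centerof{\cell_1})|+|y(\centerof{\cell_1})-y(\centerof{\cell_2})|+|y(\centerof{\cell_2})-y(v)|$, and bound each term by summing the case-1 bound over the greedy edges of the spanner paths inside $\spannercell{\cell_1}$, $\spannercell{\cell}$, $\spannercell{\cell_2}$ respectively (the synchronization offsets telescope along such a path), giving at most $(1+\varepsilon)$ times the corresponding Euclidean distances, i.e.\ $O(\sqrt d\,\sidelength{\cell_1})$ in total. The blow-up factor then comes solely from the lower bound $\eucl{u}{v}\ge \subcelldiam{\cell_1}=\Theta(\varepsilon\sidelength{\cell_1}/(dh))$, since $u$ and $v$ are centers of distinct subcells lying in different children. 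Your accounting misplaces this: the factor $h$ does not arise from ``summing per-level diameters along the root-to-leaf path'' (the detour never leaves the parent and its two children, so nothing accumulates over $h$ levels --- which is also why the compounding of synchronization errors you worry about at the end simply does not occur), and your statement that $\eucl{u}{v}$ is ``at least roughly $\varepsilon$ times the relevant cell diameter'' is off by the crucial $dh$ factor; it is at least the subcell side-length $\varepsilon\sidelength{}/(4dh)$. Without the correct lower bound on the shortcut edge length and the observation that the detour is confined to three local instances, the claimed $O(d^{3/2}h\varepsilon^{-1})$ bound is not established.
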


\begin{restatable}{lemma}{strongduality} \label{improvedstrongduality}
    $\sum_{(u,v) \in E} \sigma(u,v) \eucl{u}{v} \leq \sum_{u \in V} y(u) \eta(u)$.
\end{restatable}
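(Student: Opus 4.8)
\textbf{Proof proposal for Lemma~\ref{improvedstrongduality}.}

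The plan is to establish the global weak-duality inequality by summing the corresponding local inequalities over all cells of $\tree$ and showing that the telescoping contributions from the synchronization offsets cancel. For each cell $\cell$, Orlin's algorithm returns a primal-dual pair $(\sigma_\cell, y_\cell)$ on the balanced instance $(\spannercell\cell, \eta_\cell)$ that is \emph{exactly} optimal, hence it satisfies complementary slackness; in particular the standard LP weak/strong duality for min-cost flow gives
\[
  \sum_{(u,v)\in\spannercell\cell}\sigma_\cell(u,v)\eucl{u}{v} \;=\; \sum_{u\in\localinstance_\cell} y_\cell(u)\,\eta_\cell(u).
\]
First I would record this identity for every cell (it is where we use that the solver is an exact primal-dual solver satisfying \eqref{eq:exact-feasibility_non_matching}--\eqref{eq:exact-feasibility_matching}, so the dual objective equals the primal cost, not merely bounds it).

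Next I would sum over all cells $\cell\in\tree$. The left-hand side sums to $\sum_{(u,v)\in E}\sigma(u,v)\eucl{u}{v}$ exactly, because every edge of $\spanner$ lies in a unique local instance and $\sigma(u,v)=\sigma_\cell(u,v)$ there; here I should also note that greedy edges are the only edges carrying flow (the shortcut edges $\spannercellp\cell$ get zero flow from the greedy algorithm), so the sum over greedy instances captures all of $\plancost(\sigma)$. For the right-hand side I substitute $y_\cell(u) = y(u) + y_\cell(\centerof\cell) - y(\centerof\cell)$, i.e. $y_\cell(u) = y(u) + \alpha_\cell$ where $\alpha_\cell := y_\cell(\centerof\cell) - y(\centerof\cell)$ is a constant shift depending only on $\cell$. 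Then
\[
  \sum_{u\in\localinstance_\cell} y_\cell(u)\eta_\cell(u)
  = \sum_{u\in\localinstance_\cell} y(u)\eta_\cell(u) + \alpha_\cell \sum_{u\in\localinstance_\cell}\eta_\cell(u)
  = \sum_{u\in\localinstance_\cell} y(u)\eta_\cell(u),
\]
since $\eta_\cell$ is balanced on $\localinstance_\cell$ by construction. So the shifts drop out entirely, and we are left with $\sum_{(u,v)\in E}\sigma(u,v)\eucl{u}{v} = \sum_{\cell\in\tree}\sum_{u\in\localinstance_\cell} y(u)\eta_\cell(u)$.

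It remains to show $\sum_{\cell\in\tree}\sum_{u\in\localinstance_\cell} y(u)\eta_\cell(u) = \sum_{u\in V} y(u)\eta(u)$, and this is the step I expect to be the main obstacle — it is a careful bookkeeping argument about how the local demands $\eta_\cell$ aggregate into $\eta$. I would group the terms by vertex $u$: a vertex that is a subcell center $\centerof\subcell$ with $\subcell\in\subcells\cell$ contributes $y(u)\eta(\centerof\subcell)$ from the single instance $(\spannercell\cell,\eta_\cell)$; a vertex that is a child-center $\centerof{\cell'}$ appears as the point $\centerof\cell$ in its own instance (with demand $\eta_{\cell'}(\centerof{\cell'}) = -\sum(\text{children/subcell demands of }\cell')$) and as a child point in the parent instance (with demand $\bar\eta_{\cell'}$). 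Using the recursive definition $\bar\eta_{\cell'} = \eta(\centerof{\cell'}) + \sum_{\cell''\in\children{\cell'}}\bar\eta_{\cell''} + \sum_{\subcell\in\subcells{\cell'}}\bar\eta_\subcell = \eta(\centerof{\cell'}) - \eta_{\cell'}(\centerof{\cell'})$ (by unwinding the definition of $\eta_{\cell'}(\centerof{\cell'})$ and noting $\bar\eta_\subcell = \eta(\centerof\subcell)$ for a subcell since subcells have no children), the two contributions of $\centerof{\cell'}$ sum to $y(\centerof{\cell'})\big(\bar\eta_{\cell'} + \eta_{\cell'}(\centerof{\cell'})\big) = y(\centerof{\cell'})\eta(\centerof{\cell'})$. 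The analogous cancellation at leaf cells handles the original points of $A\cup B$, and the root cell $\cell^*$ has no parent but $\bar\eta_{\cell^*} = \sum_{u\in V}\eta(u) = 0$ since $\mu,\nu$ are balanced, so $\eta_{\cell^*}(\centerof{\cell^*})$ absorbs exactly $-\eta(\centerof{\cell^*})$ plus nothing extra. Summing these per-vertex identities over all $u\in V$ yields $\sum_{u\in V} y(u)\eta(u)$, completing the proof. I would be careful to treat the $\bar\eta_\subcell$ terms (a subcell is a leaf of the subtree in the sense that it has no children, only a center) and to double-check the sign conventions in $\eta_\cell(\centerof\cell)$, since that is where an off-by-sign error would most easily creep in.
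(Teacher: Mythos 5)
Your proposal is correct, and it reaches the conclusion by a genuinely different decomposition than the paper. The paper argues globally: it observes that shortcut edges carry no flow, that on every flow-carrying greedy edge complementary slackness gives $y(u)-y(v)=y_\cell(u)-y_\cell(v)=\eucl{u}{v}$ (the synchronization offset cancels in the difference), and then regroups $\sum_{(u,v)\in E}\sigma(u,v)\bigl(y(u)-y(v)\bigr)$ by vertex, invoking the fact---asserted in the algorithm description but not re-proved---that the combined flow $\sigma$ routes the global demand $\eta$, so the net outflow at each vertex is $\eta(w)$. You instead apply strong duality locally on each balanced instance $(\spannercell\cell,\eta_\cell)$, kill the offsets $\alpha_\cell$ using $\sum_{u\in\localinstance_\cell}\eta_\cell(u)=0$, and then prove the aggregation identity $\sum_{\cell\in\tree}\sum_{u\in\localinstance_\cell}y(u)\eta_\cell(u)=\sum_{u\in V}y(u)\eta(u)$ by telescoping $\bar\eta$ up the tree. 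The ingredients (complementary slackness, conservation, cancellation of the per-cell shift) are the same, but your route makes explicit the bookkeeping that the paper leaves implicit: your per-vertex telescoping of $\bar\eta_{\cell'}+\eta_{\cell'}(\centerof{\cell'})=\eta(\centerof{\cell'})$ is essentially a proof that the glued flow satisfies $\eta$, which is the step the paper's proof silently relies on. Both arguments in fact yield equality, not just the stated inequality. Two small remarks: your parenthetical about the root (``absorbs exactly $-\eta(\centerof{\cell^*})$'') is worded confusingly---what you need, and what your computation gives, is $\eta_{\cell^*}(\centerof{\cell^*})=\eta(\centerof{\cell^*})$ because $\bar\eta_{\cell^*}=0$ for a balanced global demand; and your observation that shortcut edges carry zero flow is stated more carefully than in the paper, whose proof contains the typo ``$\sigma(u,v)>0$'' where ``$=0$'' is intended.
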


Next, we bound the running time of our algorithm. For any cell $\cell$, the algorithm computes an exact primal-dual solution to min-cost flow on $\localinstance_\cell$ with demands $\eta_\cell(\cdot)$ in $O(|\localinstance_\cell|^3)$ time. Each cell $\cell$ satisfies $|\localinstance_\cell| = O\left(n_\cell^{1/3} + (h/\eps)^d\right)$.
The total number of points inside the cells of level $i$ is $n$; i.e, $\sum_{\cell \in \cellsoflevel{i}} n_\cell = n$. Furthermore, the total number of non-empty subcells of the cells at level $i$ is at most $n$; i.e, $\sum_{\cell\in\cellsoflevel{i}}O\left(h/\eps\right)^d \le n$. Therefore, \[\sum_{\cell \in \cellsoflevel{i}} |\localinstance_\cell|^3 = \sum_{\cell \in \cellsoflevel{i}} O\left(n_\cell + \left(h\varepsilon^{-1}\right)^{3d}\right) = O\left(n\left(h\varepsilon^{-1}\right)^{2d}\right).\] Summing over all levels of $\tree$, the total running time of the algorithm is $\tilde{O}\left(n\left(h/\eps\right)^{2d+1}\right)$.

\section*{Acknowledgement}
Work by P.A. and K.Y. has been partially supported by NSF grants IIS-18-14493, CCF-20-07556, and CCF-22-23870. Work by S.R. and P.S. has been partially supported by NSF CCF-1909171 and NSF CCF-2223871. We would like to thank the
anonymous reviewers for their useful comments.

\bibliographystyle{abbrv}
\bibliography{bib.bib}

\newpage
\appendix
\section{Missing Details and Proofs of Section~\ref{sec:scaling_algo_refined}}\label{sec:appendix_scaling}
In this section, we present the missing details and the proofs of the claims made in Section~\ref{sec:scaling_algo_refined}. 

\subsection{Weighted Nearest Neighbor.} Let $w:B\rightarrow \mbR^{\ge 0}$ denote a set of non-negative weights for the points in $B$. Recall that for any pair of points $(a,b)\in A\times B$, the weighted distance of $a$ and $b$ with respect to $w$ is $\distance_w(a,b)=\distance(a,b)-w(b)$. For any point $a\in A$, the \emph{weighted nearest neighbor (WNN)} of $a$ is a point $b\in B$ with the smallest weighted distance to $a$, i.e, a point $b\in B$ satisfying $\distance_w(a,b) = \min_{b'\in B} \distance_w(a,b')$. For any $\delta>0$ and any point $a\in A$, we say that a point $b\in B$ is a \emph{$\delta$-approximate weighted nearest neighbor ($\delta$-WNN)} of $a$ if $\distance_w(a,b)\le \min_{b'\in B} \distance_w(a,b') + \delta$.

\begin{lemma}\label{lemma:delta_WNN}
    Given a transport plan $\tau$ from $\mu$ to $\nu$ and a parameter $\delta>0$, suppose there exists a set of weights $w$ for the points in $B$ such that for any pair of points $(a,b)\in A\times B$ with $\tau(a,b)>0$, the point $b$ is a $\delta$-WNN of $a$ with respect to weights $w$. Then, $\tau$ is a $\delta$-close transport plan from $\mu$ to $\nu$.
\end{lemma}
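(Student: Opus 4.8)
\textbf{Proof plan for Lemma~\ref{lemma:delta_WNN}.}

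The plan is to exhibit a pair of dual weights --- one for the points of $B$, one (implicitly) for the continuous side --- that certifies near-optimality of $\tau$ via complementary-slackness-style inequalities, and then to invoke (or re-derive) the LP-duality bound that turns such a certificate into an additive $\delta$ guarantee. Concretely, the weights $w$ on $B$ are already given. For the continuous side, for each point $a \in A$ I would define a potential $v(a) := \min_{b' \in B}\distance_w(a,b') = \distance_w(a, \mathrm{WNN}(a))$ (or, to match the convention in the $\delta$-optimality definition, something shifted by a $\delta$ term, e.g. $v(a) = \min_{b'} \distance_w(a,b') - \delta$; the exact normalization is cosmetic). The hypothesis then says: for every $(a,b)$ with $\tau(a,b) > 0$, $b$ is a $\delta$-WNN of $a$, i.e. $\distance_w(a,b) \le v(a) + \delta$ (with $v$ normalized as the exact nearest-neighbor distance), equivalently $w(b) - v(a) \ge \distance(a,b) - \delta$. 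Simultaneously, by definition of $v$ as a minimum, for \emph{every} pair $(a,b)$ we have $w(b) - v(a) \le \distance(a,b)$. These two inequalities are exactly the $\delta$-relaxed dual feasibility conditions (analogues of~\eqref{eq:feasibility_non_matching} and~\eqref{eq:feasibility_matching}, but with the roles of the two sides swapped relative to the normalization there).

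Given that, the main computation is a weak-duality chain. Write
\[
\plancost(\tau) = \int_A \sum_{b \in B} \distance(a,b)\,\tau(a,b)\,da \le \int_A \sum_{b\in B}\bigl(w(b) - v(a) + \delta\bigr)\tau(a,b)\,da,
\]
using the matching-slack inequality on the support of $\tau$. Expanding and using the marginal constraints $\sum_{b}\tau(a,b) = \mu(a)$ pointwise and $\int_A \tau(a,b)\,da = \nu(b)$, the right-hand side equals $\sum_{b\in B} w(b)\nu(b) - \int_A v(a)\mu(a)\,da + \delta$ (since $\tau$ is a transport plan, the total mass is $1$, so $\int\!\sum \delta\,\tau = \delta$). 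Now for \emph{any} optimal plan $\tau^*$, the reverse (non-matching) inequality $w(b) - v(a) \le \distance(a,b)$ holds on all pairs, so
\[
\sum_{b\in B} w(b)\nu(b) - \int_A v(a)\mu(a)\,da = \int_A \sum_{b\in B}\bigl(w(b) - v(a)\bigr)\tau^*(a,b)\,da \le \int_A \sum_{b\in B}\distance(a,b)\,\tau^*(a,b)\,da = \plancost(\tau^*).
\]
Chaining the two displays gives $\plancost(\tau) \le \plancost(\tau^*) + \delta$, which is exactly the claim that $\tau$ is $\delta$-close. Alternatively --- and this is likely the intended route since the machinery is already in place --- I would simply observe that the weights $w$ together with the induced representative-point weights $y_\delta(\rep\varrho)$ from~\eqref{eq:dual_a_assignment} make $\tau$ (a discretization of) a $\delta$-optimal transport plan in the sense of conditions~\eqref{eq:feasibility_non_matching}--\eqref{eq:feasibility_matching}, and then cite Lemma~\ref{lemma:delta-optimal} directly, possibly after passing to a fine enough decomposition $\disc{\delta'}$ so that the ``same WNN on each region'' property (P1) holds and the representative-point potentials faithfully record $v(\cdot)$.

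The step I expect to require the most care is the bookkeeping on the continuous side: the $\delta$-optimality framework is phrased in terms of a \emph{discretization} $\disc{\delta}$ and representative points $\rep\varrho$, whereas the hypothesis here is a genuinely pointwise statement about every $a \in A$. I would need to check that one can choose a decomposition fine enough (satisfying (P1) for the weights $w$, which are not assumed to be integer multiples of $\delta$ --- so one may need to first argue (P1) holds for \emph{some} decomposition, or work with the pointwise potential $v$ directly and bypass $\disc{\delta}$ altogether) so that, on each region $\varrho$, every point's WNN agrees with $b_\varrho$, and that integrating the per-region inequalities recovers the pointwise ones without loss. Handling the $+\delta$ slack consistently under the normalization~\eqref{eq:dual_a_assignment} (which already bakes in a $-\delta$) is the other place where an off-by-$\delta$ slip could creep in; I would track constants explicitly there, or, to be safe, present the self-contained weak-duality argument of the previous paragraph, which sidesteps the discretization entirely and only uses the definition of a transport plan plus the given WNN hypothesis.
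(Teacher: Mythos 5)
Your self-contained weak-duality argument is essentially the paper's proof: the paper works with the weighted cost $\plancost_w(\tau)=\plancost(\tau)-\sum_{b\in B}w(b)\nu(b)$ and combines the $\delta$-WNN hypothesis on the support of $\tau$ with minimality of the weighted nearest-neighbor distance applied to $\tau^*$, which is exactly your potential-$v$ computation, and it bypasses the discretization $\disc{\delta}$ entirely (your alternative route through Lemma~\ref{lemma:delta-optimal} would in fact be circular, since that lemma is itself derived from this one, so your instinct to present the direct argument is the right one). The only slip is the sign of $v$: with $v(a)=\min_{b'}\distance_w(a,b')$ the two inequalities should read $\distance(a,b)\le w(b)+v(a)+\delta$ on the support of $\tau$ and $w(b)+v(a)\le\distance(a,b)$ for all pairs, a purely cosmetic fix that leaves the chain of inequalities and the conclusion $\plancost(\tau)\le\plancost(\tau^*)+\delta$ unchanged.
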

\begin{proof}
    For any transport plan $\tau'$, we define the weighted cost of $\tau'$, denoted by $\plancost_w(\tau')$, as the cost of the $\tau'$ where the edge costs are replaced with the weighted distance between the points, i.e., $\plancost_w(\tau'):=\sum_{b\in B}\int_A \distance_w(a,b)\tau'(a,b)\, da$. For any transport plan $\tau'$,
    \begin{align}
        \plancost_w(\tau')&=\sum_{b\in B}\int_A \distance_w(a,b)\tau'(a,b)\, da= \sum_{b\in B}\int_A (\distance(a,b)-w(b))\tau'(a,b)\, da\nonumber\\ &= \sum_{b\in B}\int_A \distance(a,b)\tau'(a,b)\, da-\sum_{b\in B} w(b)\int_A \tau'(a,b)\, da\nonumber\\ &= \plancost(\tau') - \sum_{b\in B}w(b)\nu(b).\label{eq:proof_delta_close_1}
    \end{align}
    For any point $a\in A$, suppose $b_a$ denotes any WNN of $a$. Furthermore, for any point $a\in A$, let $\map{\tau}{a}$ denote the set of all points $b\in B$ such that $\tau(a,b) > 0$. Let $\tau^*$ denote any optimal transport plan from $\mu$ to $\nu$. 
    \begin{align}
        \plancost_w(\tau) &= \int_A\sum_{b\in \map{\tau}{a}} \distance_w(a,b)\tau(a,b)\, da\le \int_A\sum_{b\in \map{\tau}{a}} (\distance_w(a,b_a)+\delta)\tau(a,b)\, da \nonumber \\ &= \delta +\int_A \distance_w(a,b_a)\mu(a)\, da \le \delta +\int_A\sum_{b\in B} \distance_w(a,b)\tau^*(a,b)\, da \nonumber\\ &= \plancost_w(\tau^*) + \delta.\label{eq:proof_delta_close_2}
    \end{align}
    Combining Equations~\eqref{eq:proof_delta_close_1} and~\eqref{eq:proof_delta_close_2}, 
    \[\plancost(\tau) = \plancost_w(\tau)+\sum_{b\in B}w(b)\nu(b) \le \plancost_w(\tau) + \delta +\sum_{b\in B}w(b)\nu(b) = \plancost(\tau^*) +\delta,\]
i.e., the transport plan $\tau$ is a $\delta$-close transport plan.
\end{proof}

\subsection{\texorpdfstring{$\delta$-}-Optimal Transport Plan.}\label{sec:feasibility}
Given a continuous distribution $\mu$ defined over a compact bounded set $A$, a discrete distribution $\nu$ defined on a point set $B$, and a parameter $\delta>0$, recall that $\disc{\delta}$ denotes a partitioning over the set $A$, which is the arrangement of all weighted Voronoi diagrams $\vd_w(B)$ for all valid weight vectors $w\in\mbW_\delta$. Recall that for each region $\varrho\in \disc{\delta}$, we refer to its representative point by $\rep{\varrho}$. 
In the following lemma, we show an important property of the partitioning $\disc{\delta}$.

\begin{lemma}\label{lemma:A_delta_prop}
    For any region $\varrho\in \disc{\delta}$, any pair of points $a_1, a_2\in \varrho$, and any valid weight vector $w\in \mbW_\delta$, any $\delta$-WNN of $a_1$ is also a $\delta$-WNN for $a_2$.
\end{lemma}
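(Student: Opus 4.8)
The plan is to unwind the definition of $\disc{\delta}$ as the overlay (arrangement) of all weighted Voronoi diagrams $\vd_w(B)$ over valid weight vectors $w \in \mbW_\delta$, and then use the key structural fact, already recorded just before the lemma, that within a single cell $\varrho$ of $\disc{\delta}$ the weighted nearest neighbor is constant for every valid weight vector. Concretely: fix a region $\varrho \in \disc{\delta}$, points $a_1, a_2 \in \varrho$, and a valid weight vector $w$. Suppose $b \in B$ is a $\delta$-WNN of $a_1$ with respect to $w$, i.e. $\distance_w(a_1, b) \le \min_{b' \in B}\distance_w(a_1, b') + \delta$. I want to conclude the same inequality with $a_1$ replaced by $a_2$.

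First I would reduce the $\delta$-WNN condition to an exact WNN condition for a \emph{nearby} valid weight vector. The idea: if $b$ is a $\delta$-WNN of $a_1$ under $w$, then the modified weight vector $w' := w + \delta e_b$ (increase $w(b)$ by $\delta$, leave all other coordinates alone) makes $b$ an exact WNN of $a_1$ under $w'$, since $\distance_{w'}(a_1, b) = \distance_w(a_1,b) - \delta \le \min_{b'\neq b} \distance_w(a_1,b') = \min_{b' \neq b}\distance_{w'}(a_1,b')$. Here I must check that $w'$ is still a valid weight vector in $\mbW_\delta = (\delta\mbZ \cap [0,\Delta])^n$. Validity of the integer-multiple-of-$\delta$ condition is immediate; the only subtlety is the upper bound $w(b) + \delta \le \Delta$, which needs the mild assumption that $w(b) < \Delta$ when it is the relevant nearest neighbor — I would either note this holds because a point at weight exactly $\Delta$ cannot be the constraining case given that $A\cup B$ has diameter $\Delta$ (so the $\delta$ of slack is never needed), or handle the boundary case separately by a limiting/perturbation argument. (This is the one place I expect to spend a sentence of care.)

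Once $b$ is an exact WNN of $a_1$ under the valid vector $w'$, property (P1) / the defining property of $\disc{\delta}$ — namely that all points of a single region $\varrho$ share the same weighted nearest neighbor under any valid weight vector, because $\varrho$ lies in a single cell of $\vd_{w'}(B)$ — gives that $b$ is also the exact WNN of $a_2$ under $w'$. Translating back: $\distance_{w'}(a_2,b) \le \distance_{w'}(a_2, b')$ for all $b'$, hence $\distance_w(a_2,b) - \delta \le \distance_w(a_2,b')$ for all $b' \neq b$, and trivially $\distance_w(a_2,b) \le \distance_w(a_2,b) + \delta$, so $\distance_w(a_2,b) \le \min_{b'\in B}\distance_w(a_2,b') + \delta$, i.e. $b$ is a $\delta$-WNN of $a_2$ under $w$. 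By symmetry the roles of $a_1, a_2$ are interchangeable, completing the proof.

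The main obstacle is not conceptual but bookkeeping: making sure the shifted weight vector $w'$ stays inside $\mbW_\delta$ (the $[0,\Delta]$ clamp) and that the overlay $\disc{\delta}$ genuinely refines $\vd_{w'}(B)$ for \emph{this particular} $w'$ — which is exactly what "overlay of all $\vd_w(B)$ over $w \in \mbW_\delta$" means, so it is true by construction provided $w' \in \mbW_\delta$. I would therefore front-load the validity check of $w'$ and then the rest is a two-line inequality chase.
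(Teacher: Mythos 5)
Your proposal is correct and follows essentially the same route as the paper's own proof: shift the weight of the candidate point $b$ up by $\delta$ to turn the $\delta$-WNN condition into an exact WNN condition for the valid vector $w_+$, invoke the fact that $\varrho$ lies in a single cell of $\vd_{w_+}(B)$ by construction of $\disc{\delta}$, and translate back. The one point you flag (whether $w(b)+\delta$ might exceed the $[0,\Delta]$ clamp) is glossed over in the paper as well, and your suggested one-sentence fix is a reasonable way to handle it.
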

\begin{proof}
    Suppose a point $b\in B$ is a $\delta$-WNN of the point $a_1$, i.e., for any point $b'\in B$,
    \begin{equation}\label{eq:proof_2d_WNN_1}
        \distance_w(a_1,b) - \delta \le \distance_w(a_1,b').
    \end{equation}
    Define the weights $w_+(\cdot)$ as a set of weights that assigns $w_+(b)=w(b)+\delta$ and $w_+(b')=w(b')$ to each point $b'\neq b$ in $B$. Note that $w_+$ is also a valid weight vector. For any point $b'\neq b$ in $B$, by Equation~\eqref{eq:proof_2d_WNN_1},
    \begin{align}
        \distance_{w_+}(a_1,b) &= \distance(a_1,b)-w_+(b) = \distance(a_1,b) - w(b) - \delta = \distance_w(a_1,b) - \delta \le \distance_w(a_1,b')\nonumber\\ &= \distance_{w_+}(a_1,b'). \label{eq:proof_2d_WNN_2}
    \end{align}
    In other words, $b$ is a WNN for the point $a_1$ with respect to weights $w_+$. Since $w_+\in \mbW_\delta$, by the construction of $\disc{\delta}$, the region $\varrho$ completely lies inside the Voronoi cell of $b$ in the weighted Voronoi diagram $\vd_{w_+}(B)$. As a result, $b$ is also a WNN for the point $a_2$ with respect to the weights $w_+(\cdot)$. Therefore, for any point $b'\neq b$ in B,
    \begin{align*}
        \distance_w(a_2,b) - \delta &= \distance(a_2,b) - w(b) - \delta = \distance_{w_+}(a_2, b) \le \distance_{w_+}(a_2, b')= \distance_{w}(a_2, b'),\label{eq:proof_2d_WNN_3}
    \end{align*}
    i.e., the point $b$ is also a $\delta$-WNN for $a_2$.
\end{proof}

Lemma~\ref{lemma:delta_WNN_main} follows from combining Lemmas~\ref{lemma:delta_WNN} and~\ref{lemma:A_delta_prop} in a straight-forward way.

\begin{restatable}{lemma}{deltaWNNmain}\label{lemma:delta_WNN_main}
    Suppose $\tau$ is a transport plan from $\mu$ to $\nu$ and $w\in \mbW_\delta$ a valid weight vector such that for any pair $(\varrho,b)\in \disc{\delta}\times B$ with $\tau(\varrho,b)>0$, the point $b$ is a $\delta$-WNN of $\rep\varrho$. Then, $\tau$ is a $\delta$-close transport plan from $\mu$ to $\nu$. 
\end{restatable}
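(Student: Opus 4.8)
\textbf{Proof proposal for Lemma~\ref{lemma:delta_WNN_main}.} The plan is to reduce this statement to Lemma~\ref{lemma:delta_WNN} by choosing the right ``per-point'' weight function. The subtlety is that Lemma~\ref{lemma:delta_WNN} requires, for every pair of actual points $(a,b)\in A\times B$ with $\tau(a,b)>0$, that $b$ be a $\delta$-WNN of $a$ with respect to a common weight vector $w$, whereas here we are only told this for representative points $\rep\varrho$ of regions $\varrho\in\disc{\delta}$, and only for the region-level plan $\tau(\varrho,b)$. The bridge between the two formulations is exactly Lemma~\ref{lemma:A_delta_prop}: within a fixed region $\varrho$, being a $\delta$-WNN is a property shared by all points of $\varrho$ (for any valid weight vector).

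First I would unpack what ``$\tau(\varrho,b)>0$'' means: it is shorthand for $\tau(a,b)>0$ for some $a\in\varrho$ (equivalently $\int_\varrho \tau(a,b)\,da>0$), so the hypothesis gives us that $b$ is a $\delta$-WNN of $\rep\varrho$ w.r.t.\ $w$ whenever mass flows from $\varrho$ to $b$. Now fix an arbitrary pair of actual points $(a,b)\in A\times B$ with $\tau(a,b)>0$. Let $\varrho$ be the region of $\disc{\delta}$ containing $a$ (the regions partition $A$, so this is well defined). Then $\tau(\varrho,b)>0$, so by hypothesis $b$ is a $\delta$-WNN of $\rep\varrho$ with respect to the valid weight vector $w\in\mbW_\delta$. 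Since $a,\rep\varrho\in\varrho$ and $w$ is valid, Lemma~\ref{lemma:A_delta_prop} applied with $a_1=\rep\varrho$, $a_2=a$ gives that $b$ is also a $\delta$-WNN of $a$ with respect to $w$. This holds for every pair $(a,b)$ with $\tau(a,b)>0$, so the hypothesis of Lemma~\ref{lemma:delta_WNN} is satisfied with this same $w$, and Lemma~\ref{lemma:delta_WNN} concludes that $\tau$ is a $\delta$-close transport plan from $\mu$ to $\nu$.

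There is essentially no hard calculation here; the only thing requiring a little care is the translation between region-indexed quantities $\tau(\varrho,b)$ and pointwise quantities $\tau(a,b)$, and making sure the weight vector $w$ supplied by the hypothesis is the \emph{same} one used throughout so that Lemma~\ref{lemma:delta_WNN} applies verbatim (it is, since a single $w$ is fixed in the statement). One should also note that $\tau(a,b)>0$ on a set of $a$ of positive measure inside some region, which is all that is needed to invoke the hypothesis; degenerate measure-zero issues do not affect $\plancost(\tau)$ and can be ignored. So the main ``obstacle'' is purely bookkeeping: correctly quantifying over regions versus points and citing Lemmas~\ref{lemma:delta_WNN} and~\ref{lemma:A_delta_prop} in the right order.
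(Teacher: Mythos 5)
Your proposal is correct and matches the paper exactly: the paper states that Lemma~\ref{lemma:delta_WNN_main} follows by combining Lemma~\ref{lemma:delta_WNN} and Lemma~\ref{lemma:A_delta_prop} in a straightforward way, which is precisely your argument of transferring the $\delta$-WNN property from $\rep\varrho$ to every point of $\varrho$ via Lemma~\ref{lemma:A_delta_prop} and then invoking Lemma~\ref{lemma:delta_WNN}. Your bookkeeping about region-level versus pointwise positivity of $\tau$ is the only detail the paper leaves implicit, and you handle it correctly.
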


In the following lemma, we show that any $\delta$-optimal transport plan $\tau, y(\cdot)$ from $\mu$ to $\nu$ is $\delta$-close.

\deltaoptimal*
\begin{proof}
    To prove this lemma, we first show that 
    for any pair $(\varrho, b)\in \disc{\delta}\times B$ such that $\tau(\varrho, b)>0$, the point $b$ is a $\delta$-WNN of the representative point $\rep\varrho$. Then, by invoking Lemma~\ref{lemma:delta_WNN_main}, we conclude that the transport plan $\tau$ is $\delta$-close, as desired. 

    For any region $\varrho\in \disc{\delta}$ and any point $b\in B$ with $\tau(\varrho, b)>0$, by $\delta$-optimality condition~\eqref{eq:feasibility_matching},
    \begin{equation}\label{eq:proof_delta_optimal_1}
        y(b)-y_\delta(\rep\varrho)\ge \distance(\rep\varrho, b).
    \end{equation}
    Furthermore, for any point $b'\neq b$ in $B$, by $\delta$-optimality condition~\eqref{eq:feasibility_non_matching},
    \begin{equation}\label{eq:proof_delta_optimal_2}
        y(b')-y_\delta(\rep\varrho)\le \distance(\rep\varrho, b') + \delta.
    \end{equation}
    Combining Equations~\eqref{eq:proof_delta_optimal_1} and~\eqref{eq:proof_delta_optimal_2}, 
    \begin{equation*}\label{eq:repX_dist}
        \distance(\rep\varrho,b) - y(b)\le -y_\delta(\rep\varrho) \le \distance(\rep\varrho, b') - y(b') + \delta,
    \end{equation*}
    or equivalently, $\distance_y(\rep\varrho, b) \le \distance_y(\rep\varrho, b') + \delta$, i.e., the point $b$ is a $\delta$-WNN of the representative point $\rep\varrho$. 
\end{proof}

Next, we show that if there exists a transport plan $\tau$ from $\mu$ to $\nu$, a set of dual weight $y(\cdot)$ for points in $B$, and a set of dual weights $y'(\cdot)$ for representative points of the regions in $\disc{\delta}$ that satisfy $\delta$-optimality conditions~\eqref{eq:feasibility_non_matching} and~\eqref{eq:feasibility_matching} (in which $y_\delta(\cdot)$ is replaced with $y'(\cdot)$), then reassigning the dual weights based on Equation~\eqref{eq:dual_a_assignment} does not violate conditions~\eqref{eq:feasibility_non_matching} and~\eqref{eq:feasibility_matching}, i.e., the transport plan $\tau$ and dual weights $y(\cdot)$ for points in $B$ is $\delta$-optimal.
\begin{restatable}{lemma}{noneedfordual}\label{lemma:no_need_for_dual_a}
    For any scale $\delta$, if there exists a transport plan $\tau$ from $\mu$ to $\nu$, a set of dual weights $y(\cdot)$ for points in $B$, and a set of dual weights $y'(\cdot)$ for representative points of regions in $\disc{\delta}$ satisfying $\delta$-optimality conditions~\eqref{eq:feasibility_non_matching} and~\eqref{eq:feasibility_matching}, then $\tau, y(\cdot)$ are $\delta$-optimal.
\end{restatable}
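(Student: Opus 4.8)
The plan is to show that the reassignment of dual weights via Equation~\eqref{eq:dual_a_assignment} can only \emph{decrease} the left-hand side of condition~\eqref{eq:feasibility_non_matching} and \emph{increase} (or preserve) the right-hand side of condition~\eqref{eq:feasibility_matching}, so that if the conditions hold for some arbitrary feasible choice $y'(\cdot)$, they also hold for the canonical choice $y_\delta(\cdot)$. The key observation is that the assignment in \eqref{eq:dual_a_assignment} sets $y_\delta(\rep\varrho) = y(b_\varrho) - \distance(\rep\varrho, b_\varrho) - \delta$, where $b_\varrho$ is the weighted nearest neighbor of $\rep\varrho$ with respect to $y(\cdot)$, and this value is (essentially) the \emph{largest} value one may assign to $y_\delta(\rep\varrho)$ while still respecting \eqref{eq:feasibility_non_matching} against every $b \in B$ — up to the additive slack $\delta$.

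Concretely, first I would show that $y_\delta(\rep\varrho) \le y'(\rep\varrho)$ for the given feasible $y'$. Since $y', y(\cdot), \tau$ satisfy \eqref{eq:feasibility_non_matching} with $y'$ in place of $y_\delta$, applying it to $b = b_\varrho$ gives $y(b_\varrho) - y'(\rep\varrho) \le \distance(\rep\varrho, b_\varrho) + \delta$, i.e. $y'(\rep\varrho) \ge y(b_\varrho) - \distance(\rep\varrho, b_\varrho) - \delta = y_\delta(\rep\varrho)$. Thus replacing $y'$ by $y_\delta$ only makes the term $-y_\delta(\rep\varrho)$ larger, which preserves \eqref{eq:feasibility_non_matching}? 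No — larger $-y_\delta(\rep\varrho)$ threatens \eqref{eq:feasibility_non_matching}. So the direction must instead be handled by re-deriving \eqref{eq:feasibility_non_matching} from scratch using the WNN property: for \emph{any} $b \in B$, since $b_\varrho$ is the weighted nearest neighbor of $\rep\varrho$ under $y(\cdot)$, we have $\distance(\rep\varrho,b_\varrho) - y(b_\varrho) \le \distance(\rep\varrho,b) - y(b)$, hence $y(b) - y_\delta(\rep\varrho) = y(b) - y(b_\varrho) + \distance(\rep\varrho,b_\varrho) + \delta \le \distance(\rep\varrho,b) + \delta$, which is exactly \eqref{eq:feasibility_non_matching}. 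This step uses only the definition of $b_\varrho$ and needs no hypothesis about $y'$.

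For condition~\eqref{eq:feasibility_matching}, suppose $\tau(\varrho,b) > 0$. I would combine the inequality $y_\delta(\rep\varrho) \le y'(\rep\varrho)$ established above with the hypothesis that $y'$ satisfies \eqref{eq:feasibility_matching}: $y(b) - y_\delta(\rep\varrho) \ge y(b) - y'(\rep\varrho) \ge \distance(\rep\varrho,b)$. So \eqref{eq:feasibility_matching} transfers directly. Chaining these two bullets shows $\tau, y(\cdot)$ (with the derived weights $y_\delta$) is $\delta$-optimal. I expect the main (minor) obstacle to be bookkeeping: making sure the $-\delta$ term in \eqref{eq:dual_a_assignment} is tracked correctly so that \eqref{eq:feasibility_non_matching} comes out with exactly the $+\delta$ slack and not $+2\delta$, and confirming that the WNN $b_\varrho$ used in \eqref{eq:dual_a_assignment} is taken with respect to the \emph{same} weights $y(\cdot)$ that appear on the left-hand sides of the feasibility conditions; once those are pinned down the argument is a two-line triangle-inequality computation in each case.
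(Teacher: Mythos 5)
Your proposal is correct and, after the brief mid-paragraph self-correction, follows essentially the same route as the paper: condition~\eqref{eq:feasibility_non_matching} is re-derived directly from the definition of $b_\varrho$ as the weighted nearest neighbor under $y(\cdot)$ (using no hypothesis on $y'$), and condition~\eqref{eq:feasibility_matching} is transferred via the inequality $y_\delta(\rep\varrho) \le y'(\rep\varrho)$, which you obtain from $y'$ satisfying~\eqref{eq:feasibility_non_matching} at $b_\varrho$ exactly as the paper does. No gaps; the bookkeeping of the $-\delta$ term works out as you anticipated.
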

\begin{proof}
    To prove this lemma, we show that conditions~\eqref{eq:feasibility_non_matching} and~\eqref{eq:feasibility_matching} hold when plugging dual weights $y(\cdot)$ for points in $B$ and dual weights $y_\delta(\cdot)$ derived by Equation~\eqref{eq:dual_a_assignment} for representative points of $\disc{\delta}$. For any region $\varrho\in \disc{\delta}$, let $b_\varrho$ denote the weighted nearest neighbor of $\rep\varrho$ in $B$ with respect to weights $y(\cdot)$. 
    For any pair $(\varrho,b)\in \disc{\delta}\times B$, \[y_\delta(\rep\varrho)=y(b_\varrho)-\distance(a,b_\varrho)-\delta \ge y(b)-\distance(\rep\varrho,b)-\delta;\] therefore, the optimality condition~\eqref{eq:feasibility_non_matching} holds for $(\varrho,b)$. Next, we show that the optimality condition~\eqref{eq:feasibility_matching} also holds for all pairs $(\varrho,b)$ with $\tau(\varrho,b)>0$. By condition~\eqref{eq:feasibility_non_matching} on $\tau, y(\cdot), y'(\cdot)$, for any point $b'\in B$, $y'(\rep\varrho)\ge y(b')-\distance(\rep\varrho, b')-\delta$. Therefore,
    \[y(\rep\varrho)\ge \max_{b'\in B}(y(b')-\distance(\rep\varrho,b')-\delta) = y(b_\varrho)-\distance(\rep\varrho,b_\varrho)-\delta = y_\delta(\rep\varrho).\]
    As a result, for the point $b\in B$ with $\tau(\varrho,b)>0$, by condition~\eqref{eq:feasibility_matching} on $\tau, y(\cdot), y'(\cdot)$, we have \[y(b)-y_\delta(\rep\varrho)\ge y(b)-y'(\rep\varrho)\ge \distance(\rep\varrho,b),\]
    and the $\delta$-optimality condition~\eqref{eq:feasibility_matching} holds after replacing $y'(\cdot)$ with $y_\delta(\cdot)$.
\end{proof}

\subsection{Discretizing the Continuous Distribution.}

\compressedgeo*
\begin{proof}
    For any region $\varrho\in \disc{\delta}$, suppose $b_\varrho\in B$ denotes the weighted nearest neighbor of $\rep\varrho$ with respect to weights $y(\cdot)$. For any point $b\in B$, we can rewrite the slack $s_\delta(\varrho,b)$ as follows.
    \begin{align}
        s_\delta(\varrho,b) &= \floor{\frac{\distance(\rep\varrho,b)+\delta-y(b)+y_\delta(\rep\varrho)}{\delta}}\delta\nonumber \\&= \floor{\frac{\distance(\rep\varrho,b)+\delta-y(b)+(y(b_\varrho)-\distance(\rep\varrho,b_\varrho)-\delta)}{\delta}}\delta\nonumber \\&= \floor{\frac{\distance_y(\rep\varrho,b)-\distance_y(\rep\varrho,b_\varrho)}{\delta}}\delta.\label{eq:slack_def_new}
    \end{align}
    For each point $b\in B$, let $V_b=\vor(b)$ denote the weighted Voronoi cell of the point $b$ in the weighted Voronoi diagram $\vd_y(B)$. Recall that for any $i\in [1,4n+1]$, $V_b^i$ denotes the $i$-expansion of the weighted Voronoi cell of the point $b$. For any pair $(\varrho,b)\in \disc{\delta}\times B$, if $\rep\varrho$ lies inside $V_b$, then $b$ is the WNN of $\rep\varrho$ and by Equation~\eqref{eq:slack_def_new}, $s_\delta(\varrho,b)=0$. Otherwise, suppose the point $\rep\varrho$ lies inside $V_b^i$ for some $i\in[1,4n+1]$. Let $y'(\cdot)$ denote a set of dual weights for the point set $B$ that assigns $y'(b)=y(b)+i\delta$ to the point $b$ and $y'(b')=y(b')$ to each point $b'\neq b$ in $B$. Since $\rep\varrho$ lies inside $V_b^i$, then $b$ is the weighted nearest neighbor of $\rep\varrho$ with respect to weights $y'(\cdot)$, i.e., $\distance_{y'}(\rep\varrho,b)<\distance_{y'}(\rep\varrho,b')$ for each point $b'\neq b\in B$. Therefore, 
    \begin{align*}
        \distance_y(\rep\varrho,b) &= \distance(\rep\varrho,b)-y(b) = \distance(\rep\varrho,b)-(y'(b)-i\delta) = \distance_{y'}(\rep\varrho,b)+i\delta \\ &< \distance_{y'}(\rep\varrho,b_\varrho)+i\delta = \distance_y(\rep\varrho,b_\varrho)+i\delta.
    \end{align*}
    Plugging into Equation~\eqref{eq:slack_def_new}, $s_\delta(\varrho,b) = \floor{\frac{\distance_y(\rep\varrho,b)-\distance_y(\rep\varrho,b_\varrho)}{\delta}}\delta<i\delta$ for any region $\varrho\in \disc{\delta}$ inside $V_b^i$. Furthermore, for any region $\varrho\in \disc{\delta}$ outside of $V_b^i$, the WNN of $\varrho$ with respect to weights $y'(\cdot)$ remains to be $b_\varrho$ and we have $\distance_{y'}(\rep\varrho,b) > \distance_{y'}(\rep\varrho,b_\varrho)$. Therefore, 
    \begin{equation*}
        \distance_y(\rep\varrho,b) = \distance_{y'}(\rep\varrho,b)+i\delta > \distance_{y'}(\rep\varrho,b_\varrho)+i\delta = \distance_y(\rep\varrho,b_\varrho)+i\delta.
    \end{equation*}
    Plugging into Equation~\eqref{eq:slack_def_new}, $s_\delta(\varrho,b) = \floor{\frac{\distance_y(\rep\varrho,b)-\distance_y(\rep\varrho,b_\varrho)}{\delta}}\delta\ge i\delta$ for any region $\varrho\in \disc{\delta}$ outside $V_b^i$. Thus, for any point $b\in B$, any region $\varphi\in \mcA(\mcV)$, and any $\varrho\in \disc{\delta}$ inside $\varphi$,
    \begin{itemize}
        \item[--] if $\rep\varrho$ lies inside $V_b^1$, then $s_\delta(\varrho,b)=0$. In this case, $\rep\varphi$ also lies inside $V_b^1$ and $\distance_\delta(\rep\varphi,b)=0$,
        \item[--] if $\rep\varrho$ lies inside $V_b^{i+1}\setminus V_b^i$ for some $i\in [1,4n]$, then $s_\delta(\varrho,b)=i\delta$. In this case, $\rep\varphi$ also lies in $V_b^{i+1}\setminus V_b^i$ and $\distance_\delta(\rep\varphi,b)=i$, and
        \item[--] if $\rep\varrho$ lies outside $V_b^{4n+1}$, then $s_\delta(\varrho,b)\ge (4n+1)\delta$. In this case, $\rep\varphi$ also lies outside of $V_b^{4n+1}$ and $\distance_\delta(\rep\varphi,b)=4n+1$.
    \end{itemize}
    This completes the proof of this lemma.
\end{proof}

\subsection{\texorpdfstring{$\delta$-}-Optimality of the Computed Transport Plan.}

\begin{lemma}\label{lemma:2delta_slack}
    Let $\tau_{2\delta}, y(\cdot)$ be any $2\delta$-optimal transport plan from $\mu$ to $\nu$, where the dual weights of points in $B$ are integer multiples of $2\delta$. Then, for any region $\varrho\in \disc{\delta}$ and any point $b\in B$, if $\tau_{2\delta}(\varrho, b)>0$, then $s_\delta(\varrho,b)\le 4\delta$.
\end{lemma}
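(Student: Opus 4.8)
The plan is to recall the definition of the slack $s_\delta(\varrho,b)$ and bound it above using the $2\delta$-optimality conditions~\eqref{eq:feasibility_non_matching} and~\eqref{eq:feasibility_matching} for the plan $\tau_{2\delta}, y(\cdot)$ at scale $2\delta$. By definition,
\[
s_\delta(\varrho,b) = \left\lfloor \frac{\distance(\rep\varrho,b) + \delta - y(b) + y_\delta(\rep\varrho)}{\delta} \right\rfloor \delta,
\]
where $y_\delta(\rep\varrho) = y(b_\varrho) - \distance(\rep\varrho, b_\varrho) - \delta$ and $b_\varrho$ is the weighted nearest neighbor of $\rep\varrho$ with respect to $y(\cdot)$. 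Using the rewriting from the proof of Lemma~\ref{lemma:compressed_geometric}, $s_\delta(\varrho,b) = \lfloor (\distance_y(\rep\varrho,b) - \distance_y(\rep\varrho,b_\varrho))/\delta \rfloor \delta$, so it suffices to upper-bound $\distance_y(\rep\varrho,b) - \distance_y(\rep\varrho,b_\varrho)$.

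First I would handle the key case $\tau_{2\delta}(\varrho,b) > 0$. By the $2\delta$-optimality condition~\eqref{eq:feasibility_matching} applied at scale $2\delta$ (where the dual of $\rep\varrho$ is $y_{2\delta}(\rep\varrho) = y(b_\varrho') - \distance(\rep\varrho,b_\varrho') - 2\delta$ for the WNN $b_\varrho'$ of $\rep\varrho$; note $b_\varrho' = b_\varrho$ since the weights $y(\cdot)$ are the same), we get $y(b) - y_{2\delta}(\rep\varrho) \ge \distance(\rep\varrho,b)$, i.e. $\distance_y(\rep\varrho,b) = \distance(\rep\varrho,b) - y(b) \le -y_{2\delta}(\rep\varrho) = \distance_y(\rep\varrho,b_\varrho) + 2\delta$. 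Hence $\distance_y(\rep\varrho,b) - \distance_y(\rep\varrho,b_\varrho) \le 2\delta$, so
\[
s_\delta(\varrho,b) = \left\lfloor \frac{\distance_y(\rep\varrho,b) - \distance_y(\rep\varrho,b_\varrho)}{\delta} \right\rfloor \delta \le 2\delta \le 4\delta,
\]
since the floor of a number at most $2$ is at most $2$. (Also $\distance_y(\rep\varrho,b) - \distance_y(\rep\varrho,b_\varrho) \ge 0$ because $b_\varrho$ is the WNN, so the slack is nonnegative and in $\{0, \delta, 2\delta\}$.) This already gives the claimed bound; the slightly weaker constant $4\delta$ in the statement presumably leaves room for the fact that in the algorithm the dual weights carried from scale $2\delta$ are integer multiples of $2\delta$ but the slack is discretized in units of $\delta$, and possibly a $\pm\delta$ rounding in~\eqref{eq:dual_a_assignment} between the two scales, which I would track explicitly to confirm $4\delta$ is safe.

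The main obstacle, and the step requiring care, is matching up the dual weight $y_\delta(\rep\varrho)$ used in the definition of $s_\delta$ (which is derived from $y(\cdot)$ via Equation~\eqref{eq:dual_a_assignment} with the $-\delta$ term) against the dual weight $y_{2\delta}(\rep\varrho)$ that appears in the $2\delta$-optimality condition (which uses a $-2\delta$ term), since these differ by $\delta$. Concretely, $y_\delta(\rep\varrho) - y_{2\delta}(\rep\varrho) = \delta$, and this $\delta$ shift, together with the extra $+\delta$ inside the floor in the definition of $s_\delta$, must be reconciled; I expect that carefully substituting $y_{2\delta}(\rep\varrho) = y_\delta(\rep\varrho) - \delta$ into the inequality $y(b) - y_{2\delta}(\rep\varrho) \ge \distance(\rep\varrho,b)$ yields $\distance(\rep\varrho,b) + \delta - y(b) + y_\delta(\rep\varrho) \le 2\delta$ directly, so $s_\delta(\varrho,b) \le 2\delta$, and the stated $4\delta$ is a comfortable slack. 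I would write this out as a short chain of equalities/inequalities rather than invoking the rewriting from Lemma~\ref{lemma:compressed_geometric}, to keep the bookkeeping transparent.
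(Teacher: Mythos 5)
There is a genuine gap in your key step. The $2\delta$-optimality condition \eqref{eq:feasibility_matching} for $\tau_{2\delta},y(\cdot)$ is a statement about the regions of $\disc{2\delta}$ and \emph{their} representative points: if $\tau_{2\delta}(\varrho^*,b)>0$ for $\varrho^*\in\disc{2\delta}$, then $y(b)-y_{2\delta}(\rep{\varrho^*})\ge \distance(\rep{\varrho^*},b)$, where $y_{2\delta}(\rep{\varrho^*})$ is derived via \eqref{eq:dual_a_assignment} at the point $\rep{\varrho^*}$. You instead apply this condition at the point $\rep\varrho$, the representative of the \emph{finer} region $\varrho\in\disc{\delta}$, by silently extending $y_{2\delta}$ to $\rep\varrho$ as $y(b_\varrho)-\distance(\rep\varrho,b_\varrho)-2\delta$. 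That extension is not what $2\delta$-optimality gives you: the inequality is only guaranteed at $\rep{\varrho^*}$, and $\rep\varrho\ne\rep{\varrho^*}$ in general. The quantity $\distance_y(a,b)-\distance_y(a,b_\varrho)$ is not constant as $a$ varies over $\varrho^*$; within a cell of the arrangement $\disc{2\delta}$ it can drift by up to (roughly) $2\delta$, since the cell is only guaranteed not to cross bisectors for weight vectors that are multiples of $2\delta$. Consequently your claimed bound $s_\delta(\varrho,b)\le 2\delta$ is not established and can in fact fail; only the weaker $4\delta$ of the statement is safe.

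This transfer from $\rep{\varrho^*}$ to $\rep\varrho$ is exactly what the paper's proof supplies and what your proposal is missing: the paper takes $\varrho^*\in\disc{2\delta}$ containing $\varrho$, applies \eqref{eq:feasibility_matching} at $\rep{\varrho^*}$ to get $\distance_y(\rep{\varrho^*},b)-\distance_y(\rep{\varrho^*},b_\varrho)\le 2\delta$, and then invokes Lemma~\ref{lemma:distance_difference} (a geometric statement about two points $a_1,a_2$ in the same region of $\disc{2\delta}$, proved by constructing suitable valid weight vectors in $\mbW_{2\delta}$) to move the inequality from $\rep{\varrho^*}$ to $\rep\varrho$ at an extra cost of $2\delta$ after flooring. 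That is precisely where the constant $4\delta$ comes from; it is not slack left over for rounding, as you conjecture, but the unavoidable price of changing the evaluation point. You do flag "matching up $y_\delta(\rep\varrho)$ against $y_{2\delta}(\rep\varrho)$" as the delicate step, but the delicate step is really $\rep\varrho$ versus $\rep{\varrho^*}$, and without an argument in the spirit of Lemma~\ref{lemma:distance_difference} your chain of inequalities does not go through.
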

\begin{proof}
    Let $\varrho^*$ denote the region in $\disc{2\delta}$ containing $\varrho$ (by construction, it can be easily confirmed that the set of valid weight vectors $\mbX_{2\delta}$ is a subset of $\mbW_\delta$ and hence, each region in $\disc{\delta}$ completely lies inside a region in $\disc{2\delta}$).
    Define $b_\varrho$ to be the weighted nearest neighbor of $\rep{\varrho}$ (and consequently $\rep{\varrho^*}$) with respect to weights $y(\cdot)$. By Equation~\eqref{eq:dual_a_assignment}, $y_{2\delta}(\rep{\varrho^*})=y(b_\varrho)-\distance(\rep{\varrho^*}, b_\varrho)-2\delta$ and $y_\delta(\rep{\varrho}) = y(b_\varrho)-\distance(\rep{\varrho},b_\varrho)-\delta$. Hence,
    \begin{align}
        s_\delta(\varrho, b) &= \floor{\frac{\distance(\rep{\varrho}, b)+\delta-y(b)+y_\delta(\rep{\varrho})}{\delta}}\delta\nonumber\\ &= \floor{\frac{\distance(\rep{\varrho}, b)+\delta-y(b)+(y(b_\varrho)-\distance(\rep{\varrho},b_\varrho)-\delta)}{\delta}}\delta\nonumber\\ &= \floor{\frac{\distance_y(\rep{\varrho}, b)-\distance_y(\rep{\varrho},b_\varphi)}{\delta}}\delta\le \floor{\frac{\distance_y(\rep{\varrho^*}, b)-\distance_y(\rep{\varrho^*}, b_{\varrho})}{\delta}}\delta + 2\delta,\label{eq:proof2deltaslack2}
    \end{align}
    where the last inequality is resulted from Lemma~\ref{lemma:distance_difference} below.
    Finally, from the $2\delta$-optimality condition~\eqref{eq:feasibility_matching} on $\tau_{2\delta}, y(\cdot)$,  
    \begin{equation*}
        \label{eq:proof2deltaslack1}
        \distance(\rep{\varrho^*}, b) \le  y(b)-y_{2\delta}(\rep{\varrho^*}) = y(b) - (y(b_\varrho)-\distance(\rep{\varrho^*}, b_\varrho)-2\delta) .
    \end{equation*}
    Hence,
    \begin{equation}
        \label{eq:proof2deltaslack4}
        \distance_y(\rep{\varrho^*}, b) - \distance_y(\rep{\varrho^*}, b_\varrho)\le 2\delta.
    \end{equation}
    Plugging Equations~\eqref{eq:proof2deltaslack4} into Equation~\eqref{eq:proof2deltaslack2},
    \[s_\delta(\varrho,b)\le \floor{\frac{\distance_y(\rep{\varrho^*}, b)-\distance_y(\rep{\varrho^*},b_\varrho)}{\delta}}\delta+2\delta \le 4\delta, \]
    as claimed.
\end{proof}

\begin{lemma}\label{lemma:distance_difference}
    For any region $\varrho^*\in \disc{2\delta}$, any pair of points $a_1, a_2\in \varrho^*$, and any pair of points $b_1, b_2\in B$, $\floor{\frac{\distance(a_1,b_1)-\distance(a_1,b_2)}{\delta}}\delta \le \floor{\frac{\distance(a_2,b_1)-\distance(a_2,b_2)}{\delta}}\delta + 2\delta$.
\end{lemma}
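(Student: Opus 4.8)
The plan is to first reduce the claim to the cleaner inequality
\[ f(a_1)-f(a_2)\le 2\delta, \qquad\text{where } f(a):=\distance(a,b_1)-\distance(a,b_2), \]
after which the stated inequality follows from $\floor{f(a_1)/\delta}\le\floor{(f(a_2)+2\delta)/\delta}=\floor{f(a_2)/\delta}+2$, i.e.\ $\floor{f(a_1)/\delta}\delta\le\floor{f(a_2)/\delta}\delta+2\delta$. So I would assume for contradiction that $f(a_1)-f(a_2)>2\delta$ and derive a contradiction from the fact that $a_1$ and $a_2$ lie in a common region $\varrho^*$ of $\disc{2\delta}$.

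First I would record that by the triangle inequality $|f(a)|\le\distance(b_1,b_2)\le\Delta$ for every $a$, so the open interval $(f(a_2),f(a_1))$ has length greater than $2\delta$ and hence contains a multiple $c$ of $2\delta$ with $|c|\le\Delta$ (concretely $c=(\floor{f(a_2)/2\delta}+1)\cdot 2\delta$ works). Next I would build a weight vector $w$ on $B$ whose entries are non-negative integer multiples of $2\delta$, designed to make one of $b_1,b_2$ ``win'' a weighted nearest-neighbor competition at threshold $c$: if $c\ge 0$ set $w(b_1)=\Delta$, $w(b_2)=\Delta-c$, and if $c<0$ set $w(b_2)=\Delta$, $w(b_1)=\Delta+c$; set $w(b')=0$ for every other $b'\in B$. (In the scaling algorithm each scale $\delta$ is a dyadic fraction of $\Delta$, so $\Delta$ is a multiple of $2\delta$ and $w$ is a valid weight vector for $\disc{2\delta}$.) By the construction of $\disc{2\delta}$ as the overlay of the weighted Voronoi diagrams $\vd_w(B)$ over valid weight vectors $w$ --- equivalently, by Lemma~\ref{lemma:A_delta_prop} --- the region $\varrho^*$ lies inside a single cell of $\vd_w(B)$, so $a_1$ and $a_2$ have exactly the same set of weighted nearest neighbors in $B$ with respect to $w$.

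The contradiction would then come from inspecting a weighted nearest neighbor of $a_1$ under $w$ and splitting into three cases. If $b_1$ is a WNN of $a_1$, then $\distance_w(a_1,b_1)\le\distance_w(a_1,b_2)$ unwinds (using $w(b_1)-w(b_2)=c$) to $f(a_1)\le c$, contradicting $c<f(a_1)$. If $b_2$ is a WNN of $a_1$, then by the previous paragraph it is also a WNN of $a_2$, and $\distance_w(a_2,b_2)\le\distance_w(a_2,b_1)$ unwinds to $c\le f(a_2)$, contradicting $f(a_2)<c$. Finally, if some $b'\in B\setminus\{b_1,b_2\}$ is a WNN of $a_1$ (hence, by the region property, also of $a_2$), I would combine $\distance_w(a_1,b')\le\distance_w(a_1,b_1)$ and $\distance_w(a_1,b')\le\distance_w(a_1,b_2)$ with the triangle inequalities $\distance(a_1,b_i)\le\distance(a_1,b')+\distance(b',b_i)$ and $\distance(b',b_i)\le\Delta$; a one-line estimate then gives $f(a_1)\le c$ when $c\ge 0$, and the same estimate at $a_2$ gives $f(a_2)\ge c$ when $c<0$ --- either way a contradiction. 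Hence $f(a_1)-f(a_2)\le 2\delta$, as needed.

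The hard part is this last case. Choosing a threshold $c$ that flips the $b_1$-versus-$b_2$ preference between $a_1$ and $a_2$ is easy; the subtlety is that this flip is invisible to $\vd_w(B)$ unless one of $b_1,b_2$ is genuinely a nearest neighbor at those points, since an unrelated point of $B$ could otherwise be nearest everywhere inside $\varrho^*$. Boosting one of $b_1,b_2$ to the largest admissible weight $\Delta$ is precisely what makes the triangle inequality force it (or its partner) to dominate every other point of $B$, closing this gap. The remaining details --- the two sign cases for $c$, checking that $w$ has multiple-of-$2\delta$ entries, and the floor arithmetic --- are routine.
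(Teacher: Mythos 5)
Your argument is essentially correct and reaches the same key inequality $\distance(a_1,b_1)-\distance(a_1,b_2)\le \distance(a_2,b_1)-\distance(a_2,b_2)+2\delta$, but by a genuinely different route than the paper. The paper argues directly rather than by contradiction: it sets $w(b_1)=\ceil{\distance(\rep{\varrho^*},b_1)/(2\delta)}2\delta$, $w(b_2)=\ceil{\distance(\rep{\varrho^*},b_2)/(2\delta)}2\delta$ and $w(b')=0$ otherwise, so that the weighted distances from $\rep{\varrho^*}$ to $b_1$ and $b_2$ both lie in $(-2\delta,0]$ while every other point of $B$ has weighted distance at least $0$; hence one of $b_1,b_2$ owns all of $\varrho^*$ in $\vd_w(B)$, giving the first inequality at $a_1$, and raising the other's weight by $2\delta$ flips the winner and gives the second inequality at $a_2$. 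Anchoring the weights to the representative's own distances is precisely what eliminates your troublesome third case (some $b'\notin\{b_1,b_2\}$ being nearest): no other point can compete, and neither the triangle inequality nor any diameter bound is needed. Your construction instead inflates the weights of $b_1,b_2$ to roughly $\Delta$, chooses a threshold multiple $c$ of $2\delta$ strictly between $f(a_2)$ and $f(a_1)$ (writing $f(a):=\distance(a,b_1)-\distance(a,b_2)$ as you do), and then must invoke the triangle inequality to rule out the rest of $B$.

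That triangle-inequality step is the one substantive caveat. The lemma sits inside the Section~\ref{sec:scaling_algo_refined} framework, which is stated for any ground distance whose bisectors are constant-degree algebraic varieties, explicitly including the squared-Euclidean distance (used in the figures), for which $\distance(a_1,b_i)\le\distance(a_1,b')+\distance(b',b_i)$ is unavailable; the paper's proof never uses it. As written, your case 3 therefore covers $L_p$-type ground distances but not the paper's full generality. The gap is easily closed, though: in case 3 with $c\ge 0$, the inequality $\distance_w(a_1,b')\le\distance_w(a_1,b_2)$ together with $\distance(a_1,b')\ge 0$ already gives $\distance(a_1,b_2)\ge\Delta-c$, and since $\distance(a_1,b_1)\le\Delta$ (all relevant distances are bounded by $\Delta$), $f(a_1)\le c$ follows with no triangle inequality; the symmetric estimate at $a_2$ handles $c<0$. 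The same remark applies to your bound $|f(a)|\le\distance(b_1,b_2)$, which you only need in the weaker form $|f(a)|\le\Delta$, and that follows from $0\le\distance(a,b_i)\le\Delta$ alone. The remaining housekeeping in your write-up (validity of the weight vectors, $\Delta$ being a multiple of $2\delta$, ties on Voronoi boundaries, and the floor arithmetic) is fine and no looser than what the paper itself assumes.
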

\begin{proof}
    To prove this lemma, we first construct a valid weight vector $w\in\mbW_{2\delta}$ such that in the Voronoi diagram $\vd_w(B)$, the region $\varrho^*$ lies inside the Voronoi cell of $b_1$, which gives us $\distance_w(a_1,b_1)\le \distance_w(a_1,b_2)$. Then, we increase the weight of $b_2$ in $w$ by $2\delta$ and obtain another valid weight vector $w_+\in \mbW_{2\delta}$ such that $\varrho^*$ now lies inside the Voronoi cell of $b_2$ and conclude $\distance_w(a_2,b_2)\le \distance_w(a_2,b_1)+2\delta$. Combining the two bounds, we get $\distance(a_1,b_1)-\distance(a_1,b_2)\le \distance(a_2,b_1)-\distance(a_2,b_2) + 2\delta$, leading to the lemma statement. We describe the details below.

    Consider a valid weight vector $w\in\mbW_{2\delta}$ that assigns $w(b_1)=\ceil{\frac{\distance(\rep{\varrho^*},b_1)}{2\delta}}2\delta, w(b_2)=\ceil{\frac{\distance(\rep{\varrho^*},b_2)}{2\delta}}2\delta$, and $w(b')=0$ for each $b'\neq b_1,b_2$ in $B$. Without loss of generality, assume $\distance_w(\rep{\varrho^*},b_1)< \distance_w(\rep{\varrho^*},b_2)$\footnote{If $\distance_w(\rep{\varrho^*},b_1)\ge \distance_w(\rep{\varrho^*},b_2)$, one can simply decrease $w(b_2)$ by $2\delta$ and follow a very similar argument.}. By construction,
    \begin{equation*}\label{eq:distance_difference1}
        -2\delta < \distance_w(\rep{\varrho^*},b_1)< \distance_w(\rep{\varrho^*},b_2)\le 0\le \min_{b'\in B, b'\neq b_1,b_2}\distance_w(\rep{\varrho^*},b).
    \end{equation*}
    Hence, the point $\rep{\varrho^*}$ and consequently the region $\varrho^*$ lie inside the Voronoi cell of $b_1$ in $\vd_w(B)$. Therefore,
    \begin{align}
        \distance(a_1,b_1)-w(b_1) = \distance_w(a_1,b_1)&\le \distance_w(a_1,b_2)=\distance(a_1,b_2)-w(b_2).
    \nonumber\\\label{eq:distance_difference3}
        \distance(a_1,b_1)-\distance(a_1,b_2)&\le w(b_1)-w(b_2).
    \end{align}
    Next, consider the weight vector $w_+\in \mbW_{2\delta}$ that assigns $w_+(b_2)=w(b_2)+2\delta$ and $w_+(b')=w(b')$ for all points $b'\neq b_2$ in $B$. In this case, 
    \begin{equation*}
        \distance_{w_+}(\rep{\varrho^*},b_2)\le -2\delta < \distance_{w_+}(\rep{\varrho^*},b_1)=\distance_{w}(\rep{\varrho^*},b_1)\le 0\le \min_{b'\in B, b'\neq b_1,b_2}\distance_w(\rep{\varrho^*},b).
    \end{equation*}
    Therefore, the point $\rep{\varrho^*}$ and consequently the region $\varrho^*$ lie inside the Voronoi cell of $b_2$ in $\vd_{w_+}(B)$. Therefore,
    \begin{align}
        \distance(a_2,b_2)-(w(b_2)+2\delta)= \distance_{w_+}&(a_2,b_2)\le \distance_{w_+}(a_2,b_1)=\distance(a_2,b_1) - w(b_1),\nonumber\\
        \label{eq:distance_difference4}
        w(b_1)-w(b_2)&\le \distance(a_2,b_1)-\distance(a_2,b_2) + 2\delta.
    \end{align}
    Combining Equations~\eqref{eq:distance_difference3} and~\eqref{eq:distance_difference4},
    \begin{align*}
        \distance(a_1,b_1)-\distance(a_1,b_2)\le w(b_1)&-w(b_2)\le \distance(a_2,b_1)-\distance(a_2,b_2) + 2\delta,\nonumber\\
        \floor{\frac{\distance(a_1,b_1)-\distance(a_1,b_2)}{\delta}}\delta &\le \floor{\frac{\distance(a_2,b_1)-\distance(a_2,b_2)}{\delta}}\delta + 2\delta.
    \end{align*}
\end{proof}

\subsubsection*{Residual Network.}
Given two transport plans $\sigma_1$ and $\sigma_2$ from $\hat{\mu}_\delta$ to $\nu$, we define the residual network $\mcG(\sigma_1, \sigma_2)$ on the vertex set $X_\delta\cup B$ as follows.
Define $\sigma:= \sigma_1 - \sigma_{2}$ to be a function that assigns, for any pair $(r,b)\in X_\delta\times B$, $\sigma(r,b)= \sigma_1(r,b) - \sigma_{2}(r,b)$. For any pair $(r,b)\in X_\delta\times B$, if $\sigma(r,b) > 0$, then we add an edge directed from $b$ to $r$ with a capacity $\sigma(r,b)$; otherwise, if $\sigma(r,b) < 0$, then we add an edge directed from $r$ to $b$ with a capacity $|\sigma(r,b)|$.

\begin{lemma}\label{lemma:cycle}
    Given any two transport plans $\sigma_1$ and $\sigma_2$ from $\hat{\mu}_\delta$ to $\nu$, for any directed edge $(r,b)\in X_\delta\times B$ in the residual network $\mcG(\sigma_1, \sigma_2)$, there exists a directed cycle $C$ in $\mcG(\sigma_1, \sigma_2)$ that contains the edge $(r,b)$.
\end{lemma}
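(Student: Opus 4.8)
The plan is to recognize $\sigma:=\sigma_1-\sigma_2$ as a \emph{circulation} carried by $\mcG(\sigma_1,\sigma_2)$ and then invoke the standard decomposition of circulations into directed cycles. Since $\sigma_1$ and $\sigma_2$ are both transport plans from $\hat{\mu}_\delta$ to $\nu$, they obey the same marginal constraints: for every $r\in X_\delta$, $\sum_{b\in B}\sigma_i(r,b)=\hat{\mu}_\delta(r)$, and for every $b\in B$, $\sum_{r\in X_\delta}\sigma_i(r,b)=\nu(b)$. Subtracting gives $\sum_{b\in B}\sigma(r,b)=0$ for every $r\in X_\delta$ and $\sum_{r\in X_\delta}\sigma(r,b)=0$ for every $b\in B$.

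First I would reinterpret the residual network as carrying a flow $f$: on each present edge --- which by construction is $b\to r$ when $\sigma(r,b)>0$ and $r\to b$ when $\sigma(r,b)<0$ --- set $f$ equal to the capacity $|\sigma(r,b)|>0$. A direct computation using the two marginal identities shows that at every vertex the total $f$ entering equals the total $f$ leaving: at any $r\in X_\delta$ the difference (inflow minus outflow) telescopes to $\sum_{b\in B}\sigma(r,b)=0$, and symmetrically at any $b\in B$ the difference (outflow minus inflow) equals $\sum_{r\in X_\delta}\sigma(r,b)=0$. Hence $f$ is a nonnegative circulation on $\mcG(\sigma_1,\sigma_2)$ that is, moreover, strictly positive on \emph{every} edge of $\mcG(\sigma_1,\sigma_2)$, since edges are added only when $\sigma(r,b)\neq 0$.

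Next I would apply the flow-decomposition theorem for circulations: any nonnegative circulation $f$ can be written as $f=\sum_{k}\lambda_k\,\chi_{C_k}$, where each $C_k$ is a directed cycle of the underlying digraph and each $\lambda_k>0$. Given the target edge $e$, the strict positivity $f(e)>0$ forces the identity $f(e)=\sum_{k:\,e\in C_k}\lambda_k$ to have a nonempty index set, so $e$ lies on at least one cycle $C_k$; that $C_k$ is the desired directed cycle through $e$. Because every edge of $\mcG(\sigma_1,\sigma_2)$ runs between $X_\delta$ and $B$, any such cycle alternates between the two sides and therefore has the form $\langle b_1,r_1,\dots,b_m,r_m\rangle$, which is exactly the shape used in the proof of Lemma~\ref{lemma:only_2ndelta_slack}. (If a self-contained proof is preferred over quoting flow decomposition, one can instead delete $e=(u_0,u_1)$, observe that $u_0$ now has excess and $u_1$ has deficit while all other vertices stay balanced, and run a reachability/cut argument on positive-capacity edges to show $u_1$ still reaches $u_0$, producing the cycle directly.)

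The only point that must not be glossed over --- the mild ``obstacle'' here --- is stating the conclusion at the level of the circulation rather than of a single walk: a greedy walk following positive-capacity edges is guaranteed to close up into \emph{some} cycle, but not necessarily one through the prescribed edge $e$. The decomposition identity $f(e)=\sum_{k:e\in C_k}\lambda_k$ is precisely what upgrades ``some cycle exists'' to ``every positive-flow edge is on a cycle,'' and once that is in place the rest is routine bookkeeping with the marginal constraints.
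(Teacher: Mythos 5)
Your proposal is correct, and its core computation---that $\sigma=\sigma_1-\sigma_2$ has zero net divergence at every vertex because $\sigma_1$ and $\sigma_2$ share the marginals $\hat{\mu}_\delta$ and $\nu$---is exactly the conservation fact the paper's proof also rests on. Where you diverge is in how that fact is turned into a cycle through the prescribed edge: you invoke the flow-decomposition theorem for nonnegative circulations as a black box, reading off from $f(e)=\sum_{k:\,e\in C_k}\lambda_k$ that every positive-capacity edge lies on some cycle, whereas the paper gives a self-contained constructive argument: it grows a path from $b$ by a DFS-style walk (an outgoing edge always exists at the endpoint precisely by the conservation observation), and whenever the walk closes a cycle \emph{not} containing $(r,b)$ it cancels that cycle by decrementing capacities and continues, terminating because each cancellation strictly reduces total residual capacity and the path length is bounded by $2n$ in the bipartite network. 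Notably, the ``obstacle'' you flag---that a greedy walk might close into a cycle avoiding $e$---is exactly what the paper's cancellation step handles, so your parenthetical fallback (delete $e$, then argue reachability from $u_1$ back to $u_0$ among positive-capacity edges) is the variant closest in spirit to the paper. Your version buys brevity and modularity by quoting a standard theorem; the paper's buys self-containment and an explicit procedure, at the cost of a slightly longer argument. Either is acceptable, and your remark that bipartiteness forces the cycle to alternate as $\langle b_1,r_1,\ldots,b_m,r_m\rangle$ correctly matches the form used in the proof of Lemma~\ref{lemma:only_2ndelta_slack}.
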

\begin{proof}
    To prove this lemma, we conduct a DFS-style search from the point $b$ in the residual network to compute a directed path $P$ from $b$ to $r$. This proves the lemma since concatenating the edge $(r,b)$ to $P$ results in a directed cycle on the residual network containing $(r,b)$.
    Our proof relies on the following observation: Since both $\sigma_1$ and $\sigma_2$ are transport plans from $\hat{\mu}_\delta$ to $\nu$, by the construction of the residual network, for any point $u\in X_\delta\cup B$, the total capacity of incoming edges to $u$ is equal to the total capacity of outgoing edges from $u$. 

    We conduct a DFS-style procedure that grows a path $P=\langle r=p_0, b=p_1, p_2, \ldots, p_k\rangle$ as follows. Initially, we set $P=\langle r=p_0, b=p_1\rangle$. At each step, for the last point $p_k$ of the path $P$, let $N(p_k)$ denote the set of all outgoing edges from $p_k$. Note that since there exists an incoming edge $(p_{k-1}, p_k)$ in the residual graph, by the observation stated above, $N(p_k)$ is not empty. Consider any point $p\in N(p_k)$.
    \begin{itemize}
        \item If $p=r$, then $P\circ (p_k,r)$ is a cycle containing $(r,b)$, as desired.
        \item Otherwise, if $p$ already exists in the path $P$ as $p_i$ for some $i\ge 1$, then we have found a cycle $C=\langle p_i, p_{i+1},\ldots, p_k,p_i\rangle$. We ``cancel'' this cycle as follows. Define the capacity of the cycle $c(C)$ as the minimum capacity of all edges on $C$. We then decrease the capacity of all edges on $C$ by $c(C)$ and for those that now have a zero capacity, we simply remove them from the residual network. We set $P=\langle p_0=r, p_1=b, \ldots, p_i\rangle$ and continue our search.
        \item Otherwise, we add $p$ as $p_{k+1}$ to the path $P$ and continue the search.
    \end{itemize}
    Note that since all edges in the residual network have a positive finite capacity at all times, each time we cancel a cycle reduces the total capacity of the edges of the residual network. Furthermore, the length of the path $P$ will never be more than $2n$, as there are only $n$ points in the set $B$ and the residual network is a bipartite graph. Hence, our DFS-style procedure will terminate by returning a cycle containing $(r,b)$.
\end{proof}

\section{Missing Details of Section \ref{sec:semi}} \label{sec:semi-appendix}

Without loss of generality, we will assume that $\varepsilon < \frac{1}{2}$. Otherwise, we can replace $\varepsilon$ with $\min\{\frac{1}{3}, \varepsilon\}$ without increasing runtime dependence on $\varepsilon$. This choice of $\varepsilon$ is important to guarantee that all neighborhoods $\neighborhood{\varepsilon}{b}$ are disjoint.

The following lemmas roughly split the edge costs into three cases. First, we observe that we can safely transport mass to any point $b\in B$ from regions of $A$ that have extremely small distances to $b$. Second, we observe that we can arbitrarily transport any mass of $\mu$ that is far enough from all points of $B$ at the cost of an small error. Finally, given a box $\cell$ with the property that for any point $b\in B$, the point $b$ is $(1+\varepsilon)$-approximately equidistant from all points inside $\cell$, the mass of $\mu$ inside $\cell$ can be moved to the center of $\cell$ without too much sacrifice.

\approxlocalflow*

\begin{proof}
    We break the proof of this lemma into two stages. For stage I, we argue that there exists an intermediary transport plan $\hat{\tau}_1$ between $\mu$ and $\nu$ where \textit{(i)} $\int_{\neighborhood{\varepsilon}{b}}\hat{\tau}_1(a,b) \; da$ is as large as possible for all $b$ and \textit{(ii)} $\plancost(\hat{\tau}_1) \leq (1+O(\varepsilon)) \plancost(\tau^*)$. For stage II, we argue that from $\hat{\tau}_1$, the choice of mass within each neighborhood $\neighborhood{\varepsilon}{b}$ which is greedily coupled with $b$ can be swapped so that $\hat{\tau}$ agrees with $\tau$ while only incurring a $(1+\varepsilon)$ approximation error.

    \textbf{Stage I:}
    Let $b$ be an arbitrary element of $B$. Suppose $\int_{\neighborhood{\varepsilon}{b}} \tau^*(x, b) \, dx < \min \left\{\int_{\neighborhood{\varepsilon}{b}} \mu(x) \, dx, \nu(b)\right\}$, i.e. there is some mass within the approximate ball $\neighborhood{\varepsilon}{b}$ which could be routed to $b$ by $\tau^*$ but is instead routed to some point $b' \neq b$. Then there exist sets $U \subseteq \neighborhood{\varepsilon}{b}, V \subseteq A \setminus \neighborhood{\varepsilon}{b}$ and $B' \subseteq B \setminus \{b\}$ where $\sum_{b' \in B'} \int_U \tau^*(u, b') \, du = \int_V \tau^*(v, b) \, dv > 0$ (some mass in $\neighborhood{\varepsilon}{b}$ is routed away from $b$ and an equal mass outside $\neighborhood{\varepsilon}{b}$ is routed to $b$ via $\tau^*$).
    
    Since $U \subseteq \neighborhood{\varepsilon}{b}$, where the radius of the approximate ball $\neighborhood{\varepsilon}{b}$ is $\varepsilon$ and $\eucl{b}{b'} \geq 1$ for all $b' \in B'$, we know that
    \[\eucl{u}{b'} \geq (1-\varepsilon)\eucl{b'}{b} \text{ and } \eucl{u}{b} \leq \varepsilon \eucl{b'}{b}\]
    for all $u \in U$ and $b' \in B'$. Hence, $\eucl{u}{b} \leq \frac{\varepsilon}{1-\varepsilon} \eucl{u}{b'}$. Moreover, by the triangle inequality we can conclude that for any $u,v \in U \times V$ and $b' \in B'$,
    \[\eucl{v}{b'} \leq \eucl{v}{b} + \eucl{b}{u} + \eucl{u}{b'} \leq \eucl{v}{b} + \left(1+\frac{\varepsilon}{1-\varepsilon}\right) \eucl{u}{b'}.\]
    Let $\tau_{1}$ be defined as the transport plan which routes the mass of $U$ to $b$, the mass of $V$ to $b'$, and equals $\tau^*$ elsewhere. It follows that
    \begin{align*}
        &\int_{U} \eucl{u}{b} \hat{\tau}_1(u,b) \, du + \sum_{b' \in B'} \int_{V} \eucl{v}{b'} \hat{\tau}_1(v, b') \, dv \\
        &\leq \left(1+\frac{2\varepsilon}{1-\varepsilon}\right)\sum_{b' \in B'} \int_{U} \eucl{u}{b'} \tau^*(u,b') \, du + \int_{V} \eucl{v}{b} \tau^*(v, b) \, dv.
    \end{align*}
    Since $\varepsilon \in (0,\frac{1}{2})$, we note that $\frac{2\varepsilon}{1-\varepsilon} \leq 4\varepsilon$. Furthermore, since $\hat{\tau}_1 = \tau^*$ outside of $(U \cup V) \times B' \cup \{b\}$, we deduce $\plancost(\hat{\tau}_1) \leq (1+4\varepsilon)\plancost(\tau^*)$.

    Finally, we note that $\neighborhood{\varepsilon}{b} \cap \neighborhood{\varepsilon}{b'} = \varnothing$ for all $b \neq b'$ since $\varepsilon \in (0,\frac{1}{2})$. Therefore, the $(1+4\varepsilon)$ approximation factor is incurred at most once for each $U \subseteq \neighborhood{\varepsilon}{b}$ which is rerouted. Since each operation reroutes a maximal amount of mass in $\neighborhood{\varepsilon}{b}$ to $b$ and every pair of neighborhoods is disjoint, we conclude that after $n$ such $U,V$ swaps a satisfactory transport plan $\hat{\tau}_1$ has been constructed from $\tau^*$.

    \textbf{Stage II:}
Let $b$ be an arbitrary element of $B$, and let $\neighborhood{\varepsilon}{b}$ be the approximate ball of radius $\varepsilon$ centered at $b$. Suppose there exist disjoint sets $X, Y \subset \neighborhood{\varepsilon}{b}$ and some $B' \subset B \setminus \{b\}$ where
\[\int_X \tau(x, b) \, dx = \int_Y \hat{\tau}_1(y, b) \, dy = \sum_{b' \in B'} \int_X \hat{\tau}_1(x, b') \, dx > 0.\]
In the same manner as stage I, we now show that for any $x,y \in X \times Y$ and $b' \in B'$,
\[\eucl{x}{b} + \eucl{y}{b'} \leq (1+6\varepsilon)\eucl{x}{b'} + \eucl{y}{b}.\]
Let $x \in X, y \in Y$ and $b' \in B'$ be arbitrarily chosen. First, observe that
\[\eucl{y}{b'} \leq \eucl{x}{b'} + \eucl{x}{b} + \eucl{y}{b} \leq \eucl{x}{b'} + 2\varepsilon\]
by triangle inequality and condition 2 of Lemma \ref{lem:partition-properties}. Additionally, since $x \in X \subseteq \neighborhood{\varepsilon}{b}$, we note that $\eucl{x}{b} \leq \varepsilon$. Now we can use the triangle inequality to bound
\[\eucl{x}{b'} \geq \eucl{b}{b'} - \eucl{x}{b} \geq 1 - \alpha.\]
Combining these inequalities and $\varepsilon \in (0, \frac{1}{2})$, we conclude
\begin{align*}
    \eucl{y}{b'} + \eucl{x}{b} &\leq \eucl{x}{b'} + 3\varepsilon \\
    &\leq (1 + \frac{3\varepsilon}{1-\varepsilon}) \eucl{x}{b'} + \eucl{y}{b} \\
    &\leq (1 + 6\varepsilon)\eucl{x}{b'} + \eucl{y}{b}.
\end{align*}
Let $\hat{\tau}$ be defined as the transport plan which routes the mass of $Y$ to $B'$, the mass of $X$ to $b$, and equals $\hat{\tau}_1$ elsewhere. It follows that
\begin{align*}
    &\int_{X} \eucl{x}{b} \hat{\tau}(x,b) \, dx + \sum_{b' \in B'} \int_{Y} \eucl{y}{b'} \hat{\tau}(y, b') \, dy \\
    &\leq \left(1+6\varepsilon\right)\sum_{b' \in B'} \int_{X} \eucl{x}{b'} \hat{\tau}_1(x,b') \, dx + \int_{Y} \eucl{y}{b} \hat{\tau}_1(y, b) \, dv.
\end{align*}
Furthermore, since $\hat{\tau}_1 = \hat{\tau}$ outside of $(X \cup Y) \times (B' \cup \{b\})$, we deduce $\plancost(\hat{\tau}_1) \leq (1+6\varepsilon)\plancost(\tau^*)$. Repeat for every $b \in B$ and every $X,Y \subseteq \neighborhood{\varepsilon}{b}$ and by construction we then have $\hat{\tau} = \tau$ on $\neighborhood{\varepsilon}{b} \times \{b\}$ for all $b \in B$. Note that no neighborhoods $\neighborhood{\varepsilon}{b}$ intersect, so the cost approximation factor does not increase since no set $X$ is swapped more than once. We conclude that $\plancost(\hat{\tau}) \leq (1+22\varepsilon) \plancost(\tau^*)$.
\end{proof}

\approxlargedists*

\begin{proof}
    Suppose $a \in A \setminus \bigcup_{\cell \in \partition} \cell$. Then $a$ satisfies $\eucl{a}{b_1} \geq \frac{1}{\varepsilon} \eucl{b_1}{b_2}$ for all $b_1, b_2 \in B$. By the triangle inequality, we note that $\eucl{a}{b_2} \leq (1+\varepsilon)\eucl{a}{b_1}$ for all such $a \in A$ and $b_1,b_2 \in B$. 

    Since $\tau$ and $\widetilde{\tau}$ are both feasible transport plans, they satisfy $\sum_{b \in B} \tau(a,b) = \sum_{b \in B} \widetilde{\tau}(a,b)$. We deduce that
    \[\sum_{b \in B} \eucl{a}{b} \cdot \tau(a,b) \leq (1+\varepsilon) \sum_{b \in B} \eucl{a}{b} \cdot \widetilde{\tau}(a,b)\]
    by combining the previous two statements. Finally, integrating over all $a \in A \setminus \bigcup_{\cell \in \partition} \cell$ gives us the desired result

    \[\sum_{b \in B} \int_{A \setminus \bigcup_{\cell \in \partition} \cell} \eucl{a}{b} \cdot \tau(a,b) \; da \leq (1+\varepsilon) \sum_{b \in B} \int_{A \setminus \bigcup_{\cell \in \partition} \cell} \eucl{a}{b} \cdot \widetilde{\tau}(a,b) \; da.\]
\end{proof}

\approxdiscretization*

\begin{proof}
Note that
\[\plancost(\tau) = \sum_{b \in B} \int_{A \setminus \neighborhood{\varepsilon}{b}} \eucl{a}{b} \tau(a,b) \, da + \sum_{b \in B} \int_{\neighborhood{\varepsilon}{b}} \eucl{a}{b} \tau(a, b) \, da\]
where $\neighborhood{\varepsilon}{b}$ again denotes the approximate ball centered at $b$ of radius $\varepsilon$. We can analogously claim
\begin{align*}
    \plancost(\hat{\tau}) &= \sum_{b \in B} \int_{A \setminus \neighborhood{\varepsilon}{b}} \eucl{a}{b} \hat{\tau}(a,b) \, da + \sum_{b \in B} \int_{\neighborhood{\varepsilon}{b}} \eucl{a}{b} \hat{\tau}(a, b) \, da \\
    &= \sum_{b \in B} \int_{A \setminus \neighborhood{\varepsilon}{b}} \eucl{a}{b} \hat{\tau}(a,b) \, da + \sum_{b \in B} \int_{\neighborhood{\varepsilon}{b}} \eucl{a}{b} \tau(a, b) \, da,
\end{align*}
where the second equality follows from the fact that $\hat{\tau} = \tau$ on $\cup_{b \in B} (\neighborhood{\varepsilon}{b} \times b)$. It therefore suffices to compare the transport plans on the pairs $a,b \in A \times B$ of (approximate) distance greater than $\varepsilon$. That is,
\[\plancost(\tau) - \plancost(\hat{\tau}) = \sum_{b \in B} \int_{A \setminus \neighborhood{\varepsilon}{b}} \eucl{a}{b} (\tau(a,b) - \hat{\tau}(a,b)) \, da.\]
For simplicity, let $\mathcal{Z} = \bigcup_{\cell \in \partition} \cell$, $\mathcal{X} = A \setminus \mathcal{Z}$, and define $\mathcal{Z}_b := \mathcal{Z} \setminus \neighborhood{\varepsilon}{b}$ for each $b \in B$. Then, we observe
\[\plancost(\tau) - \plancost(\hat{\tau}) = \sum_{b \in B} \left[\int_{\mathcal{X}} \eucl{a}{b} (\tau(a,b) - \hat{\tau}(a,b)) \, da 
 + \int_{\mathcal{Z}_b} \eucl{a}{b} (\tau(a,b) - \hat{\tau}(a,b)) \, da\right].\]
 For convenience, define $\tau' = \tau - \hat{\tau}$ and let $\partition_b = \{\cell \in \partition : \cell \subseteq \neighborhood{\varepsilon}{b}\}$. Additionally define the discrete plans $\hat{\sigma}$ and $\sigma'$ by $\hat{\sigma}(c_\cell, b) := \int_\cell \hat{\tau}(a,b) \; da$ and $\sigma'(c_\cell, b) = \sigma(c_\cell, b) - \hat{\sigma}(c_\cell, b)$. We conclude that
 \begin{align*}
     \plancost(\tau) - \plancost(\hat{\tau}) &= \sum_{b \in B} \left[\int_{\mathcal{X}} \eucl{a}{b} \tau'(a,b) \, da 
 + \int_{\mathcal{Z}_b} \eucl{a}{b} \tau'(a,b) \, da\right] \\
 &\leq \sum_{b \in B} \left[\int_{\mathcal{X}} \eucl{a}{b} \tau'(a,b) \, da 
 + (1+\varepsilon) \sum_{\cell \in \partition \setminus \partition_b} \eucl{\centerof\cell}{b} \sigma'(\centerof\cell,b)\right] \\
 &\leq \varepsilon \sum_{b \in B} \left[\int_{\mathcal{X}} \eucl{a}{b} \hat{\tau}(a,b) \, da 
 + (1+\varepsilon) \sum_{\cell \in \partition \setminus \partition_b} \eucl{\centerof\cell}{b} \hat{\sigma}(\centerof\cell,b)\right] \\
 &\leq \varepsilon \sum_{b \in B} \left[\int_{\mathcal{X}} \eucl{a}{b} \hat{\tau}(a,b) \, da 
 + (1+\varepsilon)^2 \int_{\mathcal{Z}_b} \eucl{a}{b} \hat{\tau}(a,b) \, da\right] \\
 &\leq \frac{9}{4} \varepsilon \sum_{b \in B} \left[\int_{\mathcal{X}} \eucl{a}{b} \hat{\tau}(a,b) \, da 
 + \int_{\mathcal{Z}_b} \eucl{a}{b} \hat{\tau}(a,b) \, da\right],
 \end{align*}
 where the second and fourth lines follow from the first condition of Lemma \ref{lem:partition-properties}, the third line follows from Lemma \ref{lemma:approx-largedists} and the fact that $\sigma$ is a $(1+\varepsilon)$-approximate transport plan, and the last line uses $\varepsilon \in (0, \frac{1}{2})$. We conclude that $\plancost(\tau) \leq (1+\frac{9}{4}\varepsilon) \plancost(\hat{\tau})$.
\end{proof}

\section{Missing Proofs of Section \ref{sec:discrete-OT}}

\discretegraph*

\begin{proof}
The vertex set of our graph consists of the center points of all non-empty cells of the quad-tree as well as the point sets $A\cup B$. At each level $i$ of the tree, the total number of non-empty cells of level $i$ is no more than $n$. Since our spanner contains the center point of each non-empty cell at all levels, where $h=O(\log \log n)$, the total number of vertices is $O(n\log \log n)$.

Next, we bound the number of edges of our graph. For any cell $\cell$, we add two sets of edges corresponding to two $(1+\varepsilon)$-spanners $\mathcal{S}_\cell$ and $\mathcal{S}'_\cell$. Each one of these spanners has $O(n_\cell \varepsilon^{-d})$ edges and bounded degree of $O(n_\cell \varepsilon^{-d} \log n) \leq O(n \varepsilon^{-d} \log n)$ for cell $\cell$. In each level of the graph, there are at most $n$ points distributed among cells where each point appears at most once. Therefore, $|E|$ is bounded by a sum over all levels $\ell$ of the graph:
\[O(\sum_{\cell} n_\cell \varepsilon^{-d}) \leq O(\sum_{\ell} n \varepsilon^{-d}) = O(n \varepsilon^{-d} h).\]

The cost of any edge in the spanner $\spanner$ is the Euclidean distance of the two endpoints of the edge. Therefore, from the triangle inequality, any path from $a$ to $b$ has a cost of at least the Euclidean distance of $a$ and $b$; i.e, $\lengthof{\pathof{a,b}}\ge \eucl{a}{b}$.

Suppose $(a,b)$ have least common ancestor $\cell$. Let $\subcell_a, \subcell_b$ be the subcells in $\mathcal{S}'_\cell$ which contain $a$ and $b$, respectively. Since $\mathcal{S}'_\cell$ is a $(1+\varepsilon)$-spanner, the length of the shortest path from $\subcell_a$ to $\subcell_b$ is a $(1+\varepsilon)$-approximation of their Euclidean distance.

Define $\pathof{a}$ and $\pathof{b}$ to be the shortest paths from $a$ to $\subcell_a$ and $b$ to $\subcell_b$, respectively, only taking greedy edges. Then, one path from $a$ to $b$ in the graph is the following path:
\[ P = \pathof{a} \circ \pathofcell{a,b}{\cell} \circ  \pathof{b}.  \]
For any cell $\cell$, define $\subcelldiam{\cell} = \sqrt{d}\varepsilon h^{-1} \ell_{\cell}$ to be the diameter of the subcells of $\cell$. Define $\subcelldiam{a}$ and $\subcelldiam{b}$ to be the diameter of the subcells $\subcell_a$ and $\subcell_b$. Recall that $\cell$ is of level $i>0$.

For any $\cell' \in \children\cell$, we note that the shortest path from $\cell'$ to $\cell$ is bounded above in length by $(1+\varepsilon) \eucl{c_{\cell'}}{c_\cell}$. Using this, we can bound the length of $\pathof{a}$ and $\pathof{b}$ by the greedy paths going directly up the tree:

\[ \lengthof{\pathof{a}} + \lengthof{\pathof{b}} \le \frac{1}{2}(\subcelldiam{a} + \subcelldiam{b}).  \]
Next, we bound the expected value of $\subcelldiam{a}$ and $\subcelldiam{b}$. For any level $j$ of the tree, the probability that the least common ancestor of $(a,b)$ is of level $j$ is 
\[\prob{\level{a,b}=j}\le \frac{\sqrt{d}\eucl{a}{b}}{\ell_{j+1}}.\]
As a result,
\begin{equation*}
    \expect{\subcelldiam{a}} \le \sum_{j=1}^{h} \prob{\level{a,b}=j}.\subcelldiam{j+1} \le \sum_{j=1}^{h}\frac{\sqrt{d}\eucl{a}{b}}{\ell_{j+1}}.\frac{\varepsilon \ell_{j+1}}{2\sqrt{d}h} =\frac{\varepsilon}{2}\eucl{a}{b} .
\end{equation*}
An analogous claim can be made for $\subcelldiam{b}$. Finally, as discussed before, the cost of the shortest path between $c_{\subcell_a}, c_{\subcell_b}$ is bounded above by $ (1+\varepsilon)\eucl{\centerof{\subcell_a}}{\centerof{\subcell_b}}$. Using triangle inequality,
\[ \eucl{\centerof{\subcell_a}}{\centerof{\subcell_b}}\le \eucl{a}{b} + \frac{1}{2}(\subcelldiam{a} + \subcelldiam{b}) \]
Combining all these bounds,
\begin{align*}
    \expect{\lengthof{\pathof{a,b}}}&\le \expect{\lengthof{P}} =  \expect{\lengthof{\pathof{a}}+\lengthof{\pathofcell{ab}{\cell}}+\lengthof{\pathof{b}}}\\ &\le \expect{(1+\varepsilon)\eucl{\centerof{\subcell_a}}{\centerof{\subcell_b}} + \frac{1}{2}(\subcelldiam{a} + \subcelldiam{b})}\\ &\le (1+\varepsilon)\eucl{a}{b} + \frac{1}{2}((1+\varepsilon)+1)\expect{\subcelldiam{a} + \subcelldiam{b}}\\ &\le \left((1+\varepsilon) + \frac{(2+\varepsilon)\varepsilon}{2}\right)\eucl{a}{b}\le (1+\frac{5}{2}\varepsilon)\eucl{a}{b}.
\end{align*}
where the last inequality assumes $\varepsilon \leq 1$. If not, then $\varepsilon$ can be substituted for $1$ without loss of generality. To obtain $(1+\varepsilon)$-approximation instead, one can rescale $\varepsilon$ by $\frac{2}{5}$.

\end{proof}

\improveddualdistortion*

\begin{proof}
For any edge $(u,v)\in E$, consider the following cases.

\begin{enumerate}
    \item \textit{Greedy edges:} If $(u,v)$ is an greedy edge, by the definition, there exists a cell $\cell$ such that $u,v \in \localinstance_\cell$. Let $(f_\cell,y_\cell)$ denote the flow and the set of dual weights computed on the local instance $\localinstance_\cell$. From the properties of exact primal-dual minimum cost flow, $|y_\cell(u) - y_\cell(v)| \le \eucl{u}{v}$. Therefore, by the dual assignment of our algorithm,
    \[ |y(u) - y(v)| = |(y_\cell(u)-y_\cell(\centerof{\cell}) + y(\centerof\cell))-(y_\cell(v)-y_\cell(\centerof{\cell}) + y(\centerof\cell))| \le \eucl{u}{v}. \]

    \item \textit{Shortcut edges:} If $(u,v)$ is a shortcut edge, then there exists a cell $\cell$ of level $i$ and children $\cell_1, \cell_2\in\children\cell$ such that $u$ (resp, $v$) is the center point of a subcell $\subcell_1\in \subcells{\cell_1}$ (resp. $\subcell_2\in \subcells{\cell_2}$); i.e, $u=\centerof{\subcell_1}$ (resp. $v=\centerof{\subcell_2}$). Observe that $\centerof{\subcell_1} \in \localinstance_{\cell_1}$ and $\centerof{\subcell_2} \in \localinstance_{\cell_2}$. Recall that $\spannercell{\cell_1}$ (resp. $\spannercell{\cell_2}$) denotes the $(1+\varepsilon)$-spanner constructed on the local instance $\localinstance_{\cell_1}$ (resp. $\localinstance_{\cell_2}$). Let $P = \langle \centerof{\subcell_1}=p_1, \dots, p_{k_1}=\centerof{\cell_1} \rangle$ be the path in $\spannercell{\cell_1}$ from $\centerof{\subcell_1}$ to $\centerof{\cell_1}$. Similarly, let $Q = \langle \centerof{\subcell_2}=q_1, \dots, q_{k_2}=\centerof{\cell_2} \rangle$ be the path in $\spannercell{\cell_2}$ connecting $\centerof{\subcell_1}$ to $\centerof{\cell_1}$. Finally, note that  $\centerof{\cell_1}, \centerof{\cell_2} \in \localinstance_\cell$ and let $R=\langle \centerof{\cell_1} = r_1, \dots, r_{k_3} = \centerof{\cell_2} \rangle$ be the path connecting the two center points $\centerof{\cell_1}$ and $\centerof{\cell_2}$ in $\spannercell{\cell}$. All the edges in the paths $P, Q,$ and $R$ are greedy edges. 
    By the triangle inequality,
    \begin{align*}
        |y(\centerof{\subcell_1}) - y(\centerof{\subcell_2})| &\leq |y(\centerof{\subcell_1}) - y(\centerof{\cell_1})| + |y(\centerof{\cell_1}) - y(\centerof{\cell_2})| + |y(\centerof{\cell_2}) - y(\centerof{\subcell_2})| \\
        &\leq \left( \sum_{j=1}^{k_1-1} |y(p_j) - y(p_{j+1})|\right) + \left( \sum_{j=1}^{k_3-1} |y(r_i) - y(r_{i+1})| \right) \\
        &\;\;\;\;\;+ \left( \sum_{j=1}^{k_2-1} |y(q_{j+1}) - y(q_j)| \right) \\
        &\leq (1+\varepsilon) \eucl{\centerof{\subcell_1}}{\centerof{\cell_1}} + (1+\varepsilon) \eucl{\centerof{\cell_1}}{\centerof{\cell_2}} + (1+\varepsilon) \eucl{\centerof{\cell_2}}{\centerof{\subcell_2}}.
    \end{align*}
    
    Since $\subcell_1$ and $\subcell_2$ are subcells of children $\cell_1$ and $\cell_2$ of $\cell$, their side-lengths are both $\frac{\eps\sidelength{\cell_1}}{2dh}$. Thus, the Euclidean distance of their centers is $\eucl{\centerof{\subcell_1}}{\centerof{\subcell_2}}\ge\frac{\eps\sidelength{\cell_1}}{2dh}$. Furthermore, $\eucl{\centerof{\subcell_1}}{\centerof{\cell_1}} \leq \sqrt{d} \sidelength{\cell_1}$ and $\eucl{\centerof{\cell_2}}{\centerof{\subcell_2}} \leq \sqrt{d} \sidelength{\cell_2}$. Combining these inequalities gives
    \[\eucl{\centerof{\subcell_1}}{\centerof{\cell_1}} \leq \sqrt{d} \sidelength{\cell_1} \leq \frac{2d^{3/2} h}{\varepsilon} \eucl{\centerof{\subcell_1}}{\centerof{\subcell_2}},\]
    and the analogous for $\eucl{\centerof{\cell_2}}{\centerof{\subcell_2}}$. By triangle inequality, we can extend this to conclude $\eucl{\centerof{\cell_1}}{\centerof{\cell_2}} \leq O(\frac{d^{3/2}h}{\varepsilon})\eucl{\centerof{\subcell_1}}{\centerof{\subcell_2}}$. Therefore, 
    \begin{align*}
        |y(\centerof{\subcell_1}) - y(\centerof{\subcell_2})| &\leq (1+\varepsilon) \eucl{\centerof{\subcell_1}}{\centerof{\cell_1}} + (1+\varepsilon) \eucl{\centerof{\cell_1}}{\centerof{\cell_2}} + (1+\varepsilon) \eucl{\centerof{\cell_2}}{\centerof{\subcell_2}} \\
        &\leq O\left(\frac{d^{3/2}h}{\eps}\right)\eucl{\centerof{\subcell_1}}{\centerof{\subcell_2}}.
    \end{align*}
\end{enumerate}
\end{proof}

\strongduality*
\begin{proof}
    By construction, for any shortcut edge $(u,v)\in E$, $\sigma(u,v) > 0$. For any greedy edge $(u,v)\in E$, there exists a unique cell $\cell$ such that $u,v\in \localinstance_\cell$ and the spanner $\spannercell{\cell}$ contains the edge $(u,v)$. By the dual assignment, if $\sigma_\cell(u,v)>0$, then
    \[y(u) - y(v) = y_\cell(u) - y_\cell(v) = \eucl{u}{v}.\]
    Therefore, for any edge $(u,v)$ carrying a positive flow in $\sigma$, $y_\cell(u) - y_\cell(v) = \eucl{u}{v}$. As a result,
    \begin{align*}
    \sum_{(u,v) \in E} \sigma(u,v) \eucl{u}{v} &= \sum_{(u,v) \in E} \sigma(u,v) (y(u) - y(v)) \\
    &= \sum_{w \in V} \left(\sum_{z : (w,z) \in E} \sigma(w,z)\right) y(w) \\
    &= \sum_{w \in V} y(w) \cdot \eta(w).
\end{align*}
\end{proof}

\section{The Multiplicative Weight Update Framework}
\label{sec:boosting}

At a very high level, the multiplicative weights method uses an approximate oracle to estimate the best flow and iteratively updates the flow using the oracle as a rough guide. In our setting, we construct an undirected graph $\spanner = (V,E)$ with $A \cup B \subseteq V$ that is a (randomized) spanner of $A \cup B$ under Euclidean distance and we compute an $\varepsilon$-approximate MCF on $\spanner$. The approximate oracle is a greedy algorithm $\greedy$ which routes flow along tree edges. This greedy tree flow leads to high costs for some pairs which have positive flow, and the multiplicative weights method gradually reroutes the flow along shorter paths between these two points in the graph.

We use complementary slackness to guide which edges are valuable. Using the LP duality, the MCF problem can be formulated as computing dual weights $y$ maximizing $\sum_{v \in V} \eta(u) y(u)$ subject to $y(u) - y(v) \leq \eucl{u}{v}$ for all $(u,v) \in E$. Equivalently, the collection of constraints can be expressed as $\max_{(u,v) \in E} \frac{y(u) - y(v)}{\eucl{u}{v}} \leq 1$. We refer to the expression $\frac{y(u) - y(v)}{\eucl{u}{v}}$ as the \emph{slack} of an edge $(u,v)$. By complementary slackness, if $(\sigma,y)$ is an optimal primal-dual pair, then $\sigma(u,v)$ is positive when the slack of $(u,v)$ is 1. In view of this observation, if the slack of a directed edge $e$ is large, the MWU method increases the flow along $e$.

We transform the undirected graph $\spanner$ into a directed graph $G=(V,E)$ that takes the vertices of $\spanner$ and adds both directed edges for every undirected edge of $\spanner$. Additionally set $\eta(u)$ to be the original demand of $u \in A \cup B$ and $0$ for $u \in V \setminus (A \cup B)$. The cost of an edge $e=(u,v)$ in $G$, denoted by $\edgecost{e}$, is $\eucl{u}{v}$. Given $G$, a demand function $\eta \colon V \to \real$, and a parameter $\varepsilon > 0$, the MWU algorithm computes a flow function $\sigma \colon E \to \real_{\geq 0}$ that satisfies the demand and $\plancost(\sigma) \leq (1+\varepsilon) \opt$, where $\opt$ is the min-cost flow for $(G, \eta)$. The algorithm assumes the existence of a greedy algorithm $\greedy(G, \eta)$ that computes a primal-dual pair $(\sigma, y)$ on $G$, where $\sigma \colon E \rightarrow \real$ is a flow function that routes the demand $\eta$, i.e. $\sum_{v : (u,v) \in E} \sigma(u,v) - \sigma(v,u) = \eta(u)$ for all $u \in V$, and $y \colon V \to \real$ is a dual weight function that satisfies the following two conditions:

\smallskip
\begin{description}
    \item[(C1)] $|y(u) - y(v)| \leq \rho \eucl{u}{v}$ \;\; $\forall (u,v)\in E$,
    \smallskip
    \item[(C2)] $\sum_{(u,v) \in E} \sigma(u,v) \eucl{u}{v} \leq \sum_{u \in V} y(u) \eta(u)$,
\end{description}

\smallskip
where $\rho > 0$ is a parameter. (C1) guarantees $\rho$-approximate feasibility of the computed dual weights. (C2) is a strong-duality condition used to prevent $\greedy$ from returning trivial dual weights, as well as upper bound the cost of the flow $\sigma$.

We now describe the algorithm in more detail. Using one of the known algorithms \cite{charikar2002similarity, indyk2003fast}, we first compute an estimate of the OT cost within a $d \log n$ factor in $O(n \log n)$ time, i.e. it returns a value $\tilde{g}$ such that $\opt \leq \tilde{g} \leq (d\log n) \cdot \opt$. We refer to this algorithm as {\scshape LogApprox}. Using this estimate, we perform an exponential search in the range $\left[\frac{\tilde{g}}{d\log n}, \tilde{g}\right]$ with increments of factor $(1+\varepsilon)$. For any guess value $g$, the MWU algorithm either returns a flow $\sigma \colon E \to \real$ with $\plancost(\sigma) \leq (1+\varepsilon) g$ or returns dual weights as a certificate that $g < \opt$. We now describe the MWU algorithm for a fixed value of $g$.

Set $T = 4\rho^2 \varepsilon^{-2} \log |E|$. The algorithm runs in at most $T$ iterations, where in each iteration, it maintains a pre-flow vector $\sigma^t$ such that $\plancost(\sigma^t) \leq g$. The flow $\sigma^t$ need not route all demand successfully. Initially, set $\sigma^0(e) = \frac{g}{\edgecost{e} \cdot |E|}$ so that $\plancost(\sigma^0) \leq g$. For each iteration $t$, define the \emph{residual demand} of iteration $t$, denoted by $\eta_{\text{res}}^t$, as 
\[\eta_{\text{res}}^t(u) = \eta(u) - \sum_{v : (u,v) \in E} (\sigma^{t-1}(u,v) - \sigma^{t-1}(v,u)).\]
Let $(\sigma_{\text{res}}^t, y^t)$ be the primal-dual flow computed by the $\greedy$ for the residual demands $\eta_{\text{res}}^t$. Recall that $(\sigma_{\text{res}}^t, y^t)$ satisfies (C1) and (C2).
If $\langle \eta_{\text{res}}^t, y^t \rangle \leq \varepsilon g$, then (C2) implies that $\plancost(\sigma_{\text{res}}^t) \leq \varepsilon g$. Since $\sigma_{\text{res}}^t$ routes the residual demands, the flow function $\sigma^t = \sigma^{t-1} + \sigma_{\text{res}}^t$ routes the original demand $\eta$ with a cost $\plancost(\sigma^t) \leq (1+\varepsilon) g$. In this case, the algorithm returns $\sigma^t$ as the desired flow and terminates.

Otherwise, $\langle \eta_{\text{res}}^t, y^t \rangle > \varepsilon g$ and we update the flow along each edge $e=(u,v)$ of $G$ based on the slack $\slack^t(u,v) = \frac{y^t(u) - y^t(v)}{\eucl{u}{v}}$ of $e$ with respect to dual weights $y^t$:
\[\sigma^t(u,v) \leftarrow \exp\left(\frac{\varepsilon}{2\rho^2} \slack^t(u,v)\right) \cdot \sigma^{t-1}(u,v).\]

We emphasize that flow along an edge is increasing if the slack is large. Then, one needs to rescale $\sigma^t$ so that its cost is bounded above by $g$. If the algorithm does not terminate within $T$ rounds, we conclude that the value of $g$ is smaller than the MCF cost. We increase $g$ by a factor of $(1+\varepsilon)$ and repeat the MWU algorithm.

\begin{algorithm}
    \caption{Minimum-Cost-Flow$(G, \eta, \varepsilon)$:}
    \label{alg:boost}
    \begin{algorithmic}
        \STATE $\tilde{g} \leftarrow \textsc{LogApprox}(G, \eta)$
        \STATE $g \leftarrow \frac{\tilde{g}}{d\log n}, T = 4\rho^2 \varepsilon^{-2} \log |E|$
        \REPEAT
        \STATE $\sigma^0(u,v) = \sigma^0(v,u) = \frac{g}{\eucl{u}{v} \cdot |E|} \,\, \forall (u,v) \in E$
        \FOR{$t = 1, \dots, T$}
        \STATE $\eta_{\text{res}}^t(u) \leftarrow \eta(u) - \left(\sum_{v : (u,v) \in E} \big[\sigma^{t-1}(u,v) - \sigma^{t-1}(v,u)\big]\right) \,\, \forall u \in V$
        \STATE $(\sigma_{\text{res}}^t, y^t) \leftarrow \greedy(G, \eta_{\text{res}}^t)$
        \IF{$\langle \eta_{\text{res}}^t, y^t \rangle \leq \varepsilon \cdot g$}
        \RETURN $\sigma^{t-1} + \sigma_{\text{res}}^t$
        \ENDIF
        \STATE $\slack^t(u,v) \leftarrow \frac{y^t(u) - y^t(v)}{\eucl{u}{v}} \quad \forall(u,v) \in E$
        \STATE $\sigma^t(u,v) \leftarrow \exp(\beta \slack^t(u,v)) \cdot \sigma^{t-1}(u,v)$
        \STATE Rescale $\sigma^t$ so $\plancost(\sigma^t) \leq g$
        \ENDFOR
        \STATE $g \leftarrow (1+\varepsilon) g$
        \UNTIL{flow is returned}
    \end{algorithmic}
\end{algorithm}

The following Lemma regarding Algorithm \ref{alg:boost} is proven in \cite{zuzic2021simple}, which we follow closely in this work for its use of primal-dual oracles.

\begin{lemma}\label{lemma:boosting}
    Given an $O(d\log n)$ approximate guess $g$ of the minimum cost flow value and an algorithm $\greedy$ which computes a primal-dual pair satisfying conditions (C1) and (C2) in $T_\rho(n)$ time, a $(1+\varepsilon)$-approximate minimum cost flow on $G$ can be computed in $O((T_\rho(n) + |E|) \frac{\rho^2 \log n}{\varepsilon^2} \log (d\log n))$ time.
\end{lemma}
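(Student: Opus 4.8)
The plan is to instantiate the multiplicative-weights (MWU) analysis of Zuzic~\cite{zuzic2021simple} with $\greedy$ playing the role of the approximate primal--dual oracle; no genuinely new idea is needed beyond checking that conditions (C1) and (C2) --- which $\greedy$ satisfies by Lemmas~\ref{lem:loglog-dual-distortion} and~\ref{improvedstrongduality} --- are precisely the two hooks the argument uses. Fix a guess value $g$ and analyze one run of the inner \textsc{for}-loop of Algorithm~\ref{alg:boost}. Two facts are immediate. First, the explicit rescaling at the end of each iteration maintains the invariant $\plancost(\sigma^t)\le g$. Second, if the early-return test $\langle\eta_{\text{res}}^t,y^t\rangle\le\varepsilon g$ ever fires, then (C2) gives $\plancost(\sigma_{\text{res}}^t)\le\varepsilon g$, so $\sigma^{t-1}+\sigma_{\text{res}}^t$ routes the original demand $\eta$ with cost at most $(1+\varepsilon)g$. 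Hence the only real obligation is: whenever $g\ge\opt$, the loop \emph{returns} within $T=4\rho^2\varepsilon^{-2}\log|E|$ iterations; contrapositively, non-return after $T$ iterations certifies $g<\opt$ and licenses increasing $g$.

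For this I would run the standard potential argument. Track $\netweight^t:=\sum_{e\in E}\edgecost{e}\,\sigma^t(e)$ on the un-rescaled iterates (a common rescaling factor cancels in the ratios below). By convexity of $\exp$ and Jensen's inequality with the probability weights $\edgecost{e}\sigma^{t-1}(e)/\netweight^{t-1}$, one iteration satisfies
\[
    \frac{\netweight^t}{\netweight^{t-1}}\;\ge\;\exp\!\left(\frac{\varepsilon}{2\rho^2}\cdot\frac{1}{\netweight^{t-1}}\sum_{(u,v)\in E}\sigma^{t-1}(u,v)\,\bigl(y^t(u)-y^t(v)\bigr)\right),
\]
and grouping the inner sum by vertex and invoking the definition of $\eta_{\text{res}}^t$ rewrites it as $\langle y^t,\eta\rangle-\langle y^t,\eta_{\text{res}}^t\rangle$. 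In a non-returning iteration $\langle y^t,\eta_{\text{res}}^t\rangle>\varepsilon g\ge\varepsilon\netweight^{t-1}$, while $\langle y^t,\eta\rangle\le\rho\,\opt\le\rho g$ because $\rho^{-1}y^t$ is a feasible dual by (C1); together with the ``width'' bound $|y^t(u)-y^t(v)|/\edgecost{e}\le\rho$ (again (C1)), a short computation shows that each non-returning iteration moves $\log\netweight^t$ by a definite additive step, whereas $\log\netweight^t$ is confined to a band of width $O(\log|E|)$ by the rescaling and by the existence of the optimal flow $\sigma^\star$ of cost $\opt\le g$. Since the band is crossed at most $O(\rho^2\varepsilon^{-2}\log|E|)$ times, the loop must return within $T$ iterations when $g\ge\opt$. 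The \emph{main obstacle} is making exactly this last step airtight --- choosing the correct potential, keeping the sign conventions of the slacks consistent, and using (C1) to guarantee the $\exp$-update never overshoots the band --- and this is the delicate piece carried out in~\cite{zuzic2021simple}, which we simply reuse.

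Finally I would assemble the running time. A call to $\greedy$ costs $T_\rho(n)$, and the remaining per-iteration bookkeeping (forming $\eta_{\text{res}}^t$, the inner product, the slacks, the multiplicative update, and the rescaling) is $O(|E|)$; since $T=4\rho^2\varepsilon^{-2}\log|E|$ and $|E|=\poly(n)$, one run of the inner loop costs $O\!\left((T_\rho(n)+|E|)\,\rho^2\varepsilon^{-2}\log n\right)$. The outer exponential search runs these inner loops for $g$ increasing through the interval $[\tilde g/(d\log n),\tilde g]$ returned by {\scshape LogApprox}, which has multiplicative width $d\log n$: the first $g$ for which the inner loop returns has a predecessor $g/(1+\varepsilon)$ that failed, hence $\opt>g/(1+\varepsilon)$, so the returned flow costs at most $(1+\varepsilon)g<(1+\varepsilon)^2\opt$, and running the whole procedure with $\varepsilon/3$ in place of $\varepsilon$ turns this into a genuine $(1+\varepsilon)$-approximation. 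Organizing this search over $g$ on the log-scale as in~\cite{zuzic2021simple} contributes the remaining $\log(d\log n)$ factor, giving the claimed bound $O\!\left((T_\rho(n)+|E|)\,\tfrac{\rho^2\log n}{\varepsilon^2}\,\log(d\log n)\right)$.
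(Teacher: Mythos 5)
Your proposal is correct and matches the paper's treatment: the paper does not prove this lemma in-house but cites Zuzic~\cite{zuzic2021simple}, and your argument is exactly that instantiation --- verify that $\greedy$ supplies the (C1)/(C2) hooks, reuse Zuzic's potential/termination analysis for a fixed guess $g$, and account for the per-iteration $O(T_\rho(n)+|E|)$ work, the $T=O(\rho^2\varepsilon^{-2}\log|E|)$ iterations, and the search over $g$ contributing the $\log(d\log n)$ factor. The extra sketch of the Jensen-based potential argument is consistent with that analysis and does not change the approach.
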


\subsection{Recovering a Transport Map}

To be precise, the multiplicative weights algorithm we have described so far produces a min-cost flow on the $(1+\varepsilon)$-spanner in expectation. A true transportation map is over $A \times B$. We briefly describe the procedure of \cite{khesin2019preconditioning} for completeness, which takes a flow on some approximate spanner with bounded degree and produces a transportation map. The basic idea is to iteratively skip over any vertex which has flow passing through. Algorithm \ref{alg:recover} shows how to shortcut vertices.

\begin{algorithm}
    \caption{Recover-Transport-Map$(G, A, B, \tau)$:}
    \label{alg:recover}
    \begin{algorithmic}
        \STATE Let $f(u,v) = \tau(u,v) - \tau(v,u)$
        \WHILE{there exists $v \in V \setminus (A \cup B)$ where $f(u, v), f(v,w) > 0$ for some $u,w$}
        \STATE Add $(u,w)$ to $G$ if $(u,w) \not \in E$
        \STATE $f(u,w) \leftarrow \min\{f(u,v), f(v,w)\}$
        \STATE Subtract $\min\{f(u,v), f(v,w)\}$ from both $f(u,v)$ and $f(v,w)$
        \STATE Remove $(u,v)$ from $G$ if $f(u,v) = 0$ (analogous for $(v,w)$)
        \IF{deg$(u) > $ deg$(v)$ and $u \not \in A \cup B$ (analogous for $w$)}
        \STATE Shortcut $u$ (or $w$) in next iteration of while loop
        \ENDIF
        \ENDWHILE
    \end{algorithmic}
\end{algorithm}

\begin{lemma} \label{lemma:recover}
    Given a graph $G = (V,E)$ with $A \cup B \subseteq V$ and maximum degree of $\text{deg}_{\text{max}}$, as well as a flow $\tau$ over $G$ which routes the demand of $A \cup B$, Algorithm \ref{alg:recover} returns a transportation plan over $A \times B$ in $O(|E| \cdot \text{deg}_{\text{max}})$ time.
\end{lemma}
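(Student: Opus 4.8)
The plan is to prove two things: (i) that the while loop in Algorithm~\ref{alg:recover} terminates with a function $f$ supported only on edges with both endpoints in $A\cup B$, which read off as a coupling is a transport plan for $\mu$ and $\nu$ costing no more than $\tau$; and (ii) that the total work is $O(|E|\cdot\text{deg}_{\text{max}})$. Only part (ii) really needs the degree-comparison rule; part (i) is a flow-conservation argument.

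For (i), I would maintain the invariant that at every step $f$ is a flow on the current (dynamically changing) graph $G$ that satisfies the demand $\eta$ of $A\cup B$ and keeps every vertex of $V\setminus(A\cup B)$ balanced. This holds initially since $\tau$ routes the demand and $f(u,v)=\tau(u,v)-\tau(v,u)$. A single shortcut step picks an intermediate vertex $v$ with $f(u,v),f(v,w)>0$, pushes $\delta=\min\{f(u,v),f(v,w)\}$ along $(u,w)$, and subtracts $\delta$ from $f(u,v)$ and $f(v,w)$; this is precisely a flow-decomposition cancellation, so it leaves the net demand at every vertex unchanged, keeps $f$ nonnegative on the edges it modifies, and, by the triangle inequality $\eucl{u}{w}\le\eucl{u}{v}+\eucl{v}{w}$, does not increase $\sum_{e}f(e)\edgecost{e}$. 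When no intermediate vertex has flow passing through it the loop condition fails, and at that point $f$ is supported on edges with both endpoints in $A\cup B$; decomposing $f$ into source-to-sink paths then assigns to each pair $(a,b)\in A\times B$ the total flow on paths from $a$ to $b$, which is nonnegative and respects $\mu$ at every $a$ (net outflow) and $\nu$ at every $b$ (net inflow). This is the claimed transport plan, and its cost is at most $\sum_e f(e)\edgecost{e}\le\plancost(\tau)$.

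For (ii), I would first note that once an intermediate vertex $v$ no longer carries flow it stays that way: a shortcut step only ever adds an edge between two neighbours of the vertex it processes, never an edge incident to that vertex, so no later step reroutes flow back into a finished vertex. Hence each $v\in V\setminus(A\cup B)$ is finished once, and finishing $v$ consists of at most its current flow-degree uncrossings, each of which inspects $v$'s incidence list in $O(1)$ amortized time and saturates (removes) at least one edge incident to $v$. The remaining quantitative point -- which I expect to be the main obstacle -- is bounding the total number of uncrossings over the whole run, since each uncrossing can introduce a fresh edge $(u,w)$ that must itself be processed when its endpoints are handled, and a naive count of "edge-endpoints at intermediate vertices" does not close the recursion. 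This is exactly the role of the degree-comparison rule in Algorithm~\ref{alg:recover}: shortcutting the endpoint of larger degree next controls all vertex degrees in terms of $\text{deg}_{\text{max}}$ throughout and forces newly created edges to be consumed before they proliferate, so that the total number of uncrossings is $O(|E|)$. I would reproduce this counting argument from \cite{khesin2019preconditioning}; combined with $O(\text{deg}_{\text{max}})$ work per uncrossing (updating incidence lists and selecting the next vertex) it yields the stated $O(|E|\cdot\text{deg}_{\text{max}})$ bound. The flow-conservation and triangle-inequality parts are routine, so I would keep them brief and defer the edge-count bound to the cited analysis.
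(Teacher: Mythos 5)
The paper never actually proves this lemma: it states Algorithm~\ref{alg:recover} as a restatement of the shortcutting procedure of Khesin, Nikolov and Paramonov \cite{khesin2019preconditioning} and cites that work for the guarantee, so your proposal should be judged against that intended argument rather than a written proof. Your sketch is consistent with it and in fact supplies more than the paper does: the conservation invariant for $f$, the triangle-inequality bound on the cost, and the observation that a finished intermediate vertex (zero flow on all incident edges) can never receive a new flow-carrying edge are all correct and are exactly the routine parts. The one piece you do not close---bounding the total number of uncrossings by $O(|E|)$ via the degree-comparison rule---is precisely what the paper also leaves to \cite{khesin2019preconditioning}, so there is no divergence in approach; just be aware that, as written, your proof of the $O(|E|\cdot\text{deg}_{\text{max}})$ runtime is not self-contained, and note the small subtlety that the loop only shortcuts vertices of $V\setminus(A\cup B)$, so at termination flow may still pass through points of $A\cup B$ and the final path decomposition into $A\times B$ pairs (which you do mention) is genuinely needed.
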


\end{document}